\DeclareMathOperator{\Tr}{Tr}
\newcommand{\WgO}{\mathrm{Wg}^{\mathrm{COE}}_{N,k}}
\newcommand{\WgU}{\mathrm{Wg}^{\mathrm{CUE}}_{N,k}}
\newcommand{\WgUk}[1]{\mathrm{Wg}^{\mathrm{CUE}}_{N,#1}}
\newcommand{\cF}{\mathcal{F}}
\newcommand{\setK}{\mathcal{K}}
\newcommand{\id}{{\mathrm{id}}}
\newcommand{\cc}[1]{{\overline{#1}}}
\newcommand{\Reg}[2]{\mathcal{R}_{#1}^{#2}} % Notation for number of regular
\newcommand{\Irreg}[2]{\mathcal{I}_{#1}^{#2}} % Notation for number of irregular
\newcommand{\CUE}{\mathrm{CUE}}
\newcommand{\COE}{\mathrm{COE}}
\newtheorem{theorem}{Theorem}[section]
\newtheorem{lemma}[theorem]{Lemma}
\newtheorem{definition}[theorem]{Definition}
\theoremstyle{remark}
\newtheorem{remark}[theorem]{Remark}
\newtheorem{example}[theorem]{Example}
\begin{document}

\title{Convergence of moments of twisted COE matrices}

\author{Gregory Berkolaiko}
\address{Department of Mathematics, Texas A\&M University, College Station, TX 77843-3368, USA}
\email{berko@math.tamu.edu}

\author{Laura Booton}
\address{Department of Mathematics, Texas A\&M University, College Station, TX 77843-3368, USA}
\email{lbooton@math.tamu.edu}

\date{\today}

\begin{abstract}
  We investigate eigenvalue moments of matrices from Circular
  Orthogonal Ensemble multiplicatively perturbed by a permutation
  matrix.  More precisely we investigate variance of the sum of the
  eigenvalues raised to power $k$, for arbitrary but fixed $k$ and in
  the limit of large matrix size.  We find that when the permutation
  defining the perturbed ensemble has only long cycles, the answer is
  universal and approaches the corresponding moment of the Circular
  Unitary Ensemble with a particularly fast rate: the error is of
  order $1/N^3$ and the terms of orders $1/N$ and $1/N^2$ disappear
  due to cancellations.  We prove this rate of convergence using
  Weingarten calculus and classifying the contributing Weingarten
  functions first in terms of a graph model and then algebraically.
\end{abstract}

\maketitle

%%%%%%%%%%%%%%%%%%%%%%%%%%%%%%%%%%%%%%%%%%%%%%%%%%%%%%%%%%%%%%%%%%%%
\section{Introduction}

Since their introduction by Dyson \cite{Dys_jmp62_i,Dys_jmp62_0}, the
so-called circular ensembles of random matrices have been used
successfully to model various physical processes, such as the time
evolution of complex quantum systems
\cite{Mehta_RandomMatrices,Haake_signatures} or scattering from a
potential of unknown or complex structure
\cite{BluSmi_prl90,beenakker97}, or even the local statistics of the
zeros of Riemann zeta and other $L$-functions \cite{MezSna_RMT_NT}.
The three classical circular ensembles --- unitary, orthogonal and
symplectic --- are used to model systems with different basic
symmetries.  In particular, the Circular Orthogonal Ensemble (COE)
corresponds to a system which is invariant with respect to time
reversal.

As well as modeling an entire system, a random matrix from COE can be
used to model scattering from a part of a composite system.  A
particular inspiration for the present paper is the work of Joyner,
M\"uller and Sieber \cite{JoyMulSie_epl14} where a small number of
random scatterers were combined with a carefully chosen magnetic
fluxes to produce a spectrum following Circular Symplectic (CSE)
statistics even though the system had no spin which is normally
associated with CSE.  This model was later realized experimentally
using microwave networks \cite{Reh+_prl16}.  For other examples of ``composite''
random matrix ensembles and their applications, see \cite{PozZycKus_jpa98}.

Mathematical analysis of such models would require understanding the
spectral statistics of products of random matrices.  In this paper we
make a step in this direction by studying the moments of the matrices
from COE multiplicatively perturbed by a fixed permutation matrix.
Intuitively, without the time-reversal symmetry which is destroyed by
the permutation, the result should follow the prediction of the
Circular Unitary Ensemble (CUE), even though the perturbed COE is but
a tiny submanifold\footnote{More precisely, the probability measure we
  consider is supported on a high codimension submanifold of $U(N)$,
  the compact manifold which is the support of the uniform probability
  measure defining the CUE.}  of the CUE.  We seek to confirm this
intuition with rigorous quantitative results and to find the
conditions on the permutation matrix sufficient for the convergence.

As another source of inspiration, we would like to cite the idea of
Degli Esposti and Knauf \cite{DegKna_jmp04}, who proposed that
periodic orbit expansions in quantum chaos should be modeled on
similar expansions in suitable random matrix ensembles.  It should be
noted that due to abundance of invariance and averaging available
within a classical random matrix model, there are normally more direct
methods to arrive at an answer for any particular quantity (such as
the spectral correlation functions or their Fourier transform, the
form factor), than to expand it into products of matrix elements and
do careful combinatorial accounting.  What Degli Esposti and Knauf
suggested is that doing the computation the hard combinatorial way in
random matrices would shed light on similar computations for quantum
chaotic systems where such expansions are often the best way to
proceed.  After the publication of \cite{DegKna_jmp04}, a lot of
progress was made in understanding the origin of the correct
combinatorial contributions
\cite{sr01,Sie02,BerSchWhi_jpa03,mulleretal04,heusleretal07,mulleretal09,BerKui_jmp13a,BerKui_jmp13b},
however a rigorous mathematical derivation with error control and
convergence analysis is still missing for any quantum chaotic model.

In the present work, we are firmly within the realm of random
matrices.  However, due to the perturbation considered, the invariance
of the random matrix ensemble (in the case of COE, invariance under
the conjugation $U \mapsto V^T U V$, with $V$ unitary) is broken and
expanding the moment into a product of matrix elements followed by
careful combinatorial accounting mentioned above becomes a necessity.
In addition to establishing convergence as mentioned above, we develop
a complete algebraic characterization of the permutations contributing
to the three leading orders of the $k$-th moment expansion.

%%%%%%%%%%%%%%%%%%%%%%%%%%%%%%%%%%%%%%%%%%%%%%%%%%%%%%%%%%%%%%%%%%
\section{Summary of the main results}

Let $P_N$ be a fixed $N\times N$ permutation matrix.  We are
interested in the properties of the ``$P_N$-twisted COE'' ensemble, the
set of matrices of the form $P_NU$, where $U$ is a random unitary
symmetric matrix distributed according to Circular Orthogonal Ensemble
measure.  Circular Orthogonal Ensemble (COE) is the classical compact
symmetric space $U(N)/O(N)$ identified with the set of all unitary
symmetric matrices endowed with the unique probability measure
invariant under the action by the unitary group
\begin{equation}
  \label{eq:actionU}
  U \mapsto V U V^T, \qquad V\in U(N).
\end{equation}
Defining Circular Unitary Ensemble (CUE) as the compact group $U(N)$
with uniform (Haar) measure, we can represent COE as the image of CUE
under the mapping $V \mapsto V \cc{V^{-1}} = V V^T$; here the bar
denotes complex conjugation which plays the role of Cartan involution
in the case of $U(N)/O(N)$.  Practical aspects of integrating over the
CUE and COE are discussed in Appendix~\ref{sec:Weing_calc}.

We will investigate the $P_N$-twisted ensemble by studying the moments
\begin{equation}
  \label{eq:kmoment_def}
  M_k(N) := \left\langle \left| \Tr(P_N U)^k \right|^2
  \right\rangle_{\COE(N)}.
\end{equation}
We will study the asymptotics of $M_k(N)$ for an arbitrary fixed $k$
and for $N\to\infty$.  When changing the size of the matrices we have
to specify a new permutation matrix $P_N$ for every $N$.  It turns out
the answers depend mostly on the lengths of the cycles of $P_N$; our
results will be valid for arbitrary sequences of $P_N$ provided some
minimal cycle condition is satisfied.

For comparison purposes, we now state the moment formulas of the
classical Circular Unitary and Circular Orthogonal ensembles.
When $k<N$, we have
\begin{align}
  \label{eq:CUE_answer}
  \left\langle \left|\Tr U^k \right|^2 \right\rangle_{\CUE(N)}
  &=k, \\
  \label{eq:COE_answer}
  \left\langle \left| \Tr U^k \right|^2 \right\rangle_{\COE(N)}
  &= 2k - k\sum_{m=1}^k \frac{1}{m+(N-1)/2} = 2k + \mathcal{O}\left(\frac1N\right).
\end{align}
Both expressions can be derived from two-point correlation functions
calculated in \cite{Mehta_RandomMatrices}; explicitly they were
calculated in, for example, \cite[Eqs.~(45) and
(48)]{HaaKusSomSchZyc_jpa96}.  In addition, we refer to
\cite[Section~III]{DegKna_jmp04} for an evaluation of
\eqref{eq:CUE_answer} in terms of Weingarten functions, which is close
to the methods of the present work .  If the permutation matrix $P_N$
is an involution, i.e.\ $P_N^2=I$ or all cycles of $P_N$ are of length
1 or 2, the moments $M_k(N)$ are identical to the
$\COE(N)$ moments (see Lemma~\ref{lem:involution} below).

In order to get a sense of what happens for $P_N$ with long cycles, we
investigated a particular permutation matrix $P_N$ corresponding to
the grand cycle $(1\,2\,\ldots\,N)$.  This $P_N$ acts on a matrix $U$
with $N$ rows $r_j$ by cyclically shifting the rows as follows,
\begin{equation}
  \label{eq:Paction}
  P_N
  \begin{pmatrix}
    r_1 \\ r_2 \\ \vdots \\ r_N
  \end{pmatrix}
  =
  \begin{pmatrix}
    r_2 \\ r_3 \\ \vdots \\ r_1
  \end{pmatrix}.
\end{equation}

For low values of $k$ it is possible to use Weingarten calculus (see a
short review in Appendix~\ref{sec:Weing_calc}) and carefully enumerate
all contributing permutations to get exact results.  The enumeration
method behind this calculation will be reported elsewhere; here we
just list the results,
\begin{align}
  \label{eq:k2_result}
  M_2(N) &= 2, \\
  \label{eq:k3_result}
  M_3(N) &= \tfrac{3(N^3+3N^2-N-2)}{(N-1)(N+1)(N+3)}\\
         \nonumber &= 3 + \frac{3}{N^3} + \ldots,\\
  \label{eq:k4_result}
  M_4(N) &= 4, \\
  M_5(N) &= \tfrac{5(N^9+21N^8+129N^7+21N^6-2193N^5-55545N^4
           -3653N^3+14463N^2-4572N-9072)}
           {(N-3)(N-2)(N-1)(N+1)(N+2)(N+3)(N+5)(N+7)(N+9)}
           \label{eq:k5_result} \\ \nonumber
         &= 5 - \frac{1200}{N^3} + \ldots
\end{align}  

Three features stand out.  First, there is a convergence to the CUE
answer for the moments, cf.\ equation~\eqref{eq:CUE_answer}.  Second,
the convergence is unexpectedly fast: the terms of order $\frac1N$ and
$\frac1{N^2}$ are conspicuously absent.  Third, the answer matches CUE
exactly for $k=2,4$.  We can now confirm that the first two
observations are valid for all $k$ and a wide class of permutations
$P_N$.\footnote{We cannot yet provide any insight for the other
  natural question: is $M_k(N)=k$ for all even $k$?}

\begin{theorem}
  \label{thm:two_orders0}
  Let $P_N$ be an $N\times N$ permutation matrix such that all cycles
  of $P_N$ are longer than $2k$, $k \in \mathbb{N}$.  Then the $k$-th
  moment $M_k(N)$ defined by \eqref{eq:kmoment_def} is a rational
  function of $N$ independent of the particular choice of $P_N$ and
  satisfying the estimate
  \begin{equation}
    \label{eq:two_orders0}
    M_k(N) = k + \mathcal{O}\left(N^{-3}\right).
  \end{equation}
\end{theorem}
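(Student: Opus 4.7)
The plan is to expand $M_k(N) = \langle |\Tr(P_N U)^k|^2\rangle_{\COE(N)}$ via the Weingarten calculus reviewed in Appendix~\ref{sec:Weing_calc}. Expanding the trace $\Tr(P_N U)^k$ into a sum over $k$ indices and invoking the permutation structure of $P_N$, each factor $(P_N U)_{i_a i_{a+1}}$ collapses to $U_{P_N^{-1}(i_a)\,i_{a+1}}$. Treating $|\Tr(P_N U)^k|^2$ analogously, one obtains a sum over $2k$ indices of a product of $U$- and $\overline{U}$-entries, and the COE integration formula rewrites the expectation as an outer sum over $\sigma \in \Sym$ weighted by the Weingarten function $\WgO(\sigma)$, together with Kronecker deltas linking the indices through $P_N$. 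The resulting inner sum is then a count of index tuples satisfying a system of equations of the form $i_a = P_N^{e(a,b,\sigma)}(i_b)$ for integer exponents $e(a,b,\sigma)$ determined by $\sigma$.

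I would package this system into a graph $G_\sigma$ whose vertices are the $2k$ indices and whose edges record the required $P_N$-displacements. Generically, the index count is $N^{c(G_\sigma)}$ where $c$ is the number of connected components, but degeneracies arise from cycles of $G_\sigma$ whose total $P_N$-displacement vanishes non-trivially. Under the hypothesis that every cycle of $P_N$ has length strictly greater than $2k$, no such degeneracy is possible, since every closed walk in $G_\sigma$ has combinatorial length at most $2k$. Consequently the index count depends only on the abstract combinatorics of $\sigma$ and on $N$, never on the particular $P_N$, which yields both universality and rationality of $M_k(N)$. Formally I would split $\Sym$ into a \emph{regular} subset whose contributions are universal, and an \emph{irregular} subset whose contributions are suppressed by the long-cycle hypothesis.

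For the quantitative rate, I would combine this universal index count with the standard $1/N$-expansion of $\WgO(\sigma)$, whose leading order is controlled by a cycle-theoretic weight of $\sigma$. This yields a formal expansion $M_k(N) = a_0 + a_1/N + a_2/N^2 + a_3/N^3 + \cdots$. The value $a_0 = k$ should fall out of matching the CUE-leading-order computation carried out in \cite{DegKna_jmp04}, since at leading order the COE Weingarten weights reduce to their CUE analogues. The main obstacle, and the technical heart of the result, is proving $a_1 = a_2 = 0$: this requires an explicit algebraic classification of the permutations in the regular subset whose combined Weingarten order and index-count degree reach orders $1/N$ and $1/N^2$, together with an argument organising them into cancellation classes whose net contributions vanish identically. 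This is where the graph model must be refined into a genuine algebraic characterization (e.g.\ in terms of cycle-type and parity invariants of $\sigma$), and is the step where I would expect the bulk of the work to lie; by contrast, the Weingarten setup, the graph reduction, and the suppression of the irregular subset should all be relatively routine once the framework is in place.
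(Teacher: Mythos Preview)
Your overall framework---Weingarten expansion, encoding the index constraints in a graph, and using the long-cycle hypothesis to make the index count a universal function of $N$---matches the paper and is correct. But your regular/irregular split and your sense of where the work lies are both off. In the paper, \emph{regular} means $\omega$ preserves $[k]$ and $[\bar k]$ separately (so $\omega=(\sigma,\pi)\in S_k\times S_k$), while \emph{irregular} means $\omega$ mixes barred and unbarred symbols. Both types contribute under the long-cycle hypothesis; irregular permutations are not suppressed. The point of this particular split is that the CUE Weingarten formula involves only pairs $(\sigma,\pi)$, i.e.\ only regular $\omega$. Since $\langle|\Tr(PU)^k|^2\rangle_{\CUE}=k$ exactly by Haar invariance, expanding it the same way as the COE moment produces, for free, an identity among the regular counts (equation~\eqref{eqn:regular third term}), without ever computing $|\Reg{2^1}{2^1}|$, $|\Reg{2^2}{\id}|$, $|\Reg{3^1}{\id}|$, $|\Reg{\id}{2^2}|$, $|\Reg{\id}{3^1}|$ individually. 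This CUE comparison is the paper's central trick and is what you are missing.

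What does require explicit enumeration is the \emph{irregular} contribution (Lemma~\ref{lem:irregular_perm}, proved by case analysis on the graph model in Appendix~\ref{sec:class_irreg}); that is the actual combinatorial core. So the balance of labour is the reverse of what you anticipate: the regular part is handled by a one-line invariance argument plus an identity, and the detailed graph classification is needed precisely for the permutations that mix barred and unbarred indices. Your plan to classify all contributing permutations directly and exhibit cancellations would work in principle, but it would require five additional enumerations that the CUE trick renders unnecessary.
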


We remark here that the constant, implicit in the right-hand side of
equation~\eqref{eq:two_orders0}, is naturally independent of $N$ and
the particular choice of $P_N$ (since the function $M_k(N)$ is
independent of the latter), but is, in principle, dependent on $k$.
Extending this theorem to cover the possibility of $k$ increasing
together with $N$ remains a challenging question.

%%%%%%%%%%%
\subsection{Related results}

Of related works known to the authors, we would like to mention a
study by Po\'{z}niak, \.{Z}yczkowski and Ku\'{s}
\cite{PozZycKus_jpa98}, who surveyed several composite ensembles
obtained from the classical one by multiplying independently sampled
matrices.  While they did not consider the twisted COE ensemble we study
here, their physical motivation was similar.

In mathematical literature, there are several results concerning the
distribution of $\Tr(P_N M)$, which is shown to converge to normal as
$N\to\infty$ when $M$ is uniformly distributed in the orthogonal group
$O(N)$ \cite{DarDiaNew_incol03,Mec_tams08}.  The difference in the
ensemble (COE versus the orthogonal group) is probably irrelevant.
However in this work we address the so-called linear moments ---
expectations of $\left| \Tr(P_N U)^k \right|^2$ with $k$ potentially
large --- whereas studying the distribution of $\Tr(P_N U)$ is
equivalent to understanding the nonlinear moments --- expectations of
$\left| \Tr(P_N U) \right|^{2n}$.  To put it another way, we are
addressing the distribution of eigenvalues, whereas the results
of \cite{DarDiaNew_incol03,Mec_tams08} concern the normality of
individual entries of the matrix (this idea goes back to Borel
\cite{Bor_asens06}).

%%%%%%%%%%%
\subsection{Notation}
\label{sec:notation}

We now briefly review the notation used in the rest of the paper.
We will use notation $[N]$ and $[k]$ to denote the set of first $N$
(correspondingly, $k$) natural numbers.  As additional indexing symbols
we will use the natural numbers with bars on top and will denote
\begin{equation}
  \label{eq:kbar_def}
  [\cc{k}] = \left\{\cc{1},\cc{2}, \ldots, \cc{k}\right\}
\end{equation}
We will write $S_k$ for the symmetric group on $k$ elements.  We will
most actively use the symmetric group $S_{2k}$ which will be viewed as
the group of permutations of the ordered set
\begin{equation}
  \label{eq:setK}
  \setK := [k] \cup [\cc{k}]
  = \left[1,\cc{1},2, \cc{2},...,k, \cc{k} \right].
\end{equation}

To each permutation $\omega \in S_n$ we put in correspondence its
``cycle type'', a sequence of non-negative integers
$(\alpha_1, \alpha_2, \ldots)$ such that $\alpha_j$ is the number of
cycles of $\omega$ of length $j$.  Naturally, the sequence is
identically zero after some point and
$\sum_{j=1}^\infty j \alpha_j = n$.  We denote by $\ell(\omega)$ the
number of cycles of $\omega$, i.e.
\begin{equation}
  \label{eq:def_ell}
  \ell(\omega) = \sum_{j=1}^\infty \alpha_j.  
\end{equation}
We will normally record the cycle type as
$1^{\alpha_1} 2^{\alpha_2} 3^{\alpha_3} \cdots$, omitting all terms
with $\alpha_j=0$ and occasionally omitting the $1^{\alpha_1}$ term.

In regards to the permutation matrix $P_N$, we will abuse notation
slightly by also denoting the corresponding permutation by $P_N$,
occasionally dropping the subscript $N$ to reduce notational clutter.

Among the permutations in $S_{2k}$ we distinguish three permutations
we will use repeatedly,
\begin{align}
  \label{eq:Tdef}
  T &:= (1\,\cc1)(2\,\cc2)\cdots(k\,\cc{k}), \\
  \label{eq:Qdef}
  Q &:= (1\,\cc{k})(2\,\cc1)(3\,\cc2)\cdots(k\,\cc{k-1}),\\
  \label{eq:sdef}
  s &:= (1\,2\,\ldots k)(\cc1\,\cc2\,\ldots \cc{k}).
\end{align}
Note that $T$ and $Q$ are involutions, i.e.\ $Q^2 = T^2 = \id$.  We
will often need to refer to the cycle length of the commutators
$[\omega, T] := \omega T \omega^{-1} T^{-1}$ and
$[\omega, Q] := \omega Q \omega^{-1} Q^{-1}$, for some
$\omega \in S_{2k}$.  In his case we will drop the parentheses and
write simply $\ell[\omega,T]$ and $\ell[\omega,Q]$.

%%%%%%%%%%%%%%%%%%%%%%%%%%%%%%%%%%%%%%%%%%%%%%%%%%%%%%%%%%%%%%%%%%
\section{Deriving the moments of twisted COE}

%%%%%%%%%%%
\subsection{The case of the involution}

We start with a simple result suggesting that short cycles in $P_N$
fail to change the eigenvalue statistics.

\begin{lemma}
  \label{lem:involution}
  Let $P$ be an orthogonal matrix such that $P^2=I$.  Then
  \begin{equation}
    \label{eq:involution}
    \left\langle \left| \Tr(P U)^k \right|^2 \right\rangle_{\COE(N)}
    = \left\langle \left| \Tr U^k \right|^2 \right\rangle_{\COE(N)}.
  \end{equation}
\end{lemma}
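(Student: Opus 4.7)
The plan is to reduce the twisted integral to the untwisted one by applying the defining $U(N)$-invariance \eqref{eq:actionU} of the COE measure with a cleverly chosen conjugating matrix, namely a unitary square root of $P$.

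First I would note that $P$ is a real symmetric involution: orthogonality gives $P^T = P^{-1}$, and $P^2 = I$ gives $P^{-1} = P$, so $P = P^T$. The eigenvalues of $P$ are therefore $\pm 1$, and $P$ admits a real orthogonal diagonalization $P = O D O^T$ with $D = \mathrm{diag}(\pm 1)$. Promoting each $-1$ on the diagonal of $D$ to $i$ yields a unitary diagonal $\sigma$ with $\sigma^2 = D$; setting $Q := O \sigma O^T$ produces a matrix that is unitary, symmetric ($Q = Q^T$), satisfies $Q^2 = P$, and hence $Q^4 = I$.

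Next, I would apply the COE invariance \eqref{eq:actionU} with $V = Q$: the random variable $U$ has the same distribution as $Q U Q^T = Q U Q$, so
\begin{equation*}
  \left\langle \left| \Tr(PU)^k \right|^2 \right\rangle_{\COE(N)}
  = \left\langle \left| \Tr\bigl(P \cdot Q U Q\bigr)^k \right|^2 \right\rangle_{\COE(N)}.
\end{equation*}
Inside the trace, substitute $P = Q^2$ to obtain $P \cdot Q U Q = Q^3 U Q$, and use $Q^4 = I$ to collapse the interior of the $k$-fold product:
\begin{equation*}
  \bigl( Q^3 U Q \bigr)^k
  = Q^3 U\, Q^4\, U\, Q^4 \cdots Q^4\, U\, Q
  = Q^3 U^k Q,
\end{equation*}
so the cyclic property of trace gives $\Tr\bigl((P \cdot Q U Q)^k\bigr) = \Tr(Q^4 U^k) = \Tr(U^k)$. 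Substituting back yields the identity \eqref{eq:involution}.

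The main conceptual step is recognizing that \eqref{eq:actionU} permits conjugation by any fixed $V \in U(N)$, and that choosing $V$ to be a unitary square root of $P$ lets the two conjugating factors combine with $P$ to produce the identity; the remaining algebra, including the explicit construction of $Q$, is routine because $P$ is real symmetric. There is no serious obstacle: the whole argument is essentially a bookkeeping exercise enabled by the right choice of $V$.
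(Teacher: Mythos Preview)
Your proof is correct and follows essentially the same approach as the paper: both diagonalize the real symmetric involution $P$, build a unitary $V$ with $VV^T=P$ (your symmetric $Q$ versus the paper's $SA$ with $\Lambda=AA^T$), and then invoke the COE invariance \eqref{eq:actionU} to remove $P$ from the trace. The only cosmetic difference is the order of operations---you first substitute $U\mapsto QUQ$ and then simplify using $Q^4=I$, whereas the paper first rewrites $\Tr(PU)^k=\Tr\bigl((SA)^TU(SA)\bigr)^k$ via cyclicity and then changes variables---but the underlying idea is identical.
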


\begin{proof}
  The conditions on the matrix $P$ imply that $P^T = P^{-1} = P$,
  therefore $P$ is real symmetric and can be diagonalized as
  \begin{equation*}
    P = S \Lambda S^T,
  \end{equation*}
  where $S$ is an orthogonal matrix and $\Lambda$ is diagonal.  We can
  further represent $\Lambda = A A^T$, where $A$ is unitary (this
  representation is not unique).  Substituting this into the trace, we
  get
  \begin{align*}
    \Tr(P U)^k &= \Tr( S A A^T S^T U \cdots  S A A^T S^T U) \\
    &= \Tr(A^T S^T U \cdots  S A A^T S^T U S A)
    = \Tr(A^T S^T U S A)^k.
  \end{align*}
  But due to the invariance COE with respect to the action of the
  unitary group defined by \eqref{eq:actionU},
  \begin{equation*}
    \left\langle \left| \Tr(A^T S^T U S A)^k
      \right|^2 \right\rangle_{\COE(N)}
    = \left\langle \left| \Tr U^k \right|^2 \right\rangle_{\COE(N)},
  \end{equation*}
  since the integration measure is invariant with respect to change of
  variables from $U$ to $(SA)^T U SA$.
\end{proof}

%%%%%%%%%%%
\subsection{Trace expansion}

We begin our march towards the proof of Theorem~\ref{thm:two_orders0}
by expanding the trace of the power in the definition of moment
function $M_k(N)$,
\begin{align}
  \nonumber
  \Tr(PU)^k &= \sum_{n_1,n_2,\ldots, n_k=1}^{N} (PU)_{n_1,n_2}
  (PU)_{n_2,n_3} \cdots (PU)_{n_k,n_1}\\
  \label{eq:trace_expansion}
  &= \sum_{n_1,n_2,\ldots, n_k=1}^{N} U_{P(n_1),n_2}
    U_{P(n_2),n_3} \cdots  U_{P(n_k),n_1} \\
  \label{eq:usingF}
  &= \sum_{I \in \cF} U_{i_1, i_\cc1}
  U_{i_2,i_\cc2} \cdots U_{i_k,i_\cc{k}},
\end{align}
where in \eqref{eq:usingF} we assigned a
different letter to each index of $U$ and introduced the set $\cF$
\begin{equation}
  \label{eq:setF_def}
  \cF_N := \left\{
  (i_1, i_\cc1, i_2, \ldots,i_k,i_\cc{k}) \in [N]^{2k} \colon
   P_N(i_\cc{b}) = i_{b+1} \quad \forall b\in [k]
  \right\},
\end{equation}
with $b+1$ understood to be modulo $k$.  The ordered $2k$-tuple
$(i_1, i_\cc1, i_2, \ldots,i_k,i_\cc{k})$ is denoted by $I$ and can be
viewed as a function from $\setK$ to $[N]$ (refer to
Section~\ref{sec:notation} for notation).

% At this point we can comment on the role of permutations $T$ and $Q$
% defined in \eqref{eq:Tdef}-\eqref{eq:Qdef}.  The permutation $T$
% permutes the indices of every element of $U$ appearing in
% \eqref{eq:usingF}, acting as a transpose of the matrix $U$.  The
% permutation $Q$ links together...

Expanding the square in the definition of $M_k(N)$, we obtain
\begin{equation}
  \label{eq:kmoment_expand1}
  M_k(N) = \sum_{I,J \in \cF_N}
  \left\langle
    U_{i_1, i_\cc1} U_{i_2,i_\cc2}
    \cdots U_{i_k,i_\cc{k}}
    \cc{U}_{j_1, j_\cc1} \cc{U}_{j_2,j_\cc2} \cdots
    \cc{U}_{j_k,j_\cc{k}} 
  \right\rangle_{\COE(N)},
\end{equation}
where the bar over $U$ denotes the complex conjugation.  The average
of the product of COE matrix elements is given by the so-called
Weingarten calculus \cite{BroBee_jmp96,Mat_rmta12,ColMat_alea17},
which we review in Appendix~\ref{sec:summary_COE} (see also
Appendix~\ref{sec:summary_CUE} for its CUE counterpart).  In
particular, it is zero unless index values $J$ are a permutation of
the index values $I$.  More precisely,
\begin{equation}
  \label{eq:kmoment_expand2}
  M_k(N) = 
  \sum_{I,J \in \cF_N} \,\,
  \sum_{\omega \in S_{2k} \colon I\circ\omega = J} \WgO(\omega).
\end{equation}
where $\WgO(\omega)$ is the Weingarten function of COE.  The function $\WgO$
depends only on the conjugacy class (equivalently, ``cycle type'' or
the vector of cycle lengths) of the permutation
$[\omega,T] := \omega T\omega^{-1} T^{-1}$, where $T \in S_{2k}$ is the
fixed permutation defined in equation \eqref{eq:Tdef}.

Switching the order of summation we arrive to
\begin{equation}
  \label{eq:kmoment_expand}
  M_k(N) = 
  \sum_{\omega \in S_{2k}}
  \WgO(\omega) 
  \sum_{I \in \cF_N \colon I\circ\omega \in \cF_N} 1
  =: \sum_{\omega \in S_{2k}} \WgO(\omega) \Phi(\omega).
\end{equation}

The following theorem gives a simple description of the function
$\Phi(\omega)$ introduced in \eqref{eq:kmoment_expand} and is a
major step towards proving the main result.

\begin{theorem}
  \label{thm:countingIJF}
  Recall the definition of the set $\cF_N$,
  equation~\eqref{eq:setF_def}.  For any $\omega \in S_{2k}$,
  \begin{align}
    \label{eq:Phi_definition}
    \Phi(\omega)
    &:= \#\left\{I \in \cF_N \colon I\circ\omega \in \cF_N\right\} \\
    \label{eq:Phi_formula}
    &= \chi_\omega N^{\frac12 \ell[\omega,Q]},
  \end{align}
  where the factor $\chi_\omega$ is the indicator function of the set
  \begin{equation}
    \Omega_{k,N}:=\{\omega\in S_{2k}: \exists\, I\in [N]^{2k}
    \mbox{ s.t. } I\circ\omega\in \cF_N \},
  \end{equation}
  and $\ell[\omega,Q]$ denotes the number of cycles in the commutator
  $\omega Q\omega^{-1} Q^{-1}$ with $Q$ defined in
  equation~\eqref{eq:Qdef}.  The set $\Omega_{k,N}$, and therefore its
  indicator $\chi_\omega$ and the function $\Phi(\omega)$, are
  independent of $N$ and permutation $P_N$ as long as all of cycles of
  $P_N$ are longer than $2k$.
\end{theorem}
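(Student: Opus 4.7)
The plan is to translate both memberships $I \in \cF_N$ and $I \circ \omega \in \cF_N$ into systems of equations of the form $i_{v'} = P(i_v)$ and then count the solutions via a graphical model. The first membership is equivalent to $i_{Q(x)} = P(i_x)$ for $x \in [\cc{k}]$, since $Q$ encodes the pairing $\cc{b} \mapsto b+1$ built into the definition of $\cF_N$. Substituting $z = \omega(x)$ in the second membership yields $i_{\omega Q \omega^{-1}(z)} = P(i_z)$ for $z \in \omega([\cc{k}])$. These two systems are encoded by a two-colored, oriented multigraph $G_\omega$ on the vertex set $\setK$ with $k$ red edges and $k$ blue edges; each vertex is incident to exactly one edge of each color, so $G_\omega$ decomposes as a disjoint union of alternating red/blue cycles.

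Next I would show that the number of alternating cycles equals $\ell[\omega, Q]/2$. Since $Q$ and $\omega Q \omega^{-1}$ are fixed-point-free involutions, the union of their transpositions is exactly a disjoint union of alternating cycles, and a direct check (trace two steps of the product along an alternating cycle) shows that each alternating cycle of length $2m$ corresponds to exactly two cycles of length $m$ in the product $\omega Q \omega^{-1}\cdot Q = [\omega, Q]$, yielding $\ell[\omega, Q]/2$ alternating cycles in total.

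The remaining step is to analyze each alternating cycle $C$. Fix a vertex $v_0 \in C$ and a value $i_{v_0} \in [N]$; the edge constraints then propagate along $C$ to force $i_v = P^{d_v}(i_{v_0})$ with integers $d_v$ given by a signed count of edges traversed in their natural direction versus reversed. Closing the loop requires $P^{\delta(C)}(i_{v_0}) = i_{v_0}$, where $\delta(C)$ is the net signed shift around $C$. Since $|\delta(C)|$ is bounded by the length of $C$, which is at most $2k$, the assumption that every cycle of $P_N$ has length exceeding $2k$ collapses the consistency condition to $\delta(C) = 0$. When $\delta(C) = 0$ for every $C$, any $i_{v_0} \in [N]$ is admissible and the choices across cycles are independent, yielding exactly $N^{\ell[\omega, Q]/2}$ valid $I$; otherwise $\Phi(\omega) = 0$. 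The shifts $\delta(C)$ depend only on the combinatorial data $(\omega, Q)$, hence $\chi_\omega$ is likewise a function of $\omega$ alone, independent of $N$ and of the particular $P_N$.

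The principal technical obstacle I anticipate lies in this last step: carefully orienting the red and blue edges, defining $\delta(C)$ so that it is a genuine invariant of the cycle (independent of the starting vertex and the direction of traversal), and verifying the sharp bound $|\delta(C)| \leq 2k$ that converts the long-cycle hypothesis into the clean collapse $P^{\delta(C)} = \id \iff \delta(C) = 0$. A subtler related point is handling degenerate situations where the red and blue edges share both endpoints (e.g.\ when $\omega$ commutes with $Q$, as can occur for $\omega = \id$ or $\omega = s$), which produce alternating 2-cycles that must be counted but not double-counted in the cycle enumeration of Step 2.
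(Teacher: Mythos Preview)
Your proposal is correct and follows essentially the same route as the paper. The paper builds a slightly richer multigraph $G_\omega$ (Definition~\ref{def:graph_model}) carrying both directed and undirected edges, but for Theorem~\ref{thm:countingIJF} only the directed subgraph is used: its solid and dashed directed edges are precisely your red and blue edges, its ``balanced cycle'' condition (Lemma~\ref{lem:counting_cycles}\,(\ref{item:balanced})) is your condition $\delta(C)=0$, and the cycle count via $[\omega,Q]$ (Lemma~\ref{lem:counting_cycles}\,(\ref{item:dir_cycles})) is exactly your Step~2. The technical points you flag---well-definedness of $\delta(C)$, the bound $|\delta(C)|\le 2k$, and the multigraph case where a red and a blue edge share both endpoints---are real but minor, and the paper dispatches them in a paragraph by writing $i_{v_{j+1}}=P^{\epsilon_j}(i_{v_j})$ and summing the $\epsilon_j$ around the cycle.
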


We will prove Theorem~\ref{thm:countingIJF} in the next section.  At
this point we would like to point out that
Theorem~\ref{thm:countingIJF} effectively establishes the first claim
of our main result, Theorem~\ref{thm:two_orders0}, that $M_k(N)$ is
independent of $P_N$: equation~\eqref{eq:kmoment_expand} represents
$M_k(N)$ as a finite sum of terms depending on $\omega$ only.

%%%%%%%%%%%%
\subsection{Graph model of a permutation}
\label{sec:graph_model}

To each permutation $\omega \in S_{2k}$ we now associate a multigraph
$G_\omega$ which will enable us to easily access all the data needed
to evaluate the contribution of a permutation $\omega$ to the sum in
\eqref{eq:kmoment_expand}.

\begin{definition}
  \label{def:graph_model}
  The \emph{graph model} of $\omega \in S_{2k}$ is a
  multigraph\footnote{A \emph{multigraph} can have more than one edge
    between a given pair of vertices.}
  $G_\omega$ with the vertex set $\setK$, see~\eqref{eq:setK}, and the
  following labeled edges: for every $b \in [k]$ there is
  \begin{itemize}
  \item an undirected \emph{solid} edge between $b$ and $\cc{b}$,
  \item a directed \emph{solid} edge from $\cc{b}$ to $b+1$ (modulo $k$),
  \item an undirected \emph{dashed} edge between $\omega(b)$ and
    $\omega(\cc{b})$,
  \item a directed \emph{dashed} edge from $\omega(\cc{b})$ to
    $\omega(b+1)$.
  \end{itemize}
\end{definition}

\begin{figure}
  \centering
  \begin{subfigure}[b]{0.4\textwidth}
    \centering
    \includegraphics[scale=0.6]{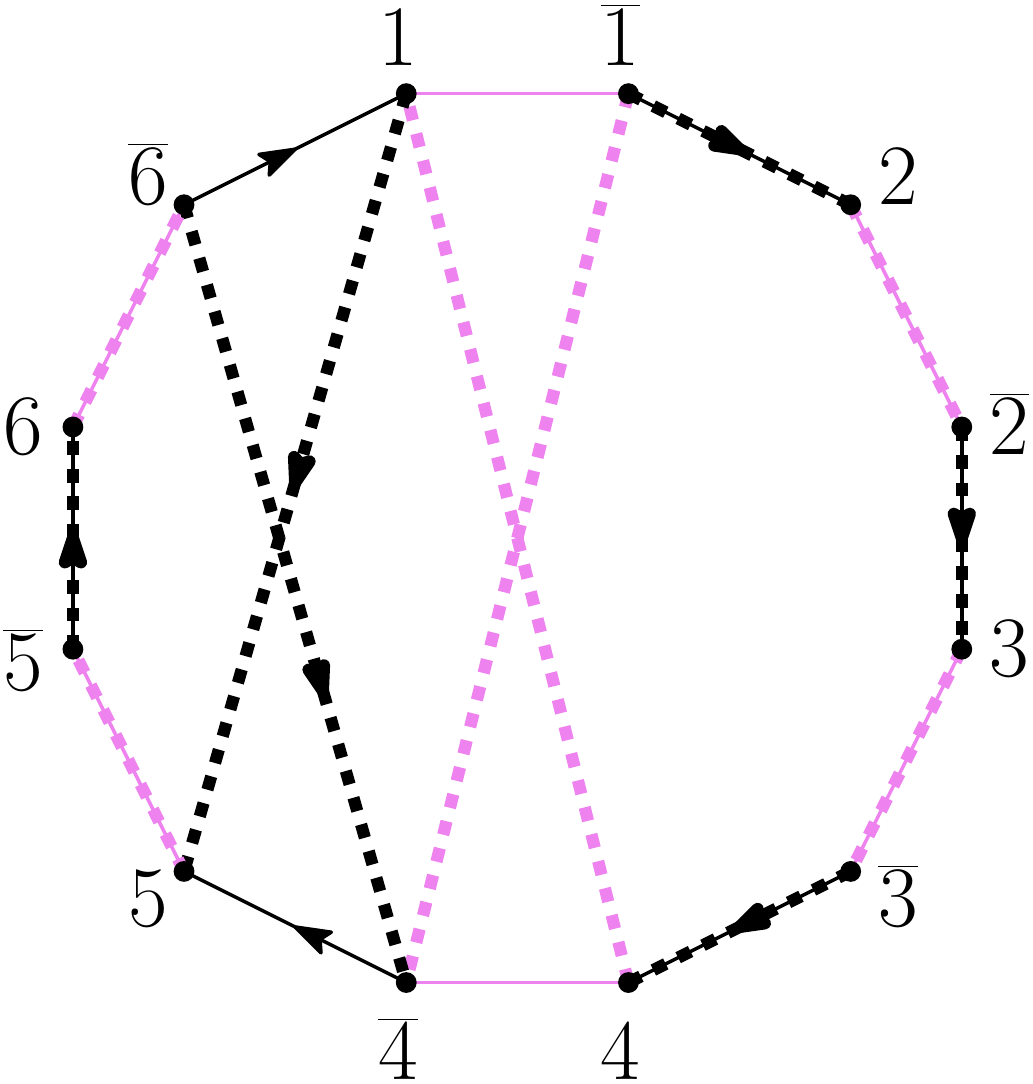}
    \caption{$\omega = (1\, \cc4)$}
    \label{fig:graph_model_14}
  \end{subfigure}
  \qquad
  \begin{subfigure}[b]{0.4\textwidth}
    \centering
    \includegraphics[scale=0.6]{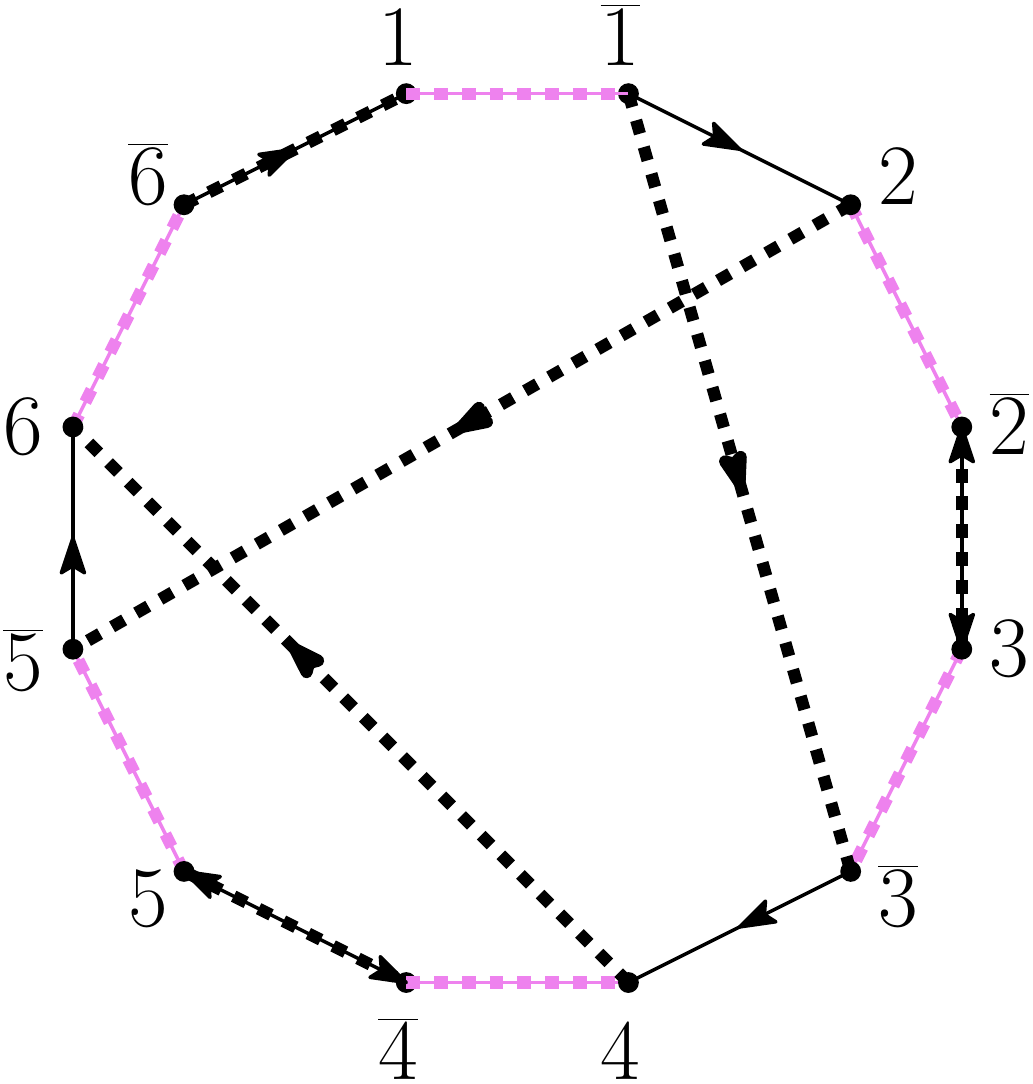}
    \caption{$\omega = (2\, \cc3)(\cc2 \, 3)(4 \, \cc5)(\cc4 \, 5)$}
    \label{fig:graph_unbalanced}
  \end{subfigure}
  \caption{Examples of graph models with $k=6$.  Undirected edges are
    drawn in violet.  When there is more than one edge between a pair
    of vertices (one dashed and one solid), they are drawn on top of
    each other; if they are directed in the same way only one arrow is
    drawn.}
\label{fig:graph_model_examples}
\end{figure}

Some examples of graph models are shown in
Figure~\ref{fig:graph_model_examples}.  Each vertex of $G_\omega$ has
degree 4, being incident to one each of the four types of edges
(directed edges may be incoming or outgoing).  More specifically, each
vertex $z\in \mathcal{K}$ is incident to the solid undirected edge
connecting $z$ and $\cc z$, which we will denote by
$(z-\overline{z})$, and is incident to the dashed undirected edge
$(z-\omega T \omega^{-1}(z))$. The directed solid edge will be
$(Q(z) \to z)$ if $z\in [k]$ and $(z \to Q(z))$ if $z\in [\cc k]$. The
directed dashed edge will be $(\omega Q \omega^{-1}(z) \to z)$ if
$\omega^{-1}(z)\in [k]$ or $(z \to \omega Q \omega^{-1}(z))$ if
$\omega^{-1}(z)\in [\cc k]$.

The solid edges of $G_\omega$ form a cycle graph $C_s$ with the same
vertex set $\setK$ and the edges
\begin{equation}
  \label{eq:cycle_s}
  \to1-\cc 1 \to 2- ...-\cc k \to .
\end{equation}
The dashed edges similarly form a cycle graph $C_d$ with the edges
\begin{equation}
  \label{eq:cycle_d}
  \to \omega(1) - \omega(\cc 1) \to \omega(2)-...- \omega(\cc k) \to,
\end{equation}
and $\omega$ provides an isomorphism between $C_s$ and $C_d$.  Note
that all directed edges point in the same direction in cycle graphs
$C_s$ and $C_d$.

\begin{figure}
  \centering
  \begin{subfigure}[b]{0.4\textwidth}
    \centering
    \includegraphics[scale=0.6]{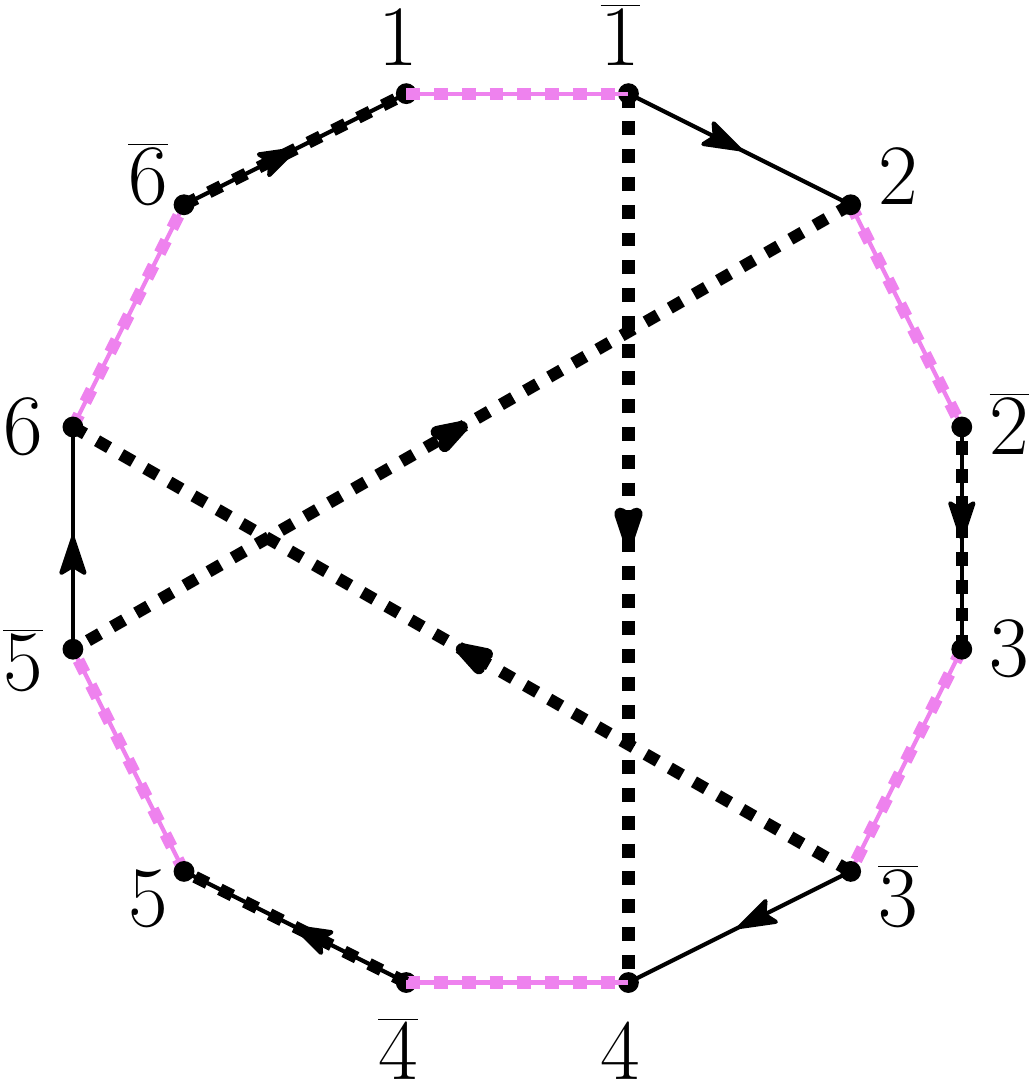}
    \caption{$\omega = (2\, 4)(3 \, 5)(\cc2 \, \cc4)(\cc3 \, \cc5)$}
    \label{fig:graph_balanced1}
  \end{subfigure}
  \qquad
  \begin{subfigure}[b]{0.4\textwidth}
    \centering
    \includegraphics[scale=0.6]{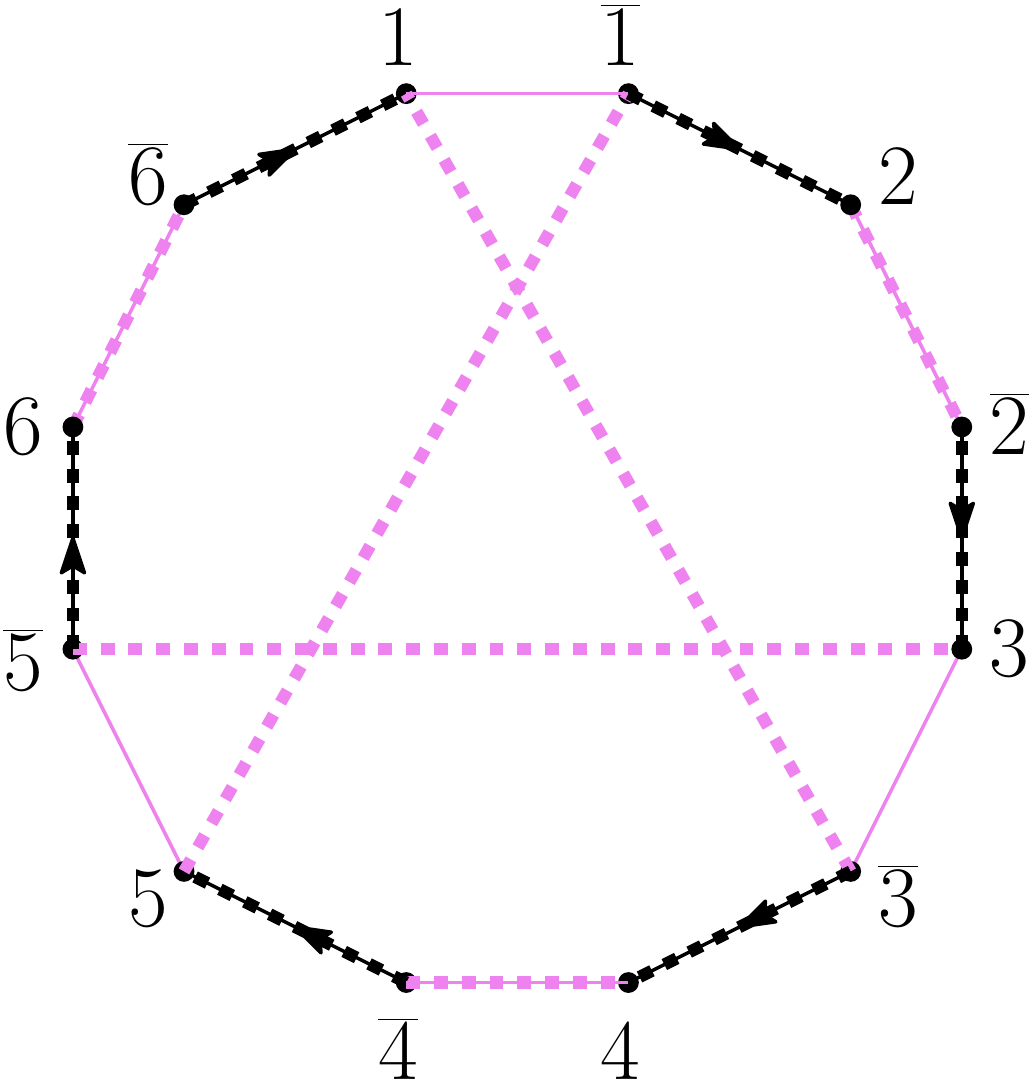}  
    \caption{$\omega = (2\,4)(3\,5)(\cc{1}\,\cc{3})(\cc{2}\,\cc{4})$}
    \label{fig:graph_balanced2}
  \end{subfigure}
  \caption{Some graph models of permutations with balanced
    cycles, $k=6$.}
\label{fig:balanced_cycles}
\end{figure}

In a slight abuse of usual graph terminology, we will use the term
\emph{directed cycle} to refer to a cycle of $G_\omega$ that is made
entirely of directed edges, but without regard for their direction.  A
directed cycle will be called \emph{balanced} if there is equal number
of edges going into each direction.  See
Figure~\ref{fig:balanced_cycles} for two examples of graphs with
balanced cycles and Figure~\ref{fig:graph_unbalanced} for an
unbalanced example (namely, cycles $\{\cc2, 3\}$ and $\{\cc4, 5\}$).

\begin{lemma}
  \label{lem:counting_cycles}
  Recall $T = (1\,\cc1)\ldots (k\,\cc{k})$,
  $Q = (\cc{1}\,2)(\cc{2}\,3)\ldots(\cc{k}\,1)$ and
  $s = (1\,2\,\ldots k)(\cc1\,\cc2\,\ldots \cc{k})$.
  For each $\omega\in S_{2k}$ the graph $G_\omega$ has the following
  properties.
  \begin{enumerate}
  \item The subgraph of $G_\omega$ consisting of all undirected edges
    is a disjoint union of cycle graphs; each cycle is of even length
    and alternates solid and dashed edges.  The number of cycles of
    length $2n$ is half the number of 
    $n$-cycles
    of the permutation $[\omega,T] :=\omega T\omega^{-1} T^{-1}$.
  \item \label{item:dir_cycles} The subgraph of $G_\omega$ consisting
    of all directed edges is a disjoint union of cycle graphs
    (ignoring the direction of the edges); each cycle is of even
    length and alternates solid and dashed directed edges.  The number
    of cycles of length $2n$ is half the number of $n$-cycles of the
    permutation $[\omega,Q] := \omega Q \omega^{-1} Q^{-1}$.
  \item \label{item:graphs isomorphic} The graphs of $G_\omega$ and
    $G_{\omega'}$ are identical if and only if $\omega'=\omega s^n$
    for any $n\in\mathbb{Z}$; the graphs $G_\omega$ and $G_{ \omega'}$
    are isomorphic if and only if $\omega'=s^n \omega s^m$ for some
    $n,m\in\mathbb{Z}$.
  \item \label{item:balanced} If all cycles of $P_N$ are longer than
    $2k$, there exist $I\in\cF_N$ such that $I\circ\omega\in \cF_N$ if
    and only if all directed cycles of $G_\omega$ are balanced.    
  \end{enumerate}
\end{lemma}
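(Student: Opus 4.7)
For (1) and (2), the starting point is that each vertex $z \in \setK$ is incident to exactly one solid undirected edge (to $T(z)$) and one dashed undirected edge (to $\omega T \omega^{-1}(z)$), so the undirected subgraph is $2$-regular with strict solid/dashed alternation, hence a disjoint union of simple even cycles. To link the cycle length to $[\omega,T]$, I will traverse a cycle beginning at some $z_0$ by taking first a solid then a dashed edge, obtaining the sequence $z_0,\, T(z_0),\, [\omega,T](z_0),\, T[\omega,T](z_0),\,\ldots$; the cycle closes after $2m$ steps precisely when $[\omega,T]^m(z_0)=z_0$, and simplicity of the cycle forces $m$ to be the $[\omega,T]$-orbit length of $z_0$. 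The orbit of $T(z_0)$ forms a second, distinct $[\omega,T]$-orbit of the same size---otherwise $T(z_0)=[\omega,T]^j(z_0)$ for some $0\le j<m$ would force the graph cycle to close at the odd step $2j+1$, contradicting its even length---so each $2m$-cycle packages two distinct $m$-orbits, which yields the factor $1/2$. The proof of (2) is identical after replacing $T$ with $Q$ and undirected edges with directed ones (ignoring arrow direction), using $Q[\omega,Q]Q^{-1}=[\omega,Q]^{-1}$ in place of its $T$-analogue.

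For (3), I will exploit the fact that the solid edges of $G_\omega$ depend only on $\setK$, so $G_\omega$ is fully determined by its dashed cycle through $\omega(1),\omega(\cc 1),\omega(2),\ldots,\omega(\cc k)$, which carries a rigid alternation of undirected/directed edges matching that of the solid cycle. Two such dashed cycles coincide as edge multisets iff their vertex lists differ by an even cyclic rotation; an odd rotation would break the undirected/directed alternation and a reversal would reverse the arrows. An even rotation by $2n$ positions is precisely right-multiplication by $s^n$, giving $\omega'=\omega s^n$. For the isomorphism claim, the automorphism group of the solid cycle respecting both edge types and arrow directions is generated by $s$; any isomorphism $\phi=s^n$ carries $G_\omega$ to $G_{s^n\omega}$, which equals $G_{\omega'}$ by the first half of (3) iff $s^n\omega=\omega' s^m$ for some $m$, giving the stated form.

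For (4), I will unfold the definitions: the condition $I\in\cF_N$ reads $P(i_{\cc b})=i_{b+1}$, and $I\circ\omega\in\cF_N$ reads $P(i_{\omega(\cc b)})=i_{\omega(b+1)}$, for all $b\in[k]$. These are exactly the equations $P(i_z)=i_w$ for every directed edge $z\to w$ of $G_\omega$, the former ranging over solid directed edges and the latter over dashed ones. Restricted to a single directed cycle $C$, fixing $i$ at one vertex of $C$ and propagating around $C$ applies $P^{\pm 1}$ at each edge, so $C$ is consistently filled iff the net exponent $P^{f-b}$ fixes the starting value, where $f$ and $b$ count forward and backward edges. Since $|f-b|\le|C|\le 2k$ and every cycle of $P_N$ has length exceeding $2k$, the map $P^{f-b}$ has no fixed point unless $f=b$, i.e., unless $C$ is balanced; conversely, if every directed cycle is balanced, the starting value on each cycle is free, producing valid $I$'s. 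The main technical subtlety I anticipate lies in the factor-of-two bookkeeping in (1) and (2): the parity argument certifying that $z_0$ and $T(z_0)$ (respectively $Q(z_0)$) always sit in distinct commutator orbits is what makes the translation to $\ell[\omega,T]$ and $\ell[\omega,Q]$ clean, while the remaining steps are careful translations between graph-theoretic and algebraic languages, with (4) hinging on the long-cycle hypothesis on $P_N$ to rule out accidental fixed points of small powers of $P$.
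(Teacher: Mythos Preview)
Your proposal is correct and follows essentially the same approach as the paper's proof: the same traversal by alternately applying $T$ (resp.\ $Q$) and $\omega T\omega^{-1}$ (resp.\ $\omega Q\omega^{-1}$) to identify graph cycles with commutator orbits, the same rotation argument for part~(3), and the same edge-by-edge propagation of $P^{\pm1}$ around directed cycles for part~(4). Your treatment is in fact slightly more careful in one respect: you make explicit the parity argument showing that the orbits of $z_0$ and $T(z_0)$ (resp.\ $Q(z_0)$) under the commutator are distinct, which the paper asserts without justification.
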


\begin{proof}  
  Since each vertex of $G_\omega$ is incident to exactly one each of
  the four types of edges, removing all directed edges we obtain a
  graph with each vertex of degree 2, incident to one solid and one
  dashed edge.  Such a graph must be a union of cycles.  It has edge
  coloring with two colors so it must be bipartite and therefore all
  cycles are even.

  Consider a cycle of length $2n$ and let $(b,\cc b)$ be one of its
  solid edges.  Starting at vertex
  $b$, we will traverse the cycle by applying permutations.  We know that
  $T^{-1}(b)=\cc b$, so we have traversed the first solid edge. Next
  by applying $\omega T \omega^{-1}$ to $\cc b$ we traverse the
  undirected dashed edge incident to $\cc b$. Hence, by applying
  $[\omega,T]=\omega T \omega^{-1} T^{-1}$ to a vertex $b$, we
  traverse two edges of its undirected cycle. Since $b$ is in a graph
  cycle of length $2n$, we must have $[\omega,T]^n(b)=b$. By applying
  $[\omega,T]$ to $\cc b$ $n$ times, we will traverse the other $n$
  vertices of our graph cycle, again returning to $\cc b$. Thus we
  have found two $n$-cycles of $[\omega,T]$ that directly correspond
  to the vertices of a $2n$-cycle in $G_{\omega}$.

  We may traverse directed $2n$-cycles of $G_{\omega}$ in a
  similar manner using $[\omega,Q]$, again finding two corresponding
  $n$-cycles of $[\omega,Q]$.

  Consider the graphs $G_{\omega}$ and $G_{\omega s^n}$ for some
  permutation $\omega$. The solid subgraph $C_s$ will be the same for
  both graphs by definition. The dashed subgraph $C_d$ in $G_{\omega s^n}$
  can then be represented by
  \begin{align}
    &\hspace{4mm} \to \omega s^n(1)-\omega s^n(\cc 1) \to \omega
      s^n(2)- \ldots -\omega s^n(\cc k) \to\\
    \label{eqn:omega s^n path}
    &=\hspace{1mm}\to \omega(1+n)-\omega(\cc {1+n}) \to \omega(2+n)-
      \ldots - \omega(\cc {k+n}) \to.
  \end{align}
  Since addition is performed modulo $k$, we see that \eqref{eqn:omega
    s^n path} is identical to \eqref{eq:cycle_d}.  Thus the graphs
  $G_{\omega}$ and $G_{\omega s^n}$ are the same as unions of
  identical pairs of graphs.
     
  Conversely, consider two graphs $G_{\omega}$ and $G_{\omega'}$ that
  are the same. Then their dashed subgraphs will be the same, that is:
  \begin{align}
    & \hspace{4mm} \to \omega(1) - \omega(\cc1) \to \omega(2) -
      \ldots -  \omega(\cc k) \to \\
    &=  \to \omega'(1) - \omega'(\cc1) \to \omega'(2) -
      \ldots -  \omega'(\cc k) \to.
  \end{align}
  This implies that $\omega(i+n)=\omega'(i)$ for all $i\in [k]$ and
  for some $n$. Thus $\omega s^n=\omega'$.
  
  The isomorphism between $G_\omega$ and $G_{s^n \omega}$ is given by
  $s^n$.  For example, a solid edge between $b$ and $\cc{b}$ in $G_\omega$ is
  mapped to the solid edge between $(b+n)$ and  $\cc{b+n})$ in $G_{s^n \omega}$.
  A dashed directed edge from $\omega(\cc{b})$ to
  $\omega(b+1)$ in $G_\omega$ is mapped to a dashed edge from
  $s^n\omega(\cc{b})$ to $s^n\omega(b+1)$ in $G_{s^n \omega}$ and
  similarly for the other types of edges.
  
  Conversely, if two graphs $G_{\omega}$ and $G_{\omega'}$ are
  isomorphic, the solid subgraph $C_s$ must remain unchanged and the
  only possible isomorphisms are rotations $s^n$.  Thus for some
  $s^n$, $G_{s^n \omega}$ is the same as $G_{\omega'}$. Hence we know
  that $\omega'=s^n\omega s^m$ for some $n,m\in [k]$.

  To establish part~(\ref{item:balanced}) of the Lemma, we discuss the
  meaning of directed edges. Comparing the definition of a solid edge
  and the definition of $\cF_N$, equation~\eqref{eq:setF_def}, we see
  that $I\in \cF_N$ if and only if $P(i_u)=i_v$ for every solid
  directed edge $(u\to v)$.  Similarly, $I\circ\omega \in \cF_N$ if
  and only if $P(i_u)=i_v$ for every dashed directed edge $(u\to v)$.
  To see the latter, observe that a dashed directed edge $(u\to v)$
  means there exists a $b\in[k]$ such that $u = \omega(\cc{b})$ and
  $v=\omega(b+1)$ and, on the other hand, $I\circ\omega \in \cF_N$
  means that $P_N\left(i_{\omega(\cc{b})}\right)=i_{\omega(b+1)}$.

  Let us now consider a directed cycle.  Labeling the vertices in the
  cycle $v_1,\ldots,v_{2n}$, we obtain $2n$ equations of the form
  $i_{v_{j+1}}=P^{\epsilon_j}(i_{v_{j}})$, where $\epsilon_j=\pm1$
  depending on the direction of the edge, $+1$ for $(v_j\to v_{j+1})$
  and $-1$ for $(v_j \leftarrow v_{j+1})$. Continuing the substitution
  process all the way around the cycle, we will eventually obtain an
  equation relating $i_{v_1}$ to itself of the form
  $i_{v_1}=P^m\left(i_{v_1}\right)$ where
  $m=\sum_{j=1}^{2n}\epsilon_j$. Because $|m|\leq 2n \leq 2k$ and we
  are are only allowing $P$ to have cycles strictly longer than $2k$,
  one can make an assignment to $i_{v_1}$ if and only if $m=0$.
  Observe that $m=0$ can only occur if there are exactly $n$ edges
  directed along the cycle and $n$ edges pointing in the other
  direction, i.e. the cycle is balanced.

  For the later use, we also note that if the cycle is balanced,
  $i_{v_1}$ can be chosen arbitrarily and this choice uniquely
  determines the values of $i_{v_2}, \ldots, i_{v_{2n}}$ through the
  recursion $i_{v_{j+1}}=P^{\epsilon_j}(i_{v_{j}})$.  
\end{proof}

\begin{remark}
  It is not hard to describe the set of all ``allowable'' graphs that
  are graph models of some permutation $\omega\in S_{2k}$ and to show that
  the mapping between $S_{2k}$ and allowable graphs is $k$-to-$1$, but
  we will not need it in this paper.
\end{remark}

\begin{remark}
  There is also an algebraic characterization of $\chi(\omega)$ that
  we present here without proof since we will not use it.  For any
  $z\in \mathcal{K}:=[k]\cup[\cc k]$ denote by $O_z$ the orbit of $z$
  under the action of $[\omega,Q]:=\omega Q \omega^{-1} Q^{-1}$. Then
  $\chi(\omega)=1$ if and only if for all $z\in\mathcal{K}$ the sets
  $O_z$ and $\omega^{-1}(O_z)$ contain the same number of symbols from
  $[\cc k]$.
\end{remark}

We recall now that in expansion \eqref{eq:kmoment_expand} of $M_k(N)$,
the Weingarten function depends on the cycle type of the permutation
$[\omega,T]$, whereas the factor $\Phi(\omega)$ depends on the cycle
type of the permutation $[\omega,Q]$ as well as the function
$\chi_\omega$.  Now that this information is represented
readily in the graph model $G_\omega$, we are ready to prove
Theorem~\ref{thm:countingIJF}.

\begin{proof}[ Proof of Theorem~\ref{thm:countingIJF}]
  As long as $P_N$ has cycles longer than $2k$, a permutation
  $\omega\in S_{2k}$ belongs to the set $\Omega_{k,N}$ if and only if
  the graph $G_\omega$ has balanced directed cycles.  The latter
  property is independent of $N$ and the particular choice of $P_N$.

  To count the number of ordered $2k$-tuples $I$ appearing in the
  definition of the function $\Phi(\omega)$,
  equation~\eqref{eq:Phi_definition}, we note that for each directed
  cycle we have the freedom to chose $i_{v_1}$ arbitrarily from $[N]$;
  all other $i_{v}$ participating in the cycle are determined
  recursively (see the proof of Lemma~\ref{lem:counting_cycles}).  By
  Lemma \ref{lem:counting_cycles}, part~(\ref{item:dir_cycles}), the
  number of directed cycles is $\frac{1}{2}\ell[\omega,Q]$.  Since
  each directed cycle gives $N$ choices independently of other cycles,
  the total number of possible choices of $I$ is
  $N^{\frac12\ell[\omega,Q]}$.
\end{proof}

%%%%%%%%%%%% 
\subsection{Contributing permutations}
\label{sec:counting_perm}

We are now getting ready to use expansion \eqref{eq:kmoment_expand} to
evaluate the $k$-th moment $M_k(N)$.  We review the properties of
Weingarten COE function $\WgO$ in Appendix~\ref{sec:summary_COE}.  At
this point we just need the observation that
\begin{equation*}
  \WgO(\omega) = \mathcal{O}\left(\frac1{N^{2k-\frac12\ell[\omega,T]}}\right),
\end{equation*}
where $\ell[\omega,T]$ is the length (number of cycles) of the
permutation $[\omega,T]$.

Combining this with Theorem~\ref{thm:countingIJF}, we obtain that the
contribution of a permutation $\omega$ has the leading order of $1/N$
to power $2k-\frac12\ell[\omega,T] - \frac12\ell[\omega,Q]$.

\begin{example}
  If $\omega = s^n$, then $\frac12\ell[\omega,T] =
  \frac12\ell[\omega,Q] = k$ producing a constant term contribution.
  We will show that all other cycles give contribution with a higher
  power of $1/N$.
\end{example}

To establish Theorem~\ref{thm:two_orders0} we will need to investigate
graphs with large $\frac12\ell[\omega,T] + \frac12\ell[\omega,Q]$.
This means that vast majority of directed and undirected cycles in the
graph models $G_\omega$ ought to have length $2$ (to maximize the
number of cycles that can be made from a fixed number of edges); other
cycles we will call ``long''.  In fact we will only need to consider
graph models with
\begin{itemize}
\item no long cycles --- leading order
\item one long cycle (of either type)
  of length 4 --- order $1/N$
\item two cycles of length 4 or one cycle of length 6 --- order $1/N^2$.
\end{itemize}

We further classify the permutations $\omega$ by whether or not they
mix the elements of $\setK$ of different types, with and without bars.

\begin{definition}
  \label{def:RegIrreg}
  A permutation $\omega \in S_{2k}$ is called \emph{regular} if, for
  every $b \in [k] \subset \setK$, $\omega(b) \in [k]$.  The set of
  all regular permutations can be naturally represented as
  $S_k \times S_k \subset S_{2k}$, with two halves acting on $[k]$ and
  $[\cc{k}]$ respectively.  A permutation $\omega \in S_{2k}$ that is
  not regular (i.e.\ there exist $b_1, b_2 \in [k]$ such that
  $\omega(b_1) = \cc{b}_2$) is called \emph{irregular}.

  We will denote by $\Reg{\alpha}{\beta}$ (corresp.\
  $\Irreg{\alpha}{\beta}$) the set of regular (corresp.\ irregular)
  permutations $\omega\in S_{2k}$ with $\chi_\omega = 1$, with cycle
  half-type of $[\omega,T]$ equal to $\alpha$ and with cycle half-type
  of $[\omega,Q]$ equal to $\beta$.
\end{definition}

\begin{example}
  \label{ex:regular}
  The regular permutation
  $\omega = (2\,4)(3\,5)(\cc{2}\,\cc{4})(\cc{3}\,\cc{5})$ contributes
  $\Reg{\id}{3^1}$, since $G_\omega$ has a directed cycle of length 6,
  see Figure~\ref{fig:graph_balanced1}.  The regular permutation
  $\omega = (2\,4)(3\,5)(\cc{1}\,\cc{3})(\cc{2}\,\cc{4})$ contributes
  to $\Reg{3^1}{\id}$ since its cycle of length $6$ is undirected, see
  Figure~\ref{fig:graph_balanced2}.  Both contribute to $M_k(N)$ at
  order $1/N^2$.

  Permutations shown in Figure~\ref{fig:graph_model_examples} are
  irregular.  The permutation $\omega = (1\, \cc4)$ in
  Figure~\ref{fig:graph_model_14} contributes to $\Irreg{2^1}{2^1}$
  while the permutation in Figure~\ref{fig:graph_unbalanced} has
  unbalanced cycles and hence does not contribute.
\end{example}

\begin{remark}
  \label{rem:reg_always_contribute}
  It can be easily seen from the definition of the graph model
  $G_\omega$ that if $\omega$ is regular then both types of directed
  edges point from a vertex in $[\cc{k}]$ to a vertex in $[k]$.
  Therefore, along any directed cycle the edge directions must
  alternate and, by Lemma~\ref{lem:counting_cycles} part
  (\ref{item:balanced}), $\chi_\omega=1$ for every regular $\omega$.
\end{remark}

We will classify all \emph{irregular} permutations contributing to the
remaining relevant sets, such as $\Irreg{2^1}{\id}$,
$\Irreg{\id}{2^1}$, $\Irreg{2^1}{2^1}$ etc.  We will \emph{not},
however, classify regular permutations directly.  Instead, in the
course of proving Theorem~\ref{thm:two_orders0} we will establish a
pair of identities which turns out to be sufficient for evaluating
their contribution.

We now list the necessary information about permutations contributing
to the expansion of $M_k(N)$ to three leading orders.  The proofs are
highly technical and are deferred to
Appendices~\ref{sec:counting_regular} and \ref{sec:class_irreg}
correspondingly.

\begin{lemma}
  \label{lem:regular_perm}
  We have
  \begin{align}
    \label{eq:shifts}
    &\Reg{\id}{\id} = \left\{s^n\right\}_{n\in[k]}
    \qquad \qquad
    \left|\Reg{\id}{\id}\right| = k,\\
    \label{eq:alphas}
    &\Reg{\id}{2^1} = \emptyset.
    % \label{eq:betas}
    % &\left|\Reg{2^1}{2^1}\right|
    % - \left|\Reg{\id}{3^1}\right|
    % - \left|\Reg{\id}{2^2}\right|
    % - 2\left|\Reg{3^1}{\id}\right|
    % - \left|\Reg{2^2}{\id}\right|
    % = \frac{k^3-k^2}{2}.
  \end{align}
\end{lemma}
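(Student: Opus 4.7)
The plan is to reduce everything to computations inside $S_k$ by exploiting regularity. A regular $\omega \in S_{2k}$ preserves $[k]$ and $[\cc{k}]$ separately, so I would write $\omega = \omega_1 \oplus \omega_2$ where $\omega_1, \omega_2 \in S_k$ are the two restrictions (identifying $\cc{b}$ with $b$). Tracing $b \in [k]$ through the definition of $[\omega,T] = \omega T \omega^{-1} T^{-1}$ yields $b \mapsto \cc{b} \mapsto \cc{\omega_2^{-1}(b)} \mapsto \omega_2^{-1}(b) \mapsto \omega_1\omega_2^{-1}(b)$, so that $[\omega,T] = (\omega_1 \omega_2^{-1}) \oplus (\omega_2 \omega_1^{-1})$. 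The condition $\alpha = \id$ then forces $\omega_1 = \omega_2$, and I denote the common permutation by $\sigma \in S_k$.

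Next I would compute $[\omega, Q]$ for such a diagonal $\omega = \sigma \oplus \sigma$. Chasing an index $b \in [k]$ through the four factors of the commutator gives
\[
b \mapsto Q^{-1}(b) = \cc{b-1} \mapsto \cc{\sigma^{-1}(b-1)} \mapsto \sigma^{-1}(b-1)+1 \mapsto \sigma\bigl(\sigma^{-1}(b-1)+1\bigr).
\]
The key observation is that this is exactly the action of $\sigma s \sigma^{-1} s^{-1}$ on $[k]$, i.e.\ the commutator $[\sigma, s]$ taken in $S_k$, where $s$ is now identified with the $k$-cycle $(1\,2\,\ldots\,k)$. An analogous calculation on $[\cc{k}]$ produces a permutation conjugate to (hence cycle-equivalent to) $[\sigma, s]$. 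Consequently the two halves of $[\omega, Q]$ contribute identical cycle types, and the half-type $\beta$ of $[\omega, Q]$ coincides with the ordinary cycle type of $[\sigma, s] \in S_k$.

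The first identity is now immediate. Requiring $\beta = \id$ means $[\sigma, s] = \id$, so $\sigma$ must lie in the centralizer of the $k$-cycle $s$ in $S_k$, which is the cyclic subgroup $\langle s \rangle$ of order $k$. The corresponding diagonal elements $\sigma \oplus \sigma$ in $S_{2k}$ are precisely the powers $s^n$ for $n \in [k]$, and $\chi_\omega = 1$ holds automatically for regular $\omega$ by Remark~\ref{rem:reg_always_contribute}, so no further filtering is needed.

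For the second identity, the half-type $\beta = 2^1$ would require $[\sigma, s]$ to be a single transposition in $S_k$. But every commutator in $S_k$ lies in the derived subgroup, which is contained in the alternating group $A_k$, whereas a transposition is odd. This contradiction gives $\Reg{\id}{2^1} = \emptyset$. The main delicate point throughout is the identification of $[\omega, Q]|_{[k]}$ with the commutator $[\sigma, s]$; once that bookkeeping is in place, both halves of the lemma reduce to elementary facts about symmetric groups.
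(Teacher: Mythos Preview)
Your proof is correct and follows a genuinely different line from the paper. The paper proceeds step by step: first showing that $[\omega,T]=\id$ forces $\omega(\cc{z})=\cc{\omega(z)}$, then that $[\omega,Q]=\id$ forces $\omega(m+1)=\omega(m)+1$, and combining these to get $\omega=s^{p-m}$. For $\Reg{\id}{2^1}=\emptyset$ it argues either via the graph model (a single directed $4$-cycle together with all short cycles cannot close up into a connected dashed cycle) or via a longer algebraic analysis of the cycles of $\omega QT\omega^{-1}$.

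Your reduction to $S_k$ via $\omega=\sigma\oplus\sigma$ and the identification of $[\omega,Q]\big|_{[k]}$ with $[\sigma,s]$ is cleaner: the first assertion becomes the standard fact that the centralizer of a $k$-cycle is $\langle s\rangle$, and the second becomes the one-line parity observation that a commutator in $S_k$ is always even and hence never a single transposition. This is more conceptual and considerably shorter, especially for \eqref{eq:alphas}. The paper's more hands-on route, on the other hand, develops lemmas (your Lemmas~\ref{lem:T id} and~\ref{lem:Q id} analogues) and graph-model heuristics that are reused verbatim in the substantially harder classification of irregular permutations in Appendix~\ref{sec:class_irreg}, where a clean parity shortcut is no longer available.
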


% \begin{remark}
%   Many more classes of regular permutations contribute at the relevant
%   orders, for example $\Reg{\id}{3^1}$ and $\Reg{2^1}{2^1}$, but in
%   Section~\ref{sec:main_proof} we employ a shortcut which allows us to
%   evaluate their contribution without a complete description of the
%   individual permutation sets.
% \end{remark}

\begin{lemma}
  \label{lem:irregular_perm}
  We have
  \begin{align}
    \label{eq:irrega1}
    \Irreg{\alpha}{\id}
    &= \emptyset
    \quad \mbox{for any }\alpha, \\
    \label{eq:irreg12}
    \Irreg{\id}{2^1}
    &= \left\{s^n (1\,\cc{1}) s^m \right\}_{n,m \in [k]}
    &&\left|\Irreg{\id}{2^1}\right| = k^2,
    \\
    \label{eq:irreg22}
    \Irreg{2^1}{2^1}
    &= \left\{s^n (b\,\cc{1}) s^m\right\}_{n,m \in [k],\
      3 \leq b \leq k}
    &&\left|\Irreg{2^1}{2^1}\right| = k^2(k-2),
    \\  
    \label{eq:irreg14}
    \Irreg{\id}{2^2}
    &= \left\{s^n (1\,\cc{b})(b\,\cc{1}) s^m,\
      s^n (1\,\cc1)(b\,\cc{b}) s^m
      \right\}_{n,m \in [k],\
      3 \leq b \leq k-1}
    &&\left|\Irreg{\id}{2^2}\right| = k^2(k-3),
    \\
    \label{eq:irreg13}
    \Irreg{\id}{3^1}
    &= \left\{s^n (1\,\cc1)(2\,\cc{2}) s^m\right\}_{n,m \in [k]}
    &&\left|\Irreg{\id}{3^1}\right| = k^2.
  \end{align}
\end{lemma}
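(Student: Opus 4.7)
The proof is a case-by-case analysis via the graph model $G_\omega$ of Section~\ref{sec:graph_model}. In each case, the target cycle half-type $\alpha$ controls the undirected cycle structure of $G_\omega$ and $\beta$ controls the directed one (Lemma~\ref{lem:counting_cycles}), while the condition $\chi_\omega=1$ requires every directed cycle to be balanced (Lemma~\ref{lem:counting_cycles}(\ref{item:balanced})). The action $\omega\mapsto s^n\omega s^m$ preserves $G_\omega$ up to isomorphism (Lemma~\ref{lem:counting_cycles}(\ref{item:graphs isomorphic})), so it suffices to classify $s$-orbit representatives.

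The identity~\eqref{eq:irrega1} is immediate: if $\beta=\id$ then every directed cycle of $G_\omega$ has length $2$, supported on a pair $\{\cc b,b+1\}$ carrying a solid edge $\cc b\to b+1$ and a dashed edge. Balance forces the dashed edge to point in the \emph{same} direction $\cc b \to b+1$; since dashed directed edges have the form $\omega(\cc{b'})\to\omega(b'+1)$, matching endpoints and orientation forces $\omega(\cc{b'})\in[\cc k]$ and $\omega(b'+1)\in[k]$ for every $b'$. Thus $\omega$ preserves $[k]$ and $[\cc k]$ set-wise, i.e.\ $\omega$ is regular, contradicting irregularity.

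For the remaining identities, I would parametrise $\omega$ by its interaction with the $T$-orbit structure. When $\alpha=\id$, $\omega$ commutes with $T$ and is encoded by a pair $(\sigma,\tau)\in S_k\times\{0,1\}^k$: $\sigma$ permutes the $T$-orbits $\{b,\cc b\}$ and $\tau_b=1$ flags the orbits $\omega$ flips, with irregularity equivalent to $|\tau|\ge 1$. When $\alpha=2^1$, $\omega$ maps one pair of $T$-orbits off-diagonally, producing the single undirected $4$-cycle, and admits a similar but slightly enriched parametrisation. In either regime, the dashed directed edges $\omega(\cc b)\to\omega(b+1)$ are computable explicitly in terms of $(\sigma,\tau)$, and one checks that the dashed edge at position $b$ forms a balanced directed $2$-cycle with the solid edge $\cc b\to b+1$ precisely when $\tau_b=\tau_{b+1}=0$ and $\sigma(b+1)=\sigma(b)+1$. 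Any other local pattern either produces a ``defect'' dashed edge on vertices disjoint from the solid edge (which must be threaded into some longer directed cycle), or produces an unbalanced $2$-cycle that immediately kills $\chi_\omega$.

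The classification thus reduces to counting defect configurations compatible with the prescribed $\beta$: a single directed $4$-cycle absorbs two defects, a single $6$-cycle absorbs three, and two disjoint $4$-cycles absorb four. Combining this with the balance constraint rigidly selects the admissible $(\sigma,\tau)$: one flip and a shift $\sigma$ gives~\eqref{eq:irreg12}; a single off-diagonal swap $(b\,\cc 1)$ gives~\eqref{eq:irreg22}, with $b=1$ excluded (then $\omega$ is regular) and $b=2$ excluded (then $\omega$ coincides with a $Q$-orbit, forcing an unbalanced $2$-cycle, as one verifies directly); two separated flips with $\sigma=\id$ or $\sigma=(1\,b)$ give the two sub-families of~\eqref{eq:irreg14}; two adjacent flips give the unique representative~\eqref{eq:irreg13}. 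Cardinalities follow by computing the $s$-stabiliser of each representative. The main obstacle is the bookkeeping: establishing exhaustivity of the defect-configuration list and correctly accounting for stabilisers (e.g.\ the two-fold symmetry in~\eqref{eq:irreg14} that halves the naive count $2k^2(k-3)$ to $k^2(k-3)$); this is carried out in detail in Appendix~\ref{sec:class_irreg}.
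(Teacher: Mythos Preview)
Your plan is correct and rests on the same foundation as the paper: both arguments use the graph model $G_\omega$, Lemma~\ref{lem:counting_cycles}, and the $s$-orbit reduction to classify representatives. Your argument for~\eqref{eq:irrega1} is essentially identical to the paper's. The organizational device for the remaining cases, however, differs. The paper's Appendix~\ref{sec:class_irreg} works directly with graph pictures: starting from the irregularity hypothesis $\omega(q)=\cc l$, it locates the mandatory chord, builds the long directed cycle edge by edge, and eliminates configurations by repeatedly invoking the constraint that the dashed subgraph must form a single cycle with all directed edges oriented consistently. You instead introduce an algebraic parametrisation---for $\alpha=\id$, the centraliser of $T$ in $S_{2k}$ is the hyperoctahedral group, so $\omega$ is encoded by a pair $(\sigma,\tau)\in S_k\times\{0,1\}^k$---and reduce the problem to locating the ``defect'' dashed edges in terms of $(\sigma,\tau)$ and matching their count to the prescribed $\beta$. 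The paper's route is more visual and ad~hoc (one figure per sub-case); yours is more systematic and would likely scale better if one pushed the expansion to higher orders. On the other hand, the step you summarise as ``the balance constraint rigidly selects the admissible $(\sigma,\tau)$'' is where the real work lies, and carrying it out in full---in particular ruling out configurations such as $|\tau|=1$ with non-shift $\sigma$, or $|\tau|=2$ with $\sigma\notin\{\id,(1\,b)\}$---requires essentially the same case distinctions that the paper draws as figures.
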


\begin{remark}
  The restriction $b \geq 3$ in equations \eqref{eq:irreg22} and
  \eqref{eq:irreg14} suggest a natural question about which set do
  $\omega = (2\,\cc1)$ and $\omega = (1\,\cc2)(2\,\cc1)$ belong to.
  It turns out they do not contribute at all due to having unbalanced
  directed cycles.
\end{remark}

\begin{remark}
  Counting directly from our set notation, it would appear that
  $\big|\Irreg{\id}{2^2}\big|=2k^2(k-3)$. However, there is a symmetry in this
  notation causing each permutation to be counted exactly twice.  The
  details are described in the proof of \eqref{eq:irreg22}, see
  Lemma~\ref{lem:irreg22}.
\end{remark}

% \begin{proof}[Proof of Theorem \ref{thm:limit}]
% We are interested only in the constant term of  $M_k(N)$. Thus we need only consider permutations where $[\omega,T]=[\omega,Q]=\id$. It is already known that $\Reg{\id}{\id}$. We know that if we require the directed cycles on the graph to be balanced, $\Irreg{\id}{\id}=\emptyset$ (see Lemma \ref{lem:irreg at least one chord} ). Since the only directed cycles on the graphs possible for  $[\omega,T]=[\omega,Q]=\id$ are 2-cycles, the only cycles in $P$ that could break the balance of the directed cycles are 1- and 2-cycles in $P$. Therefore, provided all cycles of $P$ are longer than 2, $M_k(N)=k+\mathcal{O}(N^{-1})$.
% \end{proof}

%%%%%%%%%%%%
\subsection{Proof of the main result}
\label{sec:main_proof}

The last ingredient we need for proving our main result is the
expansion of the relevant Weingarten functions in inverse powers of
$N$.  These expansions are obtained using orthogonality relations in
Appendix~\ref{sec:summary_COE}.

\begin{proof}[Proof of Theorem \ref{thm:two_orders0}]
  Recall that through a combination of
  equation~\eqref{eq:kmoment_expand} and Theorem~\ref{thm:countingIJF}
  we have represented $M_k(N)$ as
  \begin{equation*}
    M_k(N)
    = \sum_{\omega \in S_{2k}} \chi_\omega \WgO(\omega)
    N^{\frac12 \ell[\omega,Q]},
  \end{equation*}
  where the indicator function $\chi_\omega$ was shown to be
  independent of the permutation $P_N$ under the conditions of the
  present theorem.  We thus have a finite sum of rational functions of
  $N$ with no dependence on $P_N$ in any contributing term.  Therefore
  $M_k(N)$ is a rational function of $N$ and has a convergent
  asymptotic expansion in $\frac1N$.  We are therefore justified (for
  a fixed $k$) to evaluate this expansion term-by-term.
  
  We start by evaluating the corresponding moments for the CUE
  matrices.  This will allow us to evaluate the contribution of the
  regular permutation to $M_k(N)$ from the limited information
  provided by Lemma~\ref{lem:regular_perm}; in addition, it is a
  warm-up for the more lengthy calculation of the full COE expansion.
  Define
  \begin{equation}
    \label{eq:M_CUE_def}
    \mathcal{M}^{\mathrm{CUE}}_k(N)
    := \left\langle \left| \Tr(PU)^k \right|^2
    \right\rangle_{\mathrm{CUE}(N)}.
  \end{equation}
  On one hand, since the measure on $\CUE(N)$ is by definition
  invariant with respect to multiplication by a unitary matrix $P$, we
  have
  \begin{equation}
    \label{eq:M_CUE_eval}
    \mathcal{M}^{\mathrm{CUE}}_k(N)
    = \left\langle \left| \Tr(PU)^k \right|^2
    \right\rangle_{\mathrm{CUE}(N)}
    = \left\langle \left| \Tr(U)^k \right|^2
    \right\rangle_{\mathrm{CUE}(N)}
    = k,
  \end{equation}
  see equation~\eqref{eq:CUE_answer}.
  
  On the other hand, expanding $\Tr(PU)^k$ as in \eqref{eq:usingF}, we
  obtain
  \begin{equation}
    \label{eq:M_CUE_expand1}
    \mathcal{M}^{\mathrm{CUE}}_k(N)
    = \sum_{I,J \in \cF_N}
    \left\langle
      U_{i_1, i_\cc1} U_{i_2,i_\cc2}
      \cdots U_{i_k,i_\cc{k}}
      \cc{U}_{j_1, j_\cc1} \cc{U}_{j_2,j_\cc2} \cdots
      \cc{U}_{j_k,j_\cc{k}} 
    \right\rangle_{\mathrm{CUE}(N)},
  \end{equation}
  where the definition of $\cF_N$ remains identical to COE case,
  equation~\eqref{eq:setF_def}.  Averages of products of elements of
  CUE matrices is evaluated using Weingarten functions $\WgU$, see
  Appendix~\ref{sec:summary_CUE}.  The result is usually written using
  two permutations $\sigma$ and $\pi$ acting on indices with and
  without bars correspondingly.  We will view them as two halves of a
  \emph{regular} permutation $\omega$, see
  Definition~\ref{def:RegIrreg}, and will write $\omega = (\sigma,\pi)
  \in S_k \times S_k \subset S_{2k}$.
  Applying~\eqref{eq:Samuel_expansion}, we get
  \begin{align*}
    \mathcal{M}_{k}^{\mathrm{CUE}}(N)
    &=\sum_{\omega=(\sigma,\pi) \in S_{2k}} \WgU (\sigma^{-1}\pi)
    \sum_{I\in \mathcal{F}:I\circ\omega\in \mathcal{F}} 1 \\
    &= \sum_{\omega=(\sigma,\pi)\in S_{2k}}
    \WgU(\sigma^{-1}\pi) \Phi(\omega),
  \end{align*}
  where $\Phi(\omega)$ is defined by \eqref{eq:Phi_definition} and
  therefore obeys Theorem~\ref{thm:countingIJF}.
  
  Since $\WgU(\sigma^{-1}\pi)$ only depends on the cycle structure of
  $\sigma^{-1}\pi$, for the sake of creating the following sum, we
  will write $\WgU(\lambda)$, where $\lambda$ is the cycle structure
  of $\sigma^{-1}\pi$.  It is important to note that for $\omega =
  (\sigma,\pi)$,
  \begin{equation*}
    [\omega, T] = \omega T \omega^{-1} T^{-1}
    = (\sigma\pi^{-1}, \pi\sigma^{-1}),
  \end{equation*}
  and therefore $\lambda$ is half the cycle structure of
  $[\omega, T]$, analogously to the COE case.  Defining
  \begin{equation*}
    S^{\lambda}_{\CUE}:=\{\omega=(\sigma,\pi) \in S_k \times S_k:
    \sigma^{-1}\pi \in S_k
    \text{ has cycle type }\lambda\},
  \end{equation*}
  we may write
  \begin{equation}
    \label{eq:M_CUE_expand2}
    \mathcal{M}_k(N)
    =\sum_{\lambda \vdash k} \WgU(\lambda)
    \sum_{\omega\in S^{\lambda}_{\CUE}}\Phi(\omega)
    =: \sum_{\lambda \vdash k} \WgU(\lambda) P_{\lambda}^R.
  \end{equation}
  From Theorem~\ref{thm:countingIJF} we conclude that $P_{\lambda}^R$
  is a polynomial in $N$.  More precisely, it can be expanded as
  \begin{equation}
    \label{eq:PR_expansion}
    P^R_{\lambda}
    = \left|\Reg{\lambda}{\id}\right| N^k
    + \left|\Reg{\lambda}{2^1}\right| N^{k-1}
    + \left(\left|\Reg{\lambda}{2^2}\right|
      + \left|\Reg{\lambda}{3^1}\right|\right) N^{k-2} + \ldots
  \end{equation}
  Combining equations~\eqref{eq:M_CUE_eval} and
  \eqref{eq:M_CUE_expand2} we get
  \begin{equation}
    \label{eq:M_CUE_expand3}
    k = P^R_{\text{id}}\,\WgU(\id) + P^R_{2^1}\,\WgU(2^1)
    + P^R_{2^2}\,\WgU(2^2) + P^R_{3^1}\,\WgU(3^1) + \ldots,
  \end{equation}
  and no other Weingarten functions have terms of orders $N^{-k}$,
  $N^{-k-1}$ or $N^{-k-2}$.  The relevant terms of these relevant
  functions are (see Appendix~\ref{sec:summary_CUE})
  \begin{align}
    \label{eq:WgU1}
    \WgU(\id) &= \frac{1}{N^k} + \frac{0}{N^{k+1}} + \frac{k(k-1)/2}{N^{k+2}}
                + \ldots \\
    \label{eq:WgU2}
    \WgU(2^1) &= -\frac{1}{N^{k+1}} + \frac{0}{N^{k+2}} + \ldots,\\
    \label{eq:WgU22}
    \WgU(3^1) &= \frac{2}{N^{k+2}} + \ldots,\\
    \label{eq:WgU3}
    \WgU(2^2) &= \frac{1}{N^{k+2}} + \ldots
  \end{align}
  Substituting these expansions together with \eqref{eq:PR_expansion}
  into~\eqref{eq:M_CUE_expand3} and collecting the terms contributing
  to the first three orders of a $1/N$ expansion, we obtain
  \begin{align*}
    k = \left|\Reg{\id}{\id}\right|
    &+ \left(\left|\Reg{\id}{2^1}\right|
    - \left|\Reg{2^1}{\id}\right|\right)\frac1N \\
    &+ \left(\frac{k(k-1)}{2}\left|\Reg{\id}{\id}\right|
    + \left|\Reg{\id}{2^2}\right|
    + \left|\Reg{\id}{3^1}\right|
    - \left|\Reg{2^1}{2^1}\right|
    + 2 \left|\Reg{3^1}{\id}\right|
    + \left|\Reg{2^2}{\id}\right|
    \right)\frac1{N^2} + \ldots
  \end{align*}
  We already know that $\left|\Reg{\id}{\id}\right|=k$ from \eqref{eq:alphas}
  and the $\frac1N$ term we get $\left|\Reg{2^1}{\id}\right| =
  \left|\Reg{\id}{2^1}\right| = 0$.  Finally, from the $\frac1{N^2}$
  term we get
  \begin{equation}
    \label{eqn:regular third term}
    \frac{k^2(k-1)}{2} + \left|\Reg{\id}{2^2}\right|
    +\left|\Reg{\id}{3^1}\right| - \left|\Reg{2^1}{2^1}\right|
    + 2 \left|\Reg{3^1}{\id}\right|
    + \left|\Reg{2^2}{\id}\right| = 0.
  \end{equation}

  Now we similarly expand our primary target, the moment function
  $M_k(N)$ for COE.  Combining equation~\eqref{eq:kmoment_expand} with
  Theorem~\ref{thm:countingIJF} and Definition~\ref{def:RegIrreg}, and
  using the fact that $\Reg{\id}{2^1}$, $\Reg{2^1}{\id}$ and
  $\Irreg{\alpha}{\id}$ are empty to reduce the number of terms, we
  get
  \begin{align}
    M_k(N) = {}
    &\WgO(\id) \left(\left|\Reg{\id}{\id}\right| N^k
      + \left|\Irreg{\id}{2^1}\right| N^{k-1}
      + \left( \left|\Reg{\id}{2^2}\right|
      + \left|\Irreg{\id}{2^2}\right|
      + \left|\Reg{\id}{3^1}\right|
      + \left|\Irreg{\id}{3^1}\right| \right) N^{k-2} \right)
      \nonumber \\
    &+ \WgO(2^1) \left(\left|\Reg{2^1}{2^1}\right|
      + \left|\Irreg{2^1}{2^1}\right| \right) N^{k-1}
      \nonumber \\
    \label{eq:momentExpandSets}
    &+ \WgO(2^2) \left|\Reg{2^2}{\id}\right| N^k
    + \WgO(3^1) \left|\Reg{3^1}{\id}\right| N^k + \mathcal{O}\left(\frac{1}{N^3}\right).
  \end{align}

  The necessary Weingarten functions are (see Appendix~\ref{sec:summary_COE})
  \begin{align}
    \label{eq:WgO1}
    \WgO(\id) &= \frac{1}{N^k} - \frac{k}{N^{k+1}} +
                \frac{k(3k-1)/2}{N^{k+2}} + \ldots \\
    \label{eq:WgO2}
    \WgO(2^1) &= -\frac{1}{N^{k+1}} + \frac{k+2}{N^{k+2}} + \ldots,\\
    \label{eq:WgO3}
    \WgO(3^1) &= \frac{2}{N^{k+2}} + \ldots,\\
    \label{eq:WgO22}
    \WgO(2^2) &= \frac{1}{N^{k+2}} + \ldots
  \end{align}
  Substituting these together with irregular counts given in
  Lemma~\ref{lem:irregular_perm}, we get
  \begin{align*}
    M_k(N) =
    & \left(1 - \frac{k}{N} + \frac{3k^2-k}{2N^2} \right)
      \left(k + \frac{k^2}{N} 
      + \frac{k^3-2k^2 + \left|\Reg{\id}{2^2}\right|
      + \left|\Reg{\id}{3^1}\right|}{N^2} \right)
      \nonumber \\
    & - \frac1{N^2}\left(\left|\Reg{2^1}{2^1}\right| + k^2(k-2)\right) 
      + \frac1{N^2}\left|\Reg{2^2}{\id}\right|
      + \frac2{N^2}\left|\Reg{3^1}{\id}\right|
      + \mathcal{O}\left(\frac1{N^3}\right).
  \end{align*}
  Expanding and collecting terms, we use \eqref{eqn:regular third
    term} to get
 
  \begin{align*}
    M_k(N) = k + \frac0N + \frac{0}{N^2}
    + \mathcal{O}\left(\frac1{N^3}\right)
    = k + \mathcal{O}\left(\frac1{N^3}\right),
  \end{align*}
  which is the desired result.
\end{proof}

%%%%%%%%%%%%%%%%%%%%%%%%%%%%%%%%%%%%%%%%%%%%%%%%%%%%%%%%%%%%%%%%%%
\section*{Acknowledgment}

This material is based upon work supported by the National Science
Foundation under Grant No.~1815075 and by the Binational Science
Foundation under Grant No.~2016281.  The idea for this research
project arose during discussions with Chris Joyner and we are
profoundly grateful for his help and support.  We would like to thank
Jon Keating and the interest he took in our results and for bringing
the work \cite{Mec_tams08} to our attention.  We are grateful to the
anonymous referee for several improving suggestions.

%%%%%%%%%%%%%%%%%%%%%%%%%%%%%%%%%%%%%%%%%%%%%%%%%%%%%%%%%%%%%%%%%%
\section*{Data Availability}

Data sharing is not applicable to this article as no new data were
created or analyzed in this study.

%%%%%%%%%%%%%%%%%%%%%%%%%%%%%%%%%%%%%%%%%%%%%%%%%%%%%%%%%%%%%%%%%%
%%%%%%%%%%%%%%%%%%%%%%%%%%%%%%%%%%%%%%%%%%%%%%%%%%%%%%%%%%%%%%%%%%
%%%%%%%%%%%%%%%%%%%%%%%%%%%%%%%%%%%%%%%%%%%%%%%%%%%%%%%%%%%%%%%%%%
\appendix
\section{Weingarten calculus}
\label{sec:Weing_calc}

\subsection{Circular Unitary Ensemble \texorpdfstring{(the unitary group $U(N)$)}{}}
\label{sec:summary_CUE}

Unitary Weingarten functions are building blocks for integration of
products of matrix elements over the unitary group.  Let $k \leq N$
and let $\mathbf{i} = (i_1, \ldots, i_k)\in [N]^k$ and
$\cc{\mathbf{i}} = (i_\cc1, \ldots, i_\cc{k})\in [N]^k$ be two
arbitrary sequences of indices from $[N] := \{1,\ldots,N\}$ with
\emph{distinct} entries.  The defining property of the Weingarten
function of $\sigma \in S_k$ is
\begin{equation}
  \label{eq:WgU_def}
  \WgU(\sigma) = \int_{U(N)}
  U_{i_1 i_\cc1} U_{i_2 i_\cc2} \cdots U_{i_k i_\cc{k}}
  \cc{U}_{i_{\sigma(1)} i_\cc1} \cc{U}_{i_{\sigma(2)} i_\cc2}
  \cdots \cc{U}_{i_{\sigma(k)} i_\cc{k}} \,du,
\end{equation}
where $du$ is the uniform (Haar) measure on the unitary group $U(N)$.
Due to invariance properties of the measure, the value of the function
is independent of the choice of the sequences $\mathbf{i}$ and
$\cc{\mathbf{i}}$; both are often taken to be be equal to
$(1,2,\ldots,k)$, although that tends to obscure the meaning of
$\sigma$ as a permutation acting on the order of indices $i$ rather
than their values.  Furthermore, Weingarten functions depend only on
the equivalence class of $\sigma$, which is in turn determined by its
cycle structure.

Knowledge of Weingarten functions is enough to evaluate the average of
a general product of the elements of a $U\in U(N)$ via the following
formula.    Let $\mathbf{i}, \cc{\mathbf{i}} \in [N]^k$ and $\mathbf{j},
  \cc{\mathbf{j}} \in [N]^{k'}$.  Then
\begin{equation}
  \label{eq:Samuel_expansion}
  \int_{U(N)}
  U_{i_1 i_\cc1} %U_{i_2 i_\cc2}
  \cdots U_{i_k i_\cc{k}}
  \cc{U}_{j_1 j_\cc1} %\cc{U}_{j_2 j_\cc2}
  \cdots \cc{U}_{j_{k'} j_{\cc{k}'}} \,du
  = \delta_{k,k'} \sum_{\substack{\sigma \in S_k \colon
    \sigma(\mathbf{i}) = \mathbf{j} \\
    \pi \in S_k \colon
    \pi(\cc{\mathbf{i}}) = \cc{\mathbf{j}}}}
  \WgU(\sigma^{-1}\pi),
\end{equation}
where $\delta_{k,k'}$ is the Kronecker delta function.

One of the possible way to compute Weingarten functions is via the
orthogonality relations
\begin{equation}
  \label{eq:recursion_CUE}
  N\WgU(\omega) + \sum_{i=1}^{k-1}\WgU\big((i\,k)\omega\big)
  = \delta_{\omega(k),k}
  \WgUk{k-1}\left(\omega^{\downarrow}\right),
\end{equation}
where $(i\,k)$ is the transposition between $i$ and $k$, and
where $\omega^{\downarrow}$ is the restriction of $\omega$ from $S_k$
to $S_{k-1}$, which is well defined here due to condition
$\omega(k)=k$ enforced by the Kronecker delta.
Relations~\eqref{eq:recursion_CUE} together with the initial condition
$\WgUk{0}(\emptyset) = 1$ (or, one step up, $\WgUk{1}\big((1)\big)=1/N$ fully
determine the Weingarten functions for all $N$ and $k \leq N$.

We will denote the cycle structure of $\sigma$ as
$1^{\alpha_1}2^{\alpha_2}...k^{\alpha_k}$, where $\alpha_j$ is the
number of cycles of $\sigma$ of length $j$.  We note that
$\alpha_1+\alpha_2+...+\alpha_k = \ell(\sigma)$ and
$\alpha_1+2\alpha_2+...+k\alpha_k=k$.  Since a Weingarten function
depends only on the cycle structure of $\sigma$, we will abuse notation
slightly and write $\WgU(1^{\alpha_1}2^{\alpha_2}...k^{\alpha_k})$ for
$\WgU(\sigma)$ when convenient.

The first term in the asymptotic expansion of a Weingarten function as
$N\to\infty$ is
\begin{equation}
  \label{eqn:CUEproduct}
  \WgU(1^{\alpha_1}2^{\alpha_2}...k^{\alpha_k})
  = \prod_{j=1}^k \left(\mathrm{Wg}^{\mathrm{CUE}}_{N,j}(j^1)\right)^{\alpha_j}
  + \mathcal{O}(N^{\ell-2k-2}),
  \qquad \ell = \ell(\sigma),
\end{equation}
where 
\begin{equation}
  \label{eqn:CUEpart}
  \mathrm{Wg}^{\mathrm{CUE}}_{N,j}(j^1)
  = (-1)^{j-1}  C_{j-1} N^{1-2j}
  + \mathcal{O}(N^{-1-2j}), \qquad C_{j-1} =\frac{1}{j} \binom{2j-2}{j-1},
\end{equation}
where $C_n$ are the Catalan numbers.
We note that the product $\prod_{j=1}^k \left(\mathrm{Wg}^{\mathrm{CUE}}_{N,j}(j^1)\right)^{\alpha_j}$
is of order $N^{\ell-2k}$, meaning the asymptotic expansion of the CUE
Weingarten function does not have a term of the order
$N^{\ell-2k-1}$.  It should be emphasized that $j^1$ refers to cycle
structure of a permutation with $j$ elements.

At this point we provide a historical information on Weingarten
calculus, to the best of our knowledge.  Weingarten functions were
first defined and systematically studied by Samuel \cite{Sam_jmp80} who
obtained expansion~\eqref{eq:Samuel_expansion}, orthogonality
relations \eqref{eq:recursion_CUE} as well as an expression for $\WgU$
in terms of characters of $S_k$ (the proof of the expression is
attributed in \cite{Sam_jmp80} to Fritz Beukers).  The function is
named after Weingarten who in an earlier work \cite{Wei_jmp78}
obtained asymptotic results equivalent to \eqref{eqn:CUEproduct}.
Averages over unitary group were used extensively in physics (see
\cite{BroBee_jmp96} for one of \emph{many} applications, to quantum
transport) and were eventually rediscovered in the mathematical
literature by Collins \cite{Col_imrn03}.  A beautiful interpretation
of asymptotic coefficients of the Weingarten function as the number of
monotone factorizations by Matsumoto and Novak \cite{MatNov_imrn13}
allowed one of the present authors with Kuipers \cite{BerKui_jmp13a}
to put the use of random matrix theory in quantum chaotic transport on
a more solid mathematical basis.  Notation in the present section is
kept in line with the mathematical sources such as
\cite{Mat_rmta13,ColMat_alea17}.

In the present work we use the first few terms of Weingarten functions
for specific cycle structures, whose calculation we will now discuss.
More precisely, we need all Weingarten functions whose expansion has
terms of order $N^{-k}$, $N^{-k-1}$, and $N^{-k-2}$.  From
\eqref{eqn:CUEproduct}-\eqref{eqn:CUEpart}, these Weingarten functions
are $\WgU(\id)$, $\WgU(2^1)$, $\WgU(3^1)$ and $\WgU(2^2)$.  In our
notation we have omitted $1^{\alpha_1}$ and any factor where $\alpha_j=0$, for
the sake of brevity.  The leading order term of each function can be
obtained from the product in \eqref{eqn:CUEproduct}, while the next order
term is 0 because the error bound in \eqref{eqn:CUEproduct}.  The
resulting expansions are given by \eqref{eq:WgU1}-\eqref{eq:WgU3}, but
we have yet to determine the third term in the expansion of 
\begin{equation}
  \label{eq:WgU1_unknown}
  \WgU(\id) = \frac{1}{N^k} + \frac{0}{N^{k+1}}
  + \frac{t_k}{N^{k+2}} + \ldots   
\end{equation}

In order to do so, we write out equation \eqref{eq:recursion_CUE} as
it applies to $\id\in S_k$.
\begin{equation}
  \label{eq:recursion_id}
  N\WgU(\id) + (k-1)\WgU(2^1)
  = \mathrm{Wg}^{\mathrm{CUE}}_{N,k-1}\left(\id \right),
\end{equation}
or, extracting the coefficients of the term $1/N^{k+1}$ on both sides,
\begin{equation}
  \label{eq:recursion_t}
  t_k = k-1 + t_{k-1}.
\end{equation}
Since $\WgUk{1}(\id_1) = 1/N$, we have $t_1=0$ and therefore $t_k =
k(k-1)/2$.  We conclude that
\begin{equation*}
  \WgU(\id) = \frac{1}{N^k} + \frac{0}{N^{k+1}}
  + \frac{k(k-1)/2}{N^{k+2}} + \ldots
\end{equation*}

We remark that the coefficient $k(k-1)/2$ can also be obtained as the
number of primitive (monotone) factorizations of the identity into two
transpositions \cite{MatNov_imrn13}.

%%%%%%%%%%%
\subsection{Circular Orthogonal Ensemble \texorpdfstring{(compact symmetric space $U(N)/O(N)$)}{}}
\label{sec:summary_COE}

Circular Orthogonal Ensemble was introduced in the seminal article
\cite{Dys_jmp62_i} (see also its ``prequel'' published later the same
year \cite{Dys_jmp62_0}).  The rest of the references is given
at the end of this section, after the relevant facts are stated.

For a given $\omega \in S_{2k}$, considered as a permutation of
$\setK$, the COE Weingarten function is defined by
\begin{equation}
  \label{eq:WgO_def}
  \WgO(\omega) = \int_{\COE(N)}
  U_{i_1 i_\cc1} U_{i_2 i_\cc2} \cdots U_{i_k i_\cc{k}}
  \cc{U}_{i_{\omega(1)} i_{\omega(\cc1)}}
  \cc{U}_{i_{\omega(2)} i_{\omega(\cc2)}}
  \cdots \cc{U}_{i_{\omega(k)} i_{\omega(\cc{k})}} \,du,
\end{equation}
where $du$ is the COE probability measure and
$(i_1, i_\cc1, \ldots, i_k, i_\cc{k}) =: I$ is an arbitrary
sequence of \emph{distinct} values from $[N]$.  The average of any
product of matrix elements can now be calculated as follows: for any
$I\colon \setK \to [N]$ and $J \colon \setK'\to[N]$
\begin{equation}
  \label{eq:COE_expansion}
  \int_{\COE(N)}
  U_{i_1 i_\cc1} %U_{i_2 i_\cc2}
  \cdots U_{i_k i_\cc{k}}
  \cc{U}_{j_1 j_\cc1} %\cc{U}_{j_2 j_\cc2}
  \cdots \cc{U}_{j_{k'} j_\cc{k'}} \,du
  = \delta_{k,k'} \sum_{\omega \in S_{2k} \colon
    I\circ\omega = J}
  \WgO(\omega).
\end{equation}
Orthogonality relations for $\WgO$ take the following form:
\begin{multline}
  \label{eq:recursion_COE}
  (N+1)\WgO(\omega)
  + \sum_{z=1}^{k-1} \WgO\big((z\,k)\omega\big)
  + \sum_{z=1}^{k-1} \WgO\big((\cc{z}\,k)\omega\big) \\
  = \delta_{\{\omega(k),\omega(\cc{k})\},\{k,\cc{k}\}}
  \mathrm{Wg}^{\mathrm{COE}}_{N,k-1}(\omega^{\downarrow}),
\end{multline}
where the right-hand side is non-zero if and only if $\omega$ leaves
the set $\{k,\cc{k}\}$ invariant, in which case $\omega^\downarrow$ is
the natural projection of $\omega$ to $S_{2(k-1)}$.

From invariance properties of COE one deduces that $\WgO$ depends only
on the cycle structure of the permutation
$[\omega,T]=\omega T \omega^{-1} T^{-1}$.  The latter permutation has
an even number of cycles of any length.  We will therefore use
$\WgO(1^{\alpha_1} 2^{\alpha_2} \cdots)$ to denote the Weingarten
function of a permutation $\omega$ such that $[\omega,T]$ has
$2\alpha_1$ cycles of length 1, $2\alpha_2$ cycles of length $2$ and
so on.  To give an example, the cycle structure corresponding to
$\omega=(1,\cc{b})$ is $1^k$ if $b=1$ and $1^{k-2}2^1$ if $b\neq 1$.
As before, we have $\alpha_1+2\alpha_2+...+k\alpha_k=k$ and we let
$h = \frac12\ell[\omega,T] := \alpha_1+\alpha_2+...+\alpha_k$.  We also
note that since $T$ is an involution, $[\omega^{-1},T]$ has the same
cycle structure as $[\omega,T]$ and in some sources formulas such as
\eqref{eq:COE_expansion} are written in terms of $\omega^{-1}$
instead.

For a fixed $k$, a Weingarten function is a rational function of $N$
with the asymptotic expansion in $1/N$ given by \cite[Sec.~IV]{BroBee_jmp96}
\begin{equation}
  \label{eqn:COEproduct}
  \WgO(1^{\alpha_1}2^{\alpha_2}...k^{\alpha_k})
  = \prod_{j=1}^k \left(\mathrm{Wg}^{\COE}_{N,j}(j^1)\right)^{\alpha_j}
  + \mathcal{O}(N^{h-2k-2}),
\end{equation}
with
\begin{equation}
  \label{eqn:COEpart}
  \mathrm{Wg}^{\COE}_{N,j}(j^1)
  = (-1)^{j-1} C_{j-1} N^{1-2j}
  - (-4)^{j-1} N^{-2j} + \mathcal{O}(N^{-1-2j}),
  \quad C_{j-1} = \frac{1}{j} \binom{2j-2}{j-1}.
\end{equation}
We note that unlike \eqref{eqn:CUEproduct},
expansion~\eqref{eqn:COEproduct} does contain terms of order
$N^{h-2k-1}$ but they can be obtained from expanding the first
product.

We will need all functions $\WgO$ which have terms of order $N^{-k-2}$
or above.  According to \eqref{eqn:COEproduct}-\eqref{eqn:COEpart}
those are $\WgO(\id)$, $\WgO(2^1)$, $\WgO(2^2)$ and $\WgO(3^1)$.
The first two terms of each expansion can be obtained from the product in
\eqref{eqn:COEproduct}.  For the factors one can either use
\eqref{eqn:COEpart} or explicit formulas \cite[Table~II and
IV]{BroBee_jmp96},
\begin{align}
  \label{eq:WgO1explicit}
  \mathrm{Wg}^{\COE}_{N,1}(1^1) &= \frac1{N+1},\\
  \nonumber
  \mathrm{Wg}^{\COE}_{N,2}(2^1) &= \frac{-1}{N(N+1)(N+3)}, \\
  \nonumber
  \mathrm{Wg}^{\COE}_{N,3}(3^1) &= \frac{2}{(N-1)N(N+1)(N+3)(N+5)}.  
\end{align}
The results are given in \eqref{eq:WgO1}-\eqref{eq:WgO22}.

We now employ \eqref{eq:recursion_COE} with $\omega = \id$ to
determine the last term in the expansion
\begin{equation}
  \label{eq:WgO1_unknown}
  \WgO(\id) = \frac{1}{N^k} - \frac{k}{N^{k+1}}
  + \frac{t_k}{N^{k+2}} + \ldots
\end{equation}
Since $\omega' = (z\, k)$ gives rise to cycle
structure $2^1$ for any $z \neq k, \cc{k}$, we have
\begin{equation}
  \label{eq:WgOid}
  (N+1)\WgO(\id) + 2(k-1)\WgO(2^1)
  = \mathrm{Wg}^{\COE}_{N,k-1}(\id).
\end{equation}
Substituting expansions~\eqref{eq:WgO1} and \eqref{eq:WgO2} and
extracting the coefficient of $N^{-k-1}$ we get
\begin{equation*}
  t_k=t_{k-1} + 3k-2.
\end{equation*}
From \eqref{eq:WgO1explicit} we have $t_1=1$ resulting in
\begin{equation}
  \label{eq:tkO_answer}
  t_k = \sum_{j=1}^k (3j - 2) = \frac{3k^2-k}{2}.
\end{equation}

Much of the material of this section has first appeared in
\cite{BroBee_jmp96} albeit without derivation.  Careful mathematical
treatment was done more recently by Matsumoto and co-authors
\cite{Mat_rmta12,Mat_rmta13,ColMat_alea17}, whose notation we mostly
follow.  In particular, relation \eqref{eq:recursion_COE} first
appears in \cite[Eq.~(4.3)]{BroBee_jmp96} in terms of a different
notation (using the list of cycle lengths in $[\omega,T]$) but an
elegant proof is given in \cite[Lemma 5.3]{ColMat_alea17}.

Some other possibly relevant results that we do not use here are as
follows.  The coefficients of the asymptotic expansion of COE
Weingarten function as $N\to\infty$ were shown \cite[Proof of
Thm.~6.4]{BerKui_jmp13a} to count palindromic monotone (or primitive)
factorizations, by analogy with the monotone factorizations for CUE
\cite{MatNov_imrn13}.  This can be used for alternative derivation of
\eqref{eq:WgO1}-\eqref{eq:WgO22} together with \eqref{eq:tkO_answer}.  
COE Weingarten functions can be calculated as pseudo-inverse of power functions
\cite[Eq.~(2.5)]{Mat_rmta13} through their simple relation with
Weingarten function for the orthogonal group $O(N)$, see
\cite{Mat_rmta12} and \cite[Thm.~5.4]{ColMat_alea17}.  Uniform bound
in terms of $k$ obtained in \cite{ColMat_alea17} could be crucial to
considering the simultaneous limit $k,N \to \infty$ and accessing the
spectral form factor.

%%%%%%%%%%%%%%%%%%%%%%%%%%%%%%%%%%%%%%%%%%%%%%%%%%%%%%%%%%%%%%%%%%%%%
\section{Classifying regular permutations}
\label{sec:counting_regular}

In this section we prove Lemma~\ref{lem:regular_perm}, in several
steps.  It is also a warm up for the more involved proof of
Lemma~\ref{lem:irregular_perm} reported in
Appendix~\ref{sec:class_irreg}.

\begin{lemma}
\label{lem:T id}
If $[\omega,T]=\id$, then for $z_1,z_2 \in \mathcal{K}$, $\omega(z_1)=z_2$ if and only if $\omega(\overline{z_1})=\overline{z_2}$.
\end{lemma}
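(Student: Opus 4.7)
The plan is very short: I would translate the hypothesis $[\omega,T]=\id$ into the algebraic identity $\omega T = T\omega$, and then unpack what this says pointwise. Since $T=(1\,\cc1)(2\,\cc2)\cdots(k\,\cc{k})$, the action of $T$ on $\setK$ is exactly the bar involution $z\mapsto\cc z$ (with the convention $\cc{\cc{z}}=z$). Commutation with $T$ therefore says that for every $z\in\setK$,
\begin{equation*}
  \omega(\cc z) \;=\; \omega(T(z)) \;=\; T(\omega(z)) \;=\; \cc{\omega(z)}.
\end{equation*}

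Both directions of the stated biconditional follow at once from this identity. If $\omega(z_1)=z_2$, apply the relation at $z=z_1$ to obtain $\omega(\cc{z_1})=\cc{z_2}$. For the converse, apply the relation at $z=\cc{z_1}$ and use $\cc{\cc{z_1}}=z_1$ together with $\cc{\cc{z_2}}=z_2$: from $\omega(\cc{z_1})=\cc{z_2}$ we get $\omega(z_1)=\omega(\cc{\cc{z_1}})=\cc{\omega(\cc{z_1})}=\cc{\cc{z_2}}=z_2$.

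There is really no obstacle here; the only thing to be careful about is the convention that the bar is an involution on $\setK$ (so $\cc{\cc z}=z$), which is what makes the two directions symmetric. In spirit, the lemma is simply the statement that $\omega$ commutes with $T$ if and only if $\omega$ is equivariant with respect to the bar involution, and the proof is just unwrapping that equivariance.
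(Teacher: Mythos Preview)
Your proof is correct and takes essentially the same approach as the paper: both amount to observing that $[\omega,T]=\id$ means $\omega T = T\omega$, i.e.\ $\omega$ commutes with the bar involution, and then reading off the biconditional pointwise. The paper applies $[\omega,T]$ directly to $\cc{z_2}$ rather than first isolating the commutation relation, but the content is identical.
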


\begin{proof}
  Let $z_1,z_2 \in \mathcal{K}$ such that $\omega(z_1)=z_2$. Note that
  $\overline{\overline{z_1}}=z_1$. Observe that since
  $[\omega,T]=\id$,
  \begin{equation}
    \overline{z_2} = [ \omega,T] (\overline{z_2})
    = \omega T\omega^{-1} T^{-1}(\overline{z_2})
    = \omega T\omega^{-1}(z_2)
    = \omega T(z_1) = \omega(\overline{z_1}).
  \end{equation}
  Hence $\omega(\overline{z_1})=\overline{z_2}$. The other direction
  comes immediately from replacing $\overline{z_2}$ with $z_2$ in the
  calculation above.
\end{proof}

\begin{lemma}
  \label{lem:Q id}
  If $[\omega,Q]=\id$, then for $m,p\in [k]$,
  $\omega(\overline{m})=\overline{p}$ if and only if
  $\omega(m+1)=p+1$.
\end{lemma}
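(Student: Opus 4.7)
The plan is to exploit the fact that $[\omega,Q]=\id$ is equivalent to the commutation relation $\omega Q = Q \omega$, and then to apply this to an explicit element using the definition of $Q$ from equation~\eqref{eq:Qdef}. This is entirely parallel in spirit to the proof of Lemma~\ref{lem:T id}, with $Q$ playing the role that $T$ played there.

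First I would note that since $Q = (1\,\cc{k})(2\,\cc1)\cdots(k\,\cc{k-1})$ is an involution built from the transpositions $(b\,\cc{b-1})$ (with indices taken modulo $k$), we have the explicit action $Q(\cc{m}) = m+1$ and $Q(m+1) = \cc{m}$ for all $m \in [k]$, where $m+1$ is understood modulo $k$. This is the only structural fact about $Q$ that will enter the proof.

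Now assume $\omega(\cc{m}) = \cc{p}$. Using $\omega Q = Q\omega$ and the explicit action of $Q$, compute
\begin{equation*}
  \omega(m+1) = \omega\bigl(Q(\cc{m})\bigr) = Q\bigl(\omega(\cc{m})\bigr) = Q(\cc{p}) = p+1,
\end{equation*}
which gives one direction. Conversely, assuming $\omega(m+1)=p+1$, and using $Q^{-1}=Q$ (so again $\omega Q^{-1} = Q^{-1}\omega$), compute
\begin{equation*}
  \omega(\cc{m}) = \omega\bigl(Q(m+1)\bigr) = Q\bigl(\omega(m+1)\bigr) = Q(p+1) = \cc{p},
\end{equation*}
which gives the other direction.

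There is no real obstacle here; the entire content of the lemma is that the hypothesis $[\omega,Q]=\id$ is precisely the commutation $\omega Q=Q\omega$, and the statement of the lemma is just the restriction of this commutation to the pair of orbits $\{m+1,\cc{m}\}$ and $\{p+1,\cc{p}\}$ of $Q$. The only tiny care needed is the cyclic interpretation of $m+1$ when $m=k$, which is handled automatically by the inclusion of the transposition $(1\,\cc{k})$ in the definition of $Q$.
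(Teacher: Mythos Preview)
Your proof is correct and is essentially the same as the paper's: both reduce the hypothesis $[\omega,Q]=\id$ to the commutation $\omega Q=Q\omega$ and then apply $Q(\cc m)=m+1$, $Q(m+1)=\cc m$ to obtain each direction by a one-line chain of equalities. The paper simply writes the chain starting from $[\omega,Q](p+1)=p+1$ rather than first stating the commutation explicitly, but the computations are identical.
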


\begin{proof}
  Let $m,p\in [k]$ such that
  $\omega(\overline{m})=\overline{p}$. Observe that since
  $[\omega,Q]=\id$,
  \begin{equation}
    p+1=[\omega,Q](p+1)
    =\omega Q\omega^{-1} Q^{-1}(p+1)
    =\omega Q \omega^{-1} (\overline{p})
    =\omega Q(\overline{m})
    =\omega(m+1).
  \end{equation}
  Hence $\omega(m+1)=p+1$. The reverse direction is calculated similarly.
\end{proof}

\begin{lemma}
  $\Reg{\id}{\id}= \{ s^n:0\leq n\leq k-1  \}$.
\end{lemma}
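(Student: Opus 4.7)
The plan is to establish the two inclusions separately, relying on the previous two lemmas to convert the conditions $[\omega,T]=\id$ and $[\omega,Q]=\id$ into simple functional equations for the restriction of $\omega$ to $[k]$.

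First, I would take $\omega \in \Reg{\id}{\id}$ and extract its action on the two halves of $\setK$. Since $\omega$ is regular, $\omega|_{[k]}$ is some $\sigma \in S_k$. The condition $[\omega,T]=\id$ lets me apply Lemma~\ref{lem:T id}: it says $\omega(z_1)=z_2 \iff \omega(\cc{z_1})=\cc{z_2}$, so automatically $\omega(\cc m)=\cc{\sigma(m)}$ for every $m\in[k]$. Thus $\omega$ is completely determined by $\sigma$.

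Next, I would plug this description into Lemma~\ref{lem:Q id}. The condition $[\omega,Q]=\id$ says $\omega(\cc m)=\cc p \iff \omega(m+1)=p+1$ (with $+1$ taken mod $k$). Since $\omega(\cc m)=\cc{\sigma(m)}$, the left side holds with $p=\sigma(m)$, so I must have $\omega(m+1)=\sigma(m)+1$. But also $\omega(m+1)=\sigma(m+1)$ by regularity, yielding the recursion
\begin{equation*}
\sigma(m+1)=\sigma(m)+1 \qquad \text{for all } m\in[k].
\end{equation*}
Iterating from $m=1$ forces $\sigma(m)=m+n$ (mod $k$) with $n:=\sigma(1)-1$, so $\sigma$ is the cyclic shift by $n$ and hence $\omega=s^n$.

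For the reverse inclusion I would verify directly that each $s^n$ sits in $\Reg{\id}{\id}$. Regularity is immediate since $s$ preserves $[k]$ and $[\cc k]$. A short computation shows that $s$ commutes with both $T$ and $Q$ (checking on generators: $sT(b)=\cc{b+1}=Ts(b)$ and $sQ(b)=\cc b=Qs(b)$, with the analogous identities on $[\cc k]$), so $[s^n,T]=[s^n,Q]=\id$, placing the cycle types at $\id$ and $\id$. Finally, by Remark~\ref{rem:reg_always_contribute}, every regular $\omega$ has $\chi_\omega=1$, so $s^n \in \Reg{\id}{\id}$. There is no substantive obstacle here; the only delicate point is being careful that arithmetic is modulo $k$ and that the mirrored action on $[\cc k]$ is forced, not assumed, when arguing from the $T$-commutation condition.
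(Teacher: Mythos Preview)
Your proposal is correct and follows essentially the same approach as the paper: use Lemma~\ref{lem:T id} to mirror the action on $[\cc k]$, then Lemma~\ref{lem:Q id} to obtain the recursion $\sigma(m+1)=\sigma(m)+1$, forcing $\omega=s^n$. Your argument is in fact more complete than the paper's, which omits the explicit reverse inclusion and the check that $\chi_\omega=1$.
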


\begin{proof}
  Since all regular permutations $\omega$ map $[k]$ to $[k]$ and
  $[\overline{k}]$ to $[\overline{k}]$, we know for some $m,p\in [k]$,
  $\omega(m)=p$. Lemma \ref{lem:T id} implies that
  $\omega(\overline{m})=\overline{p}$, which by Lemma \ref{lem:Q id}
  implies $\omega(m+1)=p+1$, etc. This means that $\omega=s^{p-m}$.
\end{proof}

\begin{lemma}
  \label{lem:reg12}
  $\Reg{\id}{2^1} = \emptyset$.
\end{lemma}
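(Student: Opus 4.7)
My plan is to derive a parity obstruction. Fix $\omega \in \Reg{\id}{2^1}$. Since $\omega$ is regular and $[\omega,T]=\id$, Lemma~\ref{lem:T id} forces $\omega$ to act ``diagonally'': setting $\sigma := \omega|_{[k]} \in S_k$, we have $\omega(\cc b) = \cc{\sigma(b)}$ for every $b \in [k]$.

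Next I would translate the cycle-type requirement on $[\omega,Q]$ into a condition on $\sigma$ alone. Because $\omega$ is regular, the commutator $[\omega,Q] = \omega Q \omega^{-1} Q^{-1}$ preserves both $[k]$ and $[\cc k]$ (each factor moves an index in $[k]$ through $[\cc k]$ twice and back). Using $Q(b) = \cc{b-1}$ and $Q(\cc b) = b+1$ with indices mod $k$, a direct computation yields
\begin{equation*}
  [\omega,Q](m) \;=\; \sigma\bigl(\sigma^{-1}(m-1)+1\bigr)
  \;=\; \bigl(\sigma\, s\, \sigma^{-1}\, s^{-1}\bigr)(m)
  \qquad \text{for } m \in [k],
\end{equation*}
where, with a slight abuse of notation, $s$ denotes the $k$-cycle $(1\,2\,\cdots\,k)$ acting on $[k]$ (i.e.\ the restriction of the $s$ of \eqref{eq:sdef}). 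Thus $[\omega,Q]|_{[k]} = [\sigma,s]$. An analogous calculation yields $[\omega,Q]|_{[\cc k]} = [\sigma, s^{-1}]$ under the bar identification, and since $[\sigma,s^{-1}] = s^{-1}[\sigma,s]^{-1}s$ is conjugate to $[\sigma,s]^{-1}$, the two halves share the same cycle type.

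Consequently the half-type of $[\omega,Q]$ in the sense of Appendix~\ref{sec:summary_COE} coincides with the cycle type of the single permutation $[\sigma,s] \in S_k$. The requirement that this half-type be $2^1$ therefore forces $[\sigma,s]$ to be a single transposition in $S_k$. The punchline is parity: any commutator lies in $[S_k,S_k] = A_k$, so $[\sigma,s]$ is necessarily even, whereas a single transposition is odd. This contradiction proves $\Reg{\id}{2^1} = \emptyset$.

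I do not anticipate any real obstacle. The only bookkeeping point is the claim that the two halves of $[\omega,Q]$ contribute equal counts of cycles of each length, which is precisely the conjugacy observation between $[\sigma,s]$ and $[\sigma,s^{-1}]^{-1}$ above; everything else is a short calculation followed by the one-line parity argument.
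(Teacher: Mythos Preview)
Your proof is correct and takes a genuinely different route from the paper's. The paper offers two arguments: a graph-model analysis showing that a balanced directed $4$-cycle in $G_\omega$ (with no undirected chords) forces the dashed subgraph to split into two disjoint cycles, contradicting \eqref{eq:cycle_d}; and an algebraic argument tracking the two $k$-cycles of $\omega QT\omega^{-1}$ and producing a cycle that is too short. Both are essentially case analyses of how the single $4$-cycle can sit inside $G_\omega$.

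Your argument is cleaner and more conceptual. By identifying $[\omega,Q]\big|_{[k]}$ with the commutator $[\sigma,s]$ in $S_k$ (and noting the $[\cc k]$ half is conjugate to its inverse, hence of the same cycle type), you reduce the half-type condition to the requirement that $[\sigma,s]$ be a single transposition, which is killed instantly by parity. This buys a one-line proof and an immediate generalization: for regular $\omega$ with $[\omega,T]=\id$, the half-type of $[\omega,Q]$ must be an \emph{even} conjugacy class in $S_k$, so all sets of the form $\Reg{\id}{\beta}$ with $\beta$ odd (e.g.\ $2^1$, $2^3$, $4^1$, \ldots) are empty. The paper's graph argument, while longer here, is the template reused throughout Appendix~\ref{sec:class_irreg} for the irregular cases, where no such parity shortcut is available.
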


\begin{proof}
  Suppose there is an $\omega \in \Reg{\id}{2^1}$, which by the
  definition must contain exactly one directed cycle of length 4.
  Denote the solid edges involved in this cycle by $\cc{l} \to l+1$
  and $\cc{m-1} \to m$.  Since $\omega$ is regular, and the dashed
  directed edges in the graph must go from an element of $[\cc{k}]$ to
  an element of $[k]$, the dashed edges involved in the cycle of
  length 4 must be $\cc{l}\to m$ and $\cc{m-1} \to l+1$.  All other
  cycles, both directed and undirected, have length 2 and thus every
  other dashed directed (corresp.\ undirected) edge coincides with a
  sold directed (corresp.\ undirected) edge.  The graph therefore fits
  the shape in Figure~\ref{fig:square directed four-cycle} which is
  not permissible, since it is not possible for the dashed edges to
  form a single cycle.  Thus the set $\Reg{\id}{2^1}$ is empty.

  \begin{figure}[ht]
    \centering
    \includegraphics[scale=0.5]{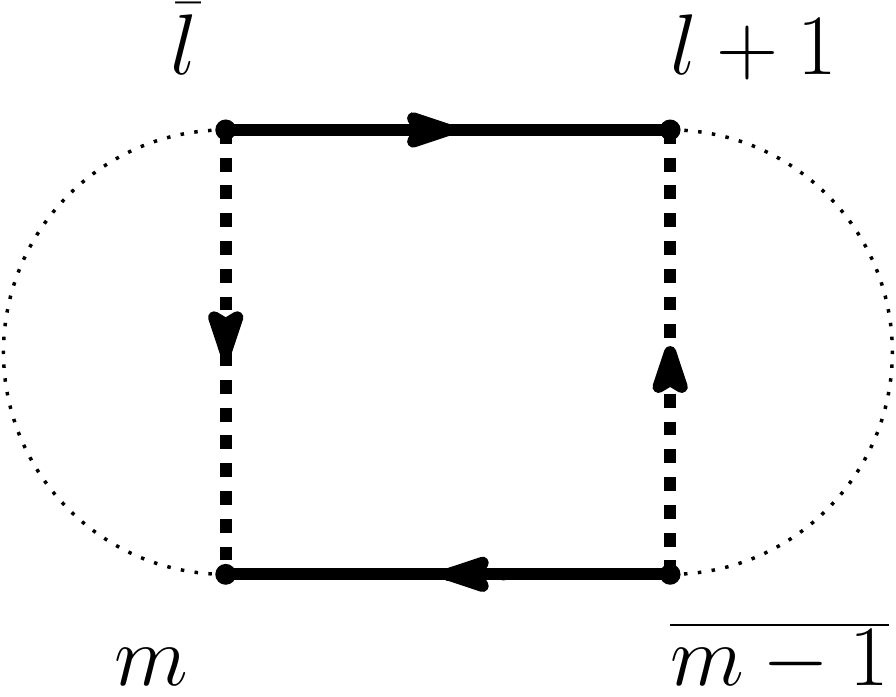}
    \caption{A candidate graph of $\omega \in
      \Reg{\id}{2^1}$.}
    \label{fig:square directed four-cycle}
  \end{figure}

  For comparison, we also provide an algebraic proof.  Notice that for
  all $\omega \in S_{2k}$, $\omega Q T \omega^{-1}\in S_{2k}$ consists
  of two cycles of length $k$, namely

  \begin{equation}
    \label{eq:omega_cycle}
    \big(\omega(1) \; \omega(2)\; \ldots \; \omega(k)\big)
    \big(\omega(\cc{k}) \; \ldots\; \omega(\cc2) \; \omega(\cc1) \big).
  \end{equation}
  Since $\alpha= \id$, $\omega T=T \omega$. Since $\omega$ is regular,
  $\omega Q \omega^{-1}Q^{-1}$ leaves $[\cc k]$ invariant. We know
  that $\omega Q \omega^{-1}Q^{-1}$ has only one nontrivial cycle in
  $[\cc k]$ and that this cycle has length 2. Let
  $(\cc q_1\; \cc q_2)$ be this cycle. Then we have
  \begin{align*}
    \omega Q (\cc q_1)
    &= Q \omega (\cc q_2),
      \qquad
      \omega Q (\cc q_2) = Q \omega (\cc q_1), \\
    \omega Q( \cc b)
    &= Q \omega (\cc b) \quad
      \mbox{for all } \cc b \in [\cc k] \setminus \{\cc q_1, \cc q_2\}.
  \end{align*}

  Introduce the notation $r_1 = \omega(q_1) = \omega T(\cc q_1)$ and
  $r_2=\omega(q_2) = \omega T(\cc q_2)$ and assume, without loss of
  generality, that $r_1 < r_2$.  We will next understand the action of
  $\omega Q T \omega^{-1}$ on $r \neq r_1,r_2$.  We have
  $T \omega^{-1}(r)\neq \cc q_1, \cc q_2$ and therefore
  \begin{align}
    \omega Q T \omega^{-1}(r)
    = Q \omega T \omega^{-1}(r)
    = QT \omega \omega^{-1}(r)
    = QT(r)
    = r+1.
  \end{align}
  On the other hand,
  \begin{align}
    \omega QT \omega^{-1}(r_2)
    = \omega Q (\cc q_2)
    = Q \omega (\cc q_1)
    = Q \omega T (q_1)
    = QT \omega (q_1)
    = QT(r_1)
    = r_1+1.
  \end{align}
  Then repeated compositions of $\omega Q T \omega^{-1}$ will produce the cycle:
  \begin{equation}
    \big(r_2 \; (r_1+1) \; (r_1+2) \; \ldots \; (r_2-1) \big).
  \end{equation}
  We have just obtained a cycle of $\omega QT \omega^{-1}$ that does
  not contain $r_1$, but according to~\eqref{eq:omega_cycle}, a single
  cycle must contain all elements of $[k]$. This is a contradiction
  and therefore $\Reg{\id}{2^1} = \emptyset$.
\end{proof}

\begin{remark}
  The algebraic proof of Lemma~\ref{lem:reg12} has certain advantages
  (it is easier to check that every eventuality is considered), but it
  is certainly longer and harder to construct.  In fact, it was
  constructed in the first place by mapping the ``graph-based'' proof
  into algebraic properties: for example, the ``dashed edges form a
  single cycle'' corresponds to the cycle structure of
  $\omega Q T \omega^{-1}$, equation~\eqref{eq:omega_cycle}.

  For this reason, we will use ``graph-based'' arguments for the even
  more sophisticated proofs of Appendix~\ref{sec:class_irreg}.
\end{remark}

%%%%%%%%%%%%%%%%%%%%%%%%%%%%%%%%%%%%%%%%%%%%%%%%%%%%%%%%%%%%%%%%%
\section{Classifying irregular permutations}
\label{sec:class_irreg}

In this section we will prove the assertions made in Lemma
\ref{lem:irregular_perm} . In the computations in this section, note
that all addition and subtraction performed on elements of
$\mathcal{K}:=[k]\cup [\overline{k}]$ is done modulo $k$, even though
not explicitly stated each time.

We will be classifying the irregular permutations into the sets
$\Irreg{\alpha}{\beta}$ based on the cycle structures of $[\omega,T]$
and $[\omega,Q]$, see Definition~\ref{def:RegIrreg}.  Since the cycle
structures of these commutators are determined by the types of cycles in
$G_{\omega}$, if $G_{\omega}$ and $G_{\omega'}$ are
isomorphic, then $\omega$ and $\omega'$ are in the same set
$\Irreg{\alpha}{\beta}$.  Thus we may further dissect our problem into
finding the representative permutation $\omega$ corresponding to each isomorphism class of
graphs coming from $\Irreg{\alpha}{\beta}$, and expressing the rest of
them as $s^n \omega s^m$ for $n,m\in [k]$ in accordance with
Lemma~\ref{lem:counting_cycles} part (\ref{item:graphs isomorphic}).

Our task is to classify graphs with the majority of
directed and undirected cycles having length two; those are formed by
dashed edges coinciding with solid edges.  The direction must also
coincide (if the edges are directed) to make the cycle balanced.
Dashed edges not coinciding with a solid edge will be called
\emph{chords} (see Figure~\ref{fig:balanced_cycles} for the visual
reason behind the term \emph{chord}).  To put it another way, chords
are the dashed edges belonging to a cycle of length greater than 2.

We will also frequently appeal to another property of $G_\omega$
discussed in Section~\ref{sec:graph_model}: the dashed subgraph forms
a single cycle with all directed edges pointing in the same direction
along the cycle.

\begin{lemma}
  \label{lem:irreg at least one chord}
  $\Irreg{\alpha}{\id}=\emptyset$ for any $\alpha$.
\end{lemma}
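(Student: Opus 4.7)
The plan is to show that the hypotheses defining membership in $\Irreg{\alpha}{\id}$ --- in particular $\chi_\omega = 1$ together with $[\omega,Q]$ having cycle half-type $\id$ --- together force $\omega$ to be regular, contradicting the irregularity demanded by the set. The whole argument will be read off the graph model $G_\omega$, so no new tools beyond Lemma~\ref{lem:counting_cycles} are needed.

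First I would invoke part~(\ref{item:dir_cycles}) of Lemma~\ref{lem:counting_cycles}: the condition that the cycle half-type of $[\omega,Q]$ is $\id$ translates to $[\omega,Q] = \id$ in $S_{2k}$, so every directed cycle of $G_\omega$ has length exactly $2$. Each such $2$-cycle consists of one solid and one dashed directed edge spanning the same pair of endpoints. By the definition of $G_\omega$, every solid directed edge has the form $\cc{b} \to b+1$ with $b \in [k]$, i.e.\ points from $[\cc{k}]$ into $[k]$.

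Next I would use $\chi_\omega = 1$ and part~(\ref{item:balanced}) of Lemma~\ref{lem:counting_cycles} to conclude that each $2$-cycle is balanced. Following the signs $\epsilon_j$ around a $2$-cycle as in the proof of that lemma, balance forces the solid and dashed edges to carry the same directional label: if they pointed oppositely, $\epsilon_1 + \epsilon_2 = \pm 2 \neq 0$. Consequently every dashed directed edge of $G_\omega$ also points from $[\cc{k}]$ to $[k]$. Since the dashed directed edges are precisely $\omega(\cc{b'}) \to \omega(b'+1)$ for $b' \in [k]$, this yields $\omega(\cc{b'}) \in [\cc{k}]$ and $\omega(b'+1) \in [k]$ for every $b' \in [k]$, i.e.\ $\omega$ preserves the partition $\setK = [k] \cup [\cc{k}]$. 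Thus $\omega$ is regular, contradicting the hypothesis of irregularity and yielding $\Irreg{\alpha}{\id} = \emptyset$ for every $\alpha$.

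The only delicate point I anticipate is the correct sign convention for ``balanced'' in a length-$2$ cycle, where one can easily reverse the intended interpretation; the rest is a direct bookkeeping translation between the graph model and the underlying permutation.
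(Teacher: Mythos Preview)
Your proof is correct and is essentially the contrapositive of the paper's argument: the paper assumes irregularity, picks the dashed directed edge landing at $\omega(q)=\cc l$, and shows it must pair with the solid edge $\cc l\to l+1$ to form an unbalanced $2$-cycle; you instead start from ``all directed $2$-cycles are balanced'' and deduce that every dashed directed edge inherits the $[\cc k]\to[k]$ orientation of its solid partner, forcing regularity. The key observation in both is identical --- in a balanced directed $2$-cycle the two edges must point the same way --- and your worry about the sign convention is resolved correctly.
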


\begin{proof}
  Since $\beta=\id$ we have no directed chords.  An irregular
  permutation $\omega$, by definition, satisfies
  $\omega(q)=\overline{l}$ for some $q\in [k]$,
  $\overline{l}\in [\overline{k}]$.  Consider the directed dashed edge
  $(\omega(\overline{q-1}) \to \omega(q)) = (\omega(\overline{q-1})
  \to \overline{l})$. In order to not create a chord, it would have to
  match an existing solid directed edge, but the only solid directed
  edge incident to $\overline{l}$ is $(\overline{l}\to l+1)$, which
  points in the opposite direction. This would make an unbalanced
  2-cycle, so it is not allowed. Hence, when considering irregular
  permutations, we will never have $[\omega,Q]=\id$.
\end{proof}

\begin{lemma}
  \label{lem:irreg id and 2^1}
  $\Irreg{\id}{2^1}=\{ s^n(1 \hspace{1mm} \overline{1})s^m:n,m\in [k]
  \}$, $|\Irreg{\id}{2^1}|=k^2$.
\end{lemma}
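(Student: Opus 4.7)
The plan is to analyze the graph model $G_\omega$ directly. The hypothesis $\alpha=\id$ means $[\omega,T]=\id$, i.e., $\omega T=T\omega$, so every dashed undirected edge of $G_\omega$ coincides with the solid undirected edge at the same vertex and $\omega(\cc z)=\cc{\omega(z)}$ for all $z\in\setK$. All undirected cycles of $G_\omega$ are thus 2-cycles, and the only potential chord edges are dashed directed ones. The condition $\beta=2^1$ records the half-cycle structure of $[\omega,Q]$, so $G_\omega$ contains exactly one directed cycle of length greater than 2, namely a single 4-cycle, amounting to exactly two chord dashed directed edges.

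Next I would characterize when the dashed directed edge at position $b\in[k]$ coincides with a solid directed edge. Using $\omega T=T\omega$, that edge runs from $\cc{\omega(b)}$ to $\omega(b+1)$ and matches the solid edge $\cc c\to c+1$ iff $\omega(b),\omega(b+1)\in[k]$ and $\omega(b+1)=\omega(b)+1$. Set $B=\{b\in[k]:\omega(b)\in[\cc k]\}$; irregularity gives $B\neq\emptyset$. The failing-index set at which chords occur contains $B\cup(B-1)$, so having exactly two chord edges forces $|B\cup(B-1)|\leq 2$ together with $\omega(b+1)=\omega(b)+1$ at every $b\notin B\cup(B-1)$. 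A brief pigeonhole argument shows that for $k\geq 3$ no pair of distinct $b_1,b_2\in[k]$ can satisfy $|\{b_1,b_2,b_1-1,b_2-1\}|\leq 2$, so $|B|=1$; write $B=\{b^*\}$.

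Then $\omega(b+1)=\omega(b)+1$ on the arc $[k]\setminus\{b^*-1,b^*\}$; iterating yields $\omega(b)=b+n$ for $b\neq b^*$ and some $n\in[k]$. Since $\omega$ is a permutation of $\setK$ commuting with $T$, the remaining value is determined: $\omega(b^*)=\cc{b^*+n}$. Hence $\omega=s^n(b^*\,\cc{b^*})$. Using the conjugation identity $(b^*\,\cc{b^*})=s^{b^*-1}(1\,\cc 1)s^{1-b^*}$, one obtains $\omega=s^{n+b^*-1}(1\,\cc 1)s^{1-b^*}$, so every $\omega\in\Irreg{\id}{2^1}$ has the claimed form.

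Finally, I would verify the converse and count. A direct inspection of $G_{(1\,\cc 1)}$ confirms that the two chord dashed edges arise from $b=1$ (the edge $1\to 2$) and $b=k$ (the edge $\cc k\to\cc 1$), and together with the solid edges $\cc k\to 1$ and $\cc 1\to 2$ they form a single balanced 4-cycle on $\{1,2,\cc 1,\cc k\}$, confirming $(1\,\cc 1)\in\Irreg{\id}{2^1}$. By Lemma~\ref{lem:counting_cycles}(\ref{item:graphs isomorphic}) the whole orbit $\{s^n(1\,\cc 1)s^m\}$ lies in $\Irreg{\id}{2^1}$. To count distinct members I would compute $s^n(1\,\cc 1)s^m=(n+1\,\cc{n+1})s^{n+m}$; the transposition factor recovers $n\bmod k$ and the power of $s$ then recovers $m\bmod k$, so the map $[k]\times[k]\to\Irreg{\id}{2^1}$ is injective and $|\Irreg{\id}{2^1}|=k^2$. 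The main obstacle I anticipate is the pigeonhole step bounding $|B|$, which requires careful modular arithmetic and a separate small-$k$ sanity check.
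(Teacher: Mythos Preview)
Your argument is correct and reaches the same conclusion as the paper, but by a noticeably different route. The paper works pictorially: it starts from the forced chord $\omega(\cc{q-1})\to\cc l$, draws the two candidate completions of the balanced $4$-cycle (Figures~\ref{fig:irreg directed 4-cycle shape wrong}--\ref{fig:directed four-cycle}), and then follows the dashed walk using $[\omega,T]=\id$ to squeeze the configuration down to Figure~\ref{fig:tau identity and one equality}, i.e.\ to $m=l$. Your approach is algebraic: you encode ``non-matching'' dashed directed edges by the set $B=\{b:\omega(b)\in[\cc k]\}$, observe that the failing indices contain $B\cup(B-1)$, and use a pigeonhole step to get $|B|=1$, after which the shift relation $\omega(b+1)=\omega(b)+1$ on the complement pins down $\omega$ completely. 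This avoids the case-by-case picture analysis and makes the ``exactly two chords'' constraint do all the work; it also gives a clean injectivity argument for the count via $s^n(1\,\cc1)s^m=(n{+}1\ \cc{n{+}1})\,s^{n+m}$, whereas the paper simply asserts distinctness.

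One point to tighten: your pigeonhole step only rules out $|B|\ge 2$ for $k\ge 3$. For $k=2$ one has $B=\{1,2\}$ with $B-1=B$, so $|B\cup(B-1)|=2$ and the counting bound alone does not exclude it. You correctly flag this, and indeed the two $\omega$ with $|B|=2$ (namely $T$ and $Q$) both produce unbalanced $2$-cycles in $G_\omega$, hence $\chi_\omega=0$; adding this one-line check completes the argument. It is also worth saying explicitly that with $\chi_\omega=1$ every ``non-matching'' dashed directed edge is genuinely a chord (i.e.\ lies in a cycle of length~$>2$), since the only alternative---sharing the vertex pair with a solid edge but in the opposite direction---is an unbalanced $2$-cycle and hence excluded. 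You use this implicitly when equating ``two chords'' with ``two failing indices''.
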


\begin{proof}
  If $\omega \in \Irreg{\id}{2^1}$, the graph model $G_\omega$ has
  exactly one directed 4-cycle and no undirected cycles larger than
  2. We know from the proof of Lemma~\ref{lem:irreg at least one
    chord} that we have at least the directed chord
  $(\omega(\overline{q-1}) \to \overline{l})$. Since we have only one
  directed 4-cycle, this chord must be part of it. Certainly
  $\omega(\overline{q-1})$ must equal either $m$ or $\overline{m-1}$
  for some $m\in [k]$. We find the former case cannot work because we
  would have have 3 arrows pointing in the same direction in a cycle
  of length 4, making a balanced cycle impossible, as shown in Figure
  \ref{fig:irreg directed 4-cycle shape wrong}.  For the latter case,
  the diagram looks like Figure \ref{fig:irreg directed 4-cycle shape
    right}, an arrangement which can be completed to become a balanced
  cycle.

  \begin{figure}
    \centering
    \begin{subfigure}[b]{0.4\textwidth}
      \centering
      \includegraphics[scale=0.5]{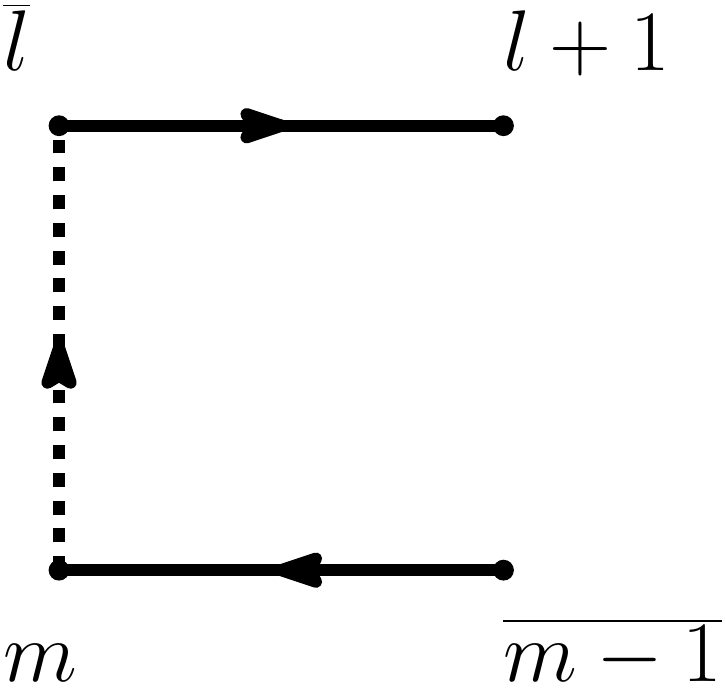}
      \caption{balanced 4-cycle cannot be formed}
      \label{fig:irreg directed 4-cycle shape wrong}
    \end{subfigure}
    \begin{subfigure}[b]{0.4\textwidth}
      \centering
      \includegraphics[scale=0.5]{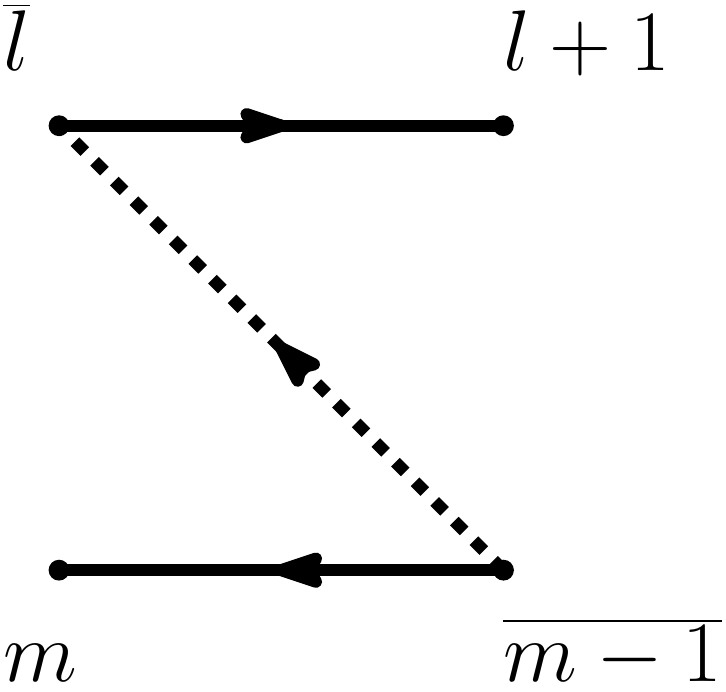}
      \caption{balanced 4-cycle can be formed}
      \label{fig:irreg directed 4-cycle shape right}
    \end{subfigure}
    \caption{Two possibilities for the chord $(\omega(\overline{q-1})
      \to \overline{l})$.}
  \end{figure}

  The unique way to complete Figure~\ref{fig:irreg directed 4-cycle
    shape right} to produce a balanced 4-cycle is shown in
  Figure~\ref{fig:directed four-cycle}.  Note that we exclude the case
  $m=l+1$ since it produces loops and makes no sense in our graph
  model.
  
  \begin{figure}
    \centering
    \includegraphics[scale=0.5]{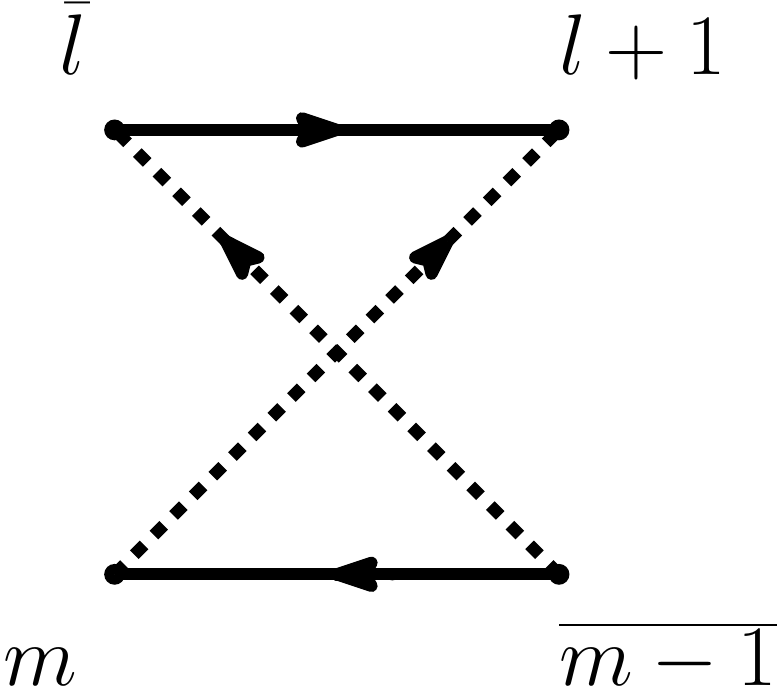}
    \caption{A valid 4-cycle with the chord $(m \to \cc{l})$.}
    \label{fig:directed four-cycle}
  \end{figure}

  \begin{figure}
    \centering
    \begin{subfigure}[b]{0.45\textwidth}
      \centering
      \includegraphics[scale=0.5]{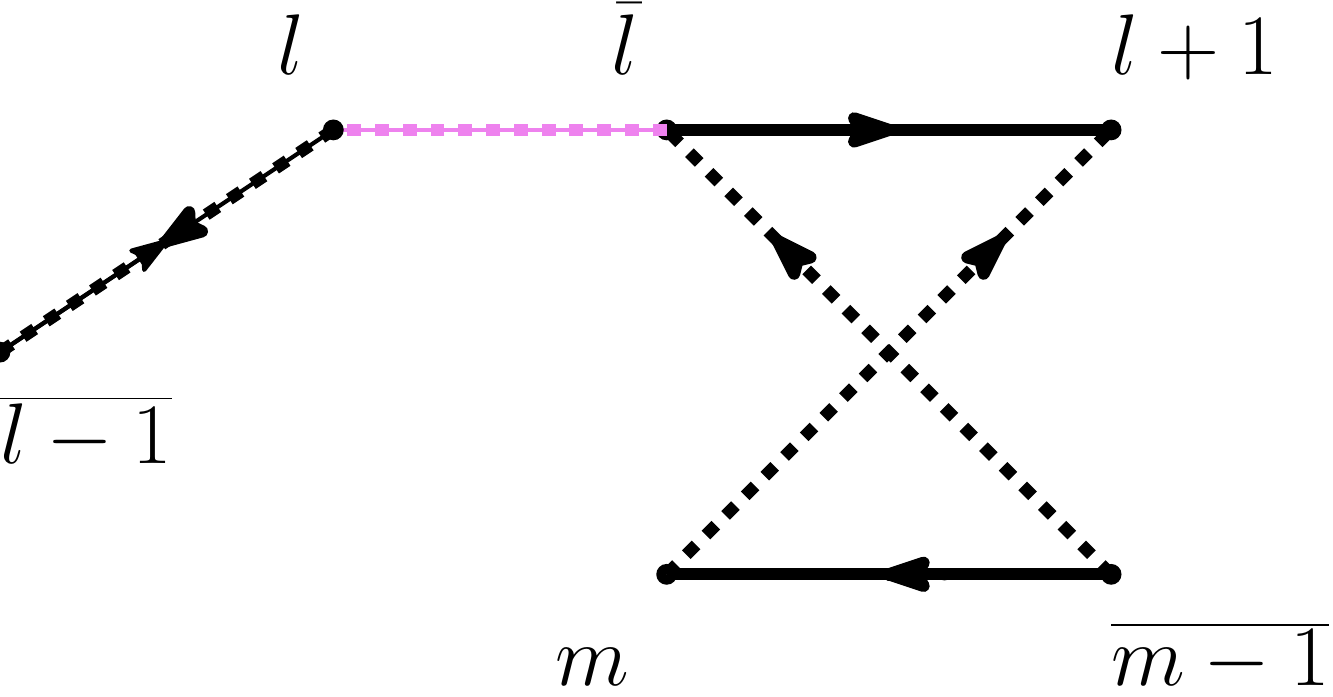}
      \caption{arc completion is not possible}
      \label{fig:directed four-cycle with bad path}
    \end{subfigure}
    \begin{subfigure}[b]{0.45\textwidth}
      \centering
      \includegraphics[scale=0.5]{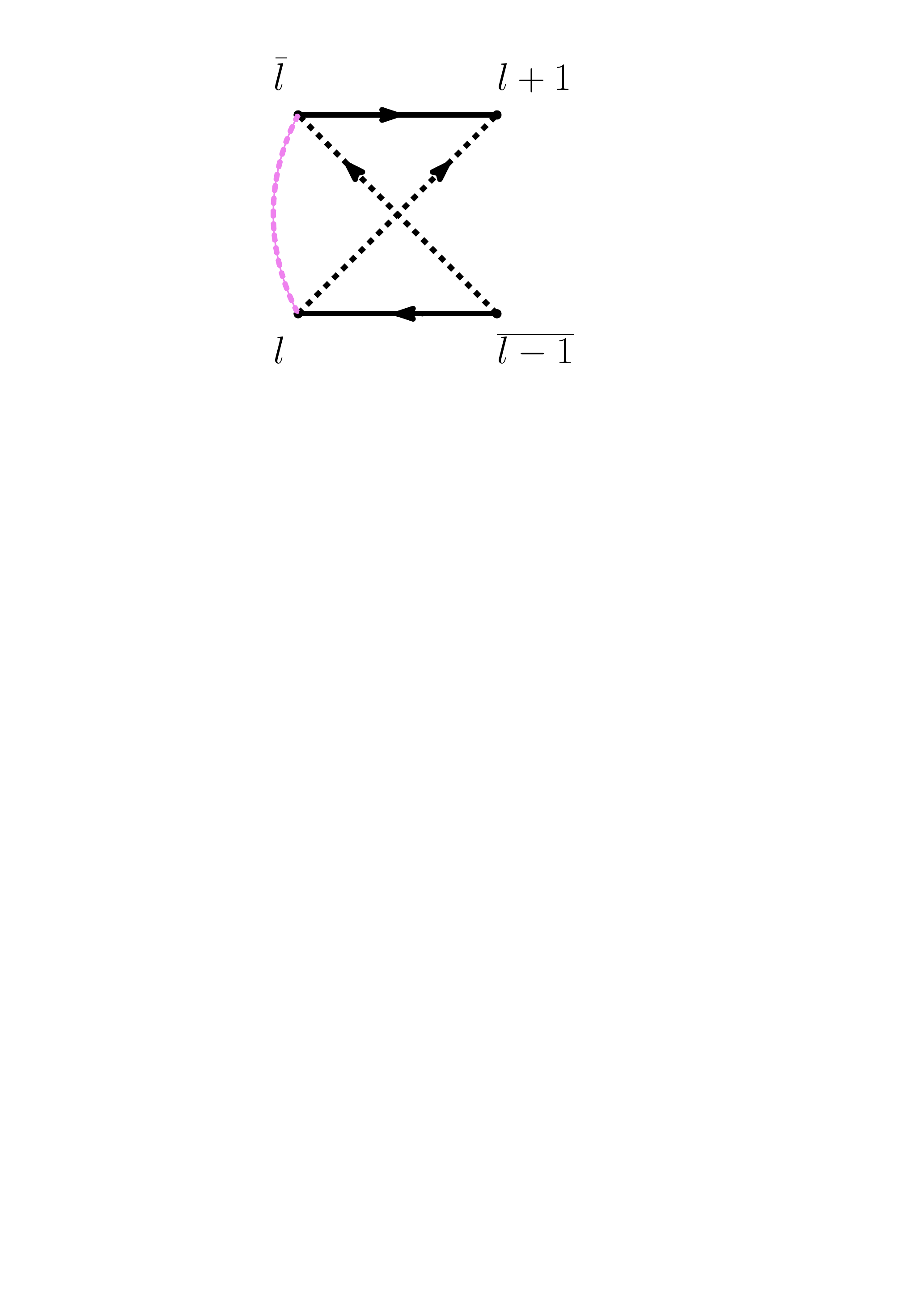}
      \caption{one-edge arc}
      \label{fig:tau identity and one equality}
    \end{subfigure}
    \caption{Completing the arc from $\cc{l}$ to $m$.}
    \label{fig:valid4cycle}
  \end{figure}

  Recall that $G_{\omega}$ must have a closed path traversing all
  dashed edges that alternates between directed and undirected edges
  where all directed edges point in the same direction along the
  path. Let us now consider the consequences of
  $[\omega,T]=\id$. Since $\omega(q)=\overline{l}$, there is an
  undirected dashed edge $(\overline{l} - \omega(\overline{q}))$. In
  order to produce no undirected chords,
  $(\overline{l},\omega(\overline{q}))$ must match with the existing
  undirected edge attached to $\overline{l}$, that is
  $(l,\overline{l})$. In other words, we must have
  $\omega(\overline{q})=l$. Since a directed dashed edge goes into
  $\overline{l}$, a directed dashed edge must also leave $l$. If this
  dashed directed edge follows the solid directed edge connected to
  $l$, the two edges will be directionally imbalanced, as in Figure
  \ref{fig:directed four-cycle with bad path}.  Thus there must be a
  directed chord leaving $l$, but since we already have the maximum
  allowed number of chords, we must have $m=l$, as shown in Figure
  \ref{fig:tau identity and one equality}.

  From this we see there can be only one isomorphism class of graphs
  in this set of permutations. Notice that the graph above can be
  produced by
  $\omega = (l \hspace{1mm} \overline{l})=s^l (1 \hspace{1mm} \cc
  1)s^{-l}$. Thus we will characterize our set of permutations as
  $\Irreg{\id}{2^1}=\{ s^n(1\,\cc{1})s^m:n,m\in [k] \}$. Since the
  permutations $s^n(1\,\cc{1})s^m$ are different for distinct values of
  $m$ and $n$, we may conclude that $|\Irreg{\id}{2^1}|=k^2$.
\end{proof}

\begin{lemma}
  \label{lem:irreg2_2}
  $\Irreg{2^1}{2^1}= \{ s^n(b \hspace{1mm} \cc 1)s^m
  \colon n,m\in [k], 3\leq b \leq k \}$,
  $\left|\Irreg{2^1}{2^1}\right|=k^2(k-2)$.
\end{lemma}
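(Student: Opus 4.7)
The plan is to adapt the graph-based analysis of Lemma~\ref{lem:irreg id and 2^1} to permutations whose graph model $G_\omega$ contains \emph{one directed $4$-cycle and one undirected $4$-cycle} (in addition to $k-2$ two-cycles of each type). Throughout, the isomorphism action $\omega\mapsto s^n\omega s^p$ of Lemma~\ref{lem:counting_cycles}~(\ref{item:graphs isomorphic}) reduces the task to identifying a single representative per isomorphism class.

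First, I would reuse the directed-cycle argument from Lemma~\ref{lem:irreg id and 2^1}. If $\omega(q)=\cc l$ for some $q,l\in[k]$, then $(\omega(\cc{q-1})\to\cc l)$ is a dashed directed chord, and balancing the unique directed $4$-cycle forces $\omega(\cc{q-1})=\cc{m-1}$ for some $m\in[k]$. The resulting directed $4$-cycle has vertex set $\{\cc l,l+1,\cc{m-1},m\}$, solid edges $(\cc l\to l+1)$ and $(\cc{m-1}\to m)$, and dashed chords $(\cc{m-1}\to\cc l)$ and $(m\to l+1)$. The second chord records a further irregularity: there exists $q'\in[k]$ with $\omega(\cc{q'})=m$ and $\omega(q'+1)=l+1$.

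Next, I would analyze the undirected $4$-cycle. Its two solid edges are $(c-\cc c)$ and $(c'-\cc{c'})$ for some $c,c'\in[k]$, joined by two dashed chords. On the four vertices $\{c,\cc c,c',\cc{c'}\}$ only two chord configurations close into a $4$-cycle: the \emph{parallel} one $\{(c-c'),(\cc c-\cc{c'})\}$ and the \emph{crossing} one $\{(c-\cc{c'}),(\cc c-c')\}$. Using the irregular images $\omega(q)=\cc l$ and $\omega(\cc{q'})=m$ already extracted, together with the requirement that $[\omega,T]$ have support of size exactly four, I would show $\{c,c'\}=\{l,m\}$ and rule out the crossing configuration: the candidate chord $(\cc l-m)$ would force $\omega(\cc m)=m$ in addition to $\omega(\cc{q'})=m$, and bijectivity of $\omega$ combined with the fixed support size of $[\omega,T]$ precludes this (the minimal completion consistent with bijectivity and the chord pattern forces $[\omega,T]$ to have support of size at least eight, contradicting $\alpha=2^1$). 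The parallel configuration then forces $\omega$ to act as the transposition $(m\,\cc l)$ on $\{l,\cc l,m,\cc m\}$ and as the identity elsewhere.

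Normalizing $l=1$ by the $s$-action yields the canonical form $\omega=(b\,\cc 1)$ with $b:=m$. Requiring each $4$-cycle to have four distinct vertices excludes $b=1$ (undirected cycle collapses) and $b=2$ (directed cycle collapses, since then $\cc{m-1}=\cc 1$; indeed $\omega=(2\,\cc 1)$ gives $[\omega,Q]=\id$ by direct check, so this $\omega$ lies in $\Irreg{2^1}{\id}$, which is empty by Lemma~\ref{lem:irreg at least one chord}). Hence $b\in\{3,\ldots,k\}$, giving $k-2$ canonical representatives. A brief check shows $s^\alpha(b\,\cc 1)s^\beta=(b\,\cc 1)$ forces $\alpha\equiv\beta\equiv 0\pmod k$, so the $k^2$ permutations $s^n(b\,\cc 1)s^p$ are distinct for each $b$, and different values of $b$ give permutations with different supports. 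Multiplying yields $|\Irreg{2^1}{2^1}|=k^2(k-2)$. The main technical obstacle is the undirected-cycle analysis: eliminating both the crossing configuration and the configurations with $\{c,c'\}\neq\{l,m\}$ requires careful use of the bijectivity of $\omega$ together with the exact support size of $[\omega,T]$.
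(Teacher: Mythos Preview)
Your outline parallels the paper's proof closely: both start from the balanced directed $4$-cycle on $\{\cc l,l{+}1,\cc{m{-}1},m\}$ established in Lemma~\ref{lem:irreg id and 2^1}, and both then argue that the undirected $4$-cycle must sit on $\{l,\cc l,m,\cc m\}$ in the ``parallel'' configuration, yielding $\omega=(m\,\cc l)$. The counting and the exclusion of $b=1,2$ are handled correctly.

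The genuine gap is in how you propose to pin down the undirected $4$-cycle. Your argument for eliminating the crossing configuration---``the chord $(\cc l-m)$ would force $\omega(\cc m)=m$''---is not right: that chord gives $\omega(\cc q)=m$ (since $\omega(q)=\cc l$), which combined with $\omega(\cc{q'})=m$ only yields $q=q'$, not $\omega(\cc m)=m$. More importantly, neither bijectivity nor the support size of $[\omega,T]$ alone suffices to force $\{c,c'\}=\{l,m\}$ or to exclude the crossing pattern. The paper's tool for both steps is one you never invoke: the dashed edges of $G_\omega$ must form a \emph{single} cycle in which all directed edges point the same way (Section~\ref{sec:graph_model}). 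The paper first uses this to rule out the case $m=l$ with an undirected $4$-cycle elsewhere (the dashed path would be disconnected or misoriented), then, for $m\neq l$, follows the dashed path out of $\cc l$ and into $m$ to show that undirected chords must attach at precisely those vertices; finally the single-cycle requirement kills the crossing chord $(\cc l-m)$ because the resulting dashed subgraph cannot close up into one cycle without extra chords. If you rework your ``technical obstacle'' around this single-dashed-cycle constraint rather than support-size bookkeeping, the argument goes through cleanly.
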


\begin{proof}
  We have already shown that having $\beta=2^1$ will give us a
  directed cycle of the type shown in Figure \ref{fig:directed
    four-cycle}. Since we now have $\alpha= 2^1$, we will also have an
  undirected 4-cycle.

 We will first consider the possibilities when we start with the specific directed 4-cycle type given in Figure \ref{fig:tau identity and one equality}, that is when $m=l$. Is it possible to have an undirected 4-cycle somewhere else in the diagram without creating more directed chords? Figures \ref{fig:undirected 4-cycle doesn't work 1} and \ref{fig:undirected 4-cycle doesn't work 2} show the only two possible types of undirected 4-cycles that could appear. We observe that in the first, a closed walk on dashed edges would not have all directed edges pointing the same way along the path and that in the second, we cannot have a closed walk on the dashed edges at all. Hence for a permutation from $\Irreg{2^1}{2^1}$, in Figure \ref{fig:directed four-cycle} there is at least one directed edge between $m$ and $\overline{l}$. In other words,  $m\neq l$.

\begin{figure}
\centering
\begin{subfigure}[b]{0.48\textwidth}
\centering
\includegraphics[scale=0.5]{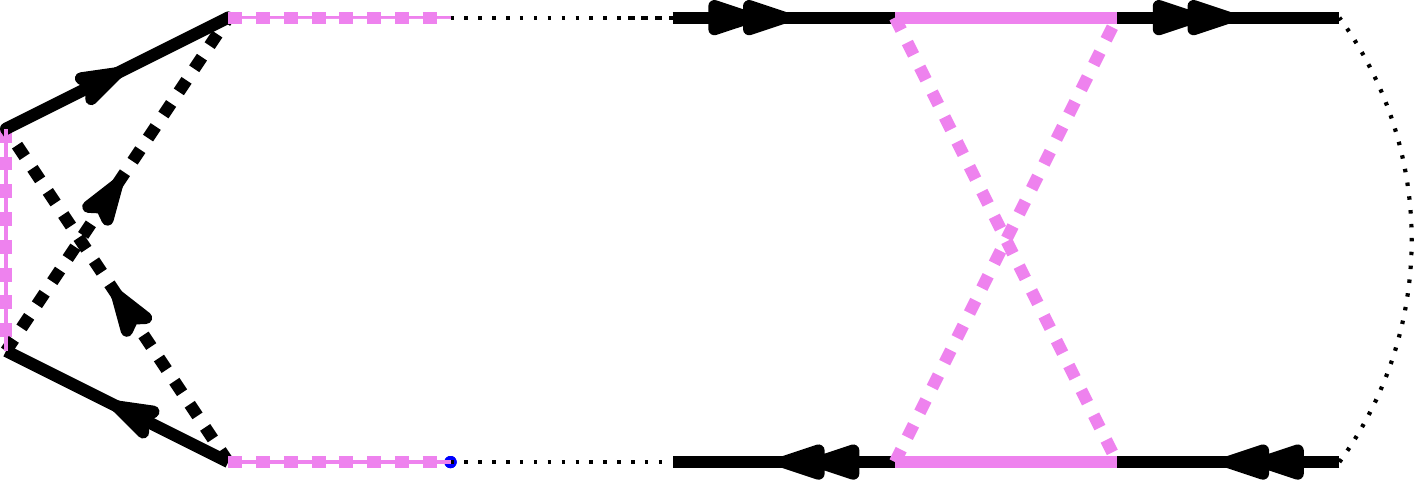}
\caption{wrong directions along dashed cycle}
\label{fig:undirected 4-cycle doesn't work 1}
\end{subfigure}
\hfill
\begin{subfigure}[b]{0.48\textwidth}
\centering
\includegraphics[scale=0.5]{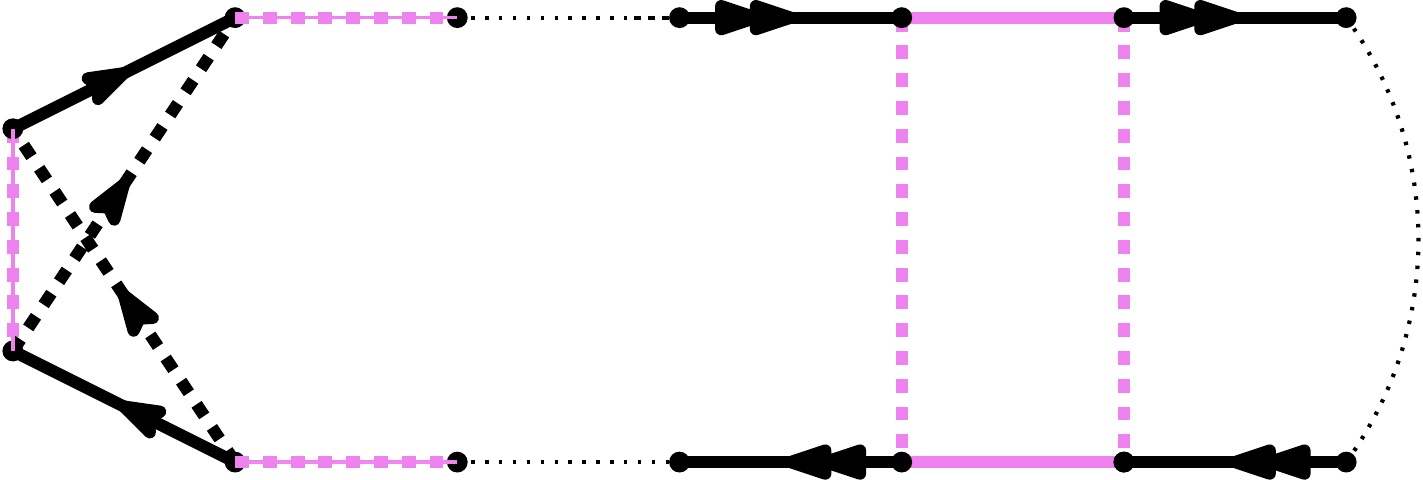}
\caption{disconnected dashed cycle}
\label{fig:undirected 4-cycle doesn't work 2}
\end{subfigure}
\caption{Invalid configurations from the proof of Lemma~\ref{lem:irreg2_2}.}
\end{figure}

Let us consider the edges to the left of the directed cycle, beginning with the dashed undirected edge incident to $\cc l$. If this edge is not a chord, we would have Figure \ref{fig:directed four-cycle with bad path} since no more directed edges can be chords either. We have already noted that the dashed path is impossible because of the unbalanced 2-cycle. Similarly, when assuming the undirected dashed edge incident to $m$ is not a chord, we obtain Figure \ref{fig:directed four-cycle with bad path 2} which also contains an illegal path. Thus, in order to have a permissible dashed path, we need to have undirected dashed chords attached to both $m$ and $\overline{l}$.

\begin{figure}
  \centering
  \includegraphics[scale=0.5]{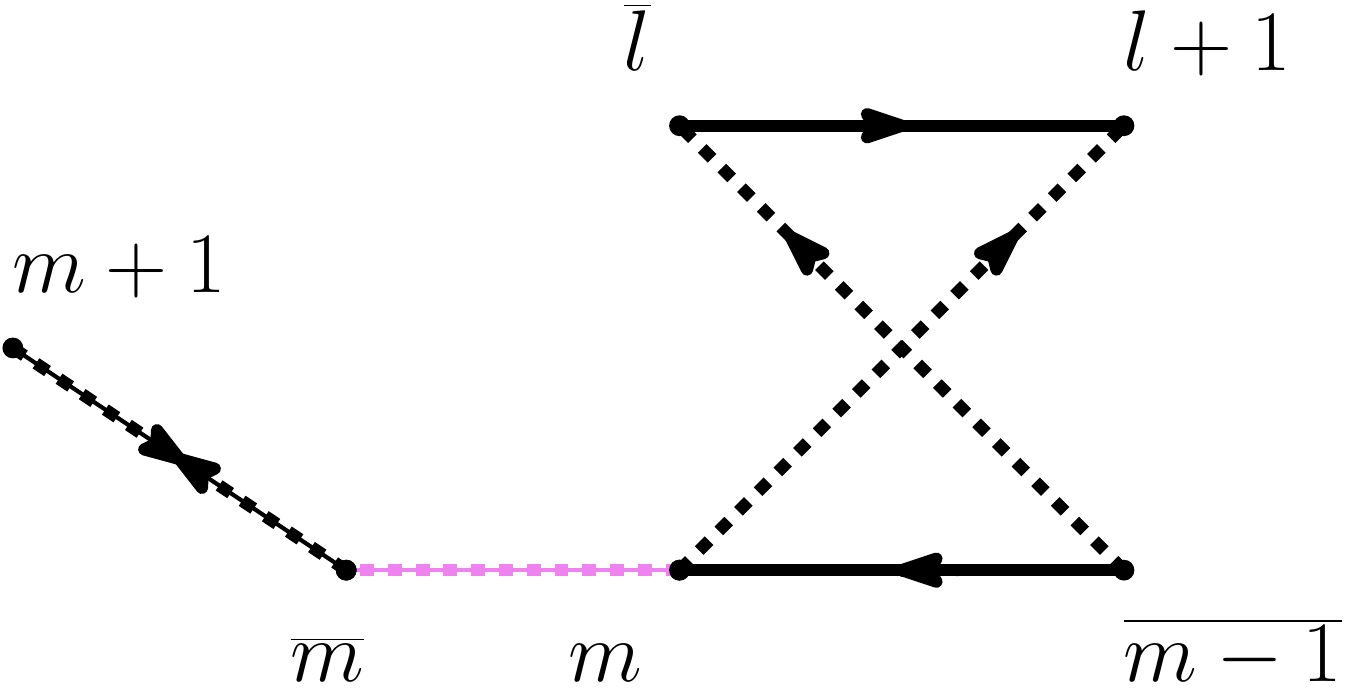}
  \caption{Impossible arc from $\cc{l}$ to $m$.}
  \label{fig:directed four-cycle with bad path 2}
\end{figure}

We now have two options: one is to have the dashed chord $(\overline{l},m)$, and the other is to have two different chords proceeding from $\overline{l}$ and $m$. The first option is shown in Figure \ref{fig:almost m l-bar structure}. Notice that we have only one way to finish the undirected 4-cycle, but this configuration admits no way of having only one closed dashed path without creating more chords, which we cannot have.  Thus the only possible completion of the undirected 4-cycle is the structure shown in Figure \ref{fig:m l-bar structure}, which corresponds to the permutation $\omega=(m \hspace{1mm} \overline{l})$ for $m \neq l$ and $m \neq l+1$. 

\begin{figure}
  \centering
  \begin{subfigure}[b]{0.45\textwidth}
    \centering
    \includegraphics[scale=0.5]{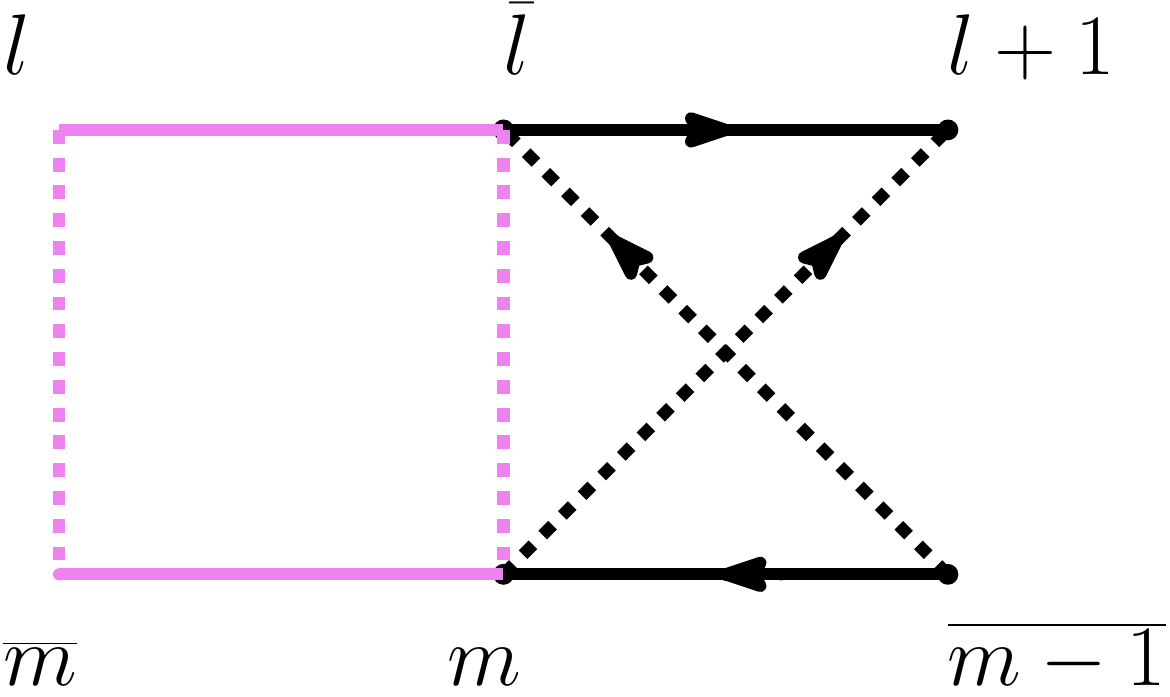}
    \caption{Dashed cycle is disconnected.}
    \label{fig:almost m l-bar structure}
  \end{subfigure}
  \begin{subfigure}[b]{0.45\textwidth}
    \centering
    \includegraphics[scale=0.5]{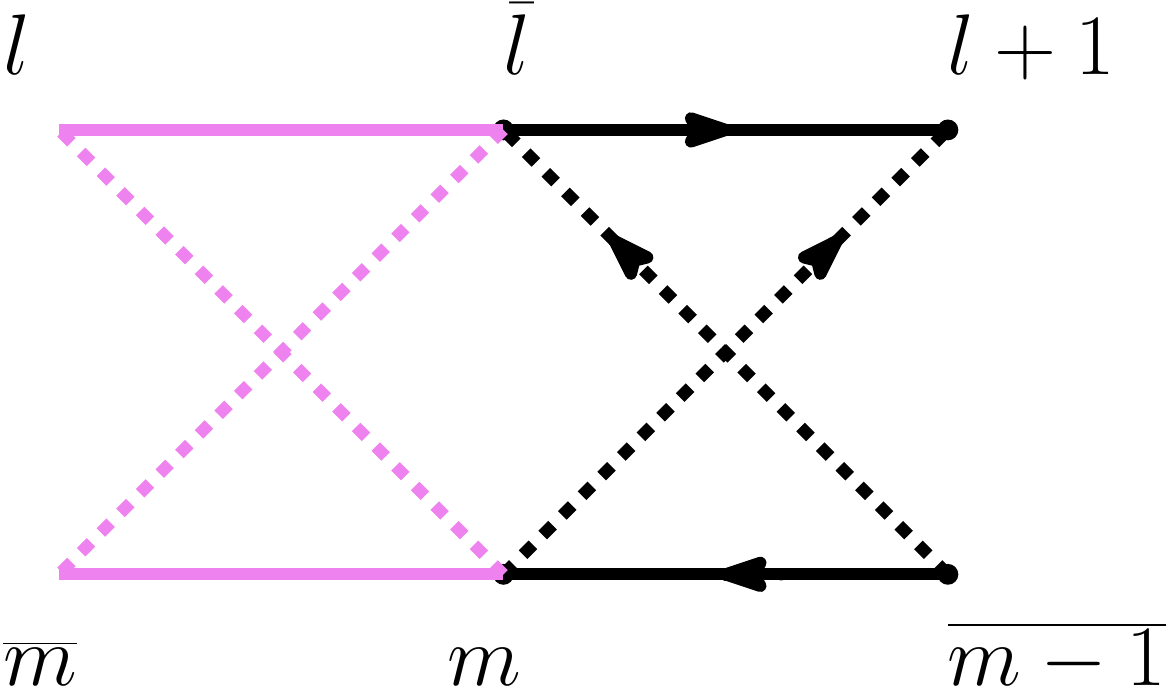}
    \caption{Valid configuration.}
    \label{fig:m l-bar structure}
  \end{subfigure}
  \caption{Adding the undirected cycle in the proof of
    Lemma~\ref{lem:irreg2_2}.}
\end{figure}

The different isomorphism classes can be represented by the graphs
produced by $(b\,\cc1)$ for $3 \leq b \leq k$. the permutations
$s^n(b\,\cc{1})s^m$ are different for distinct values of $b$, $m$ and
$n$, we may conclude that $|\Irreg{2^1}{2^1}|=k^2(k-2)$.
\end{proof}

\begin{lemma}
  \label{lem:irreg22}
  $\Irreg{\id}{2^2}= \{ s^n(1\, \cc b)(b \, \cc1)s^m,
  s^n(1 \, \cc 1)(b \, \cc b)s^m
  \colon n,m\in [k], 3\leq b \leq k-1  \}$,
  $|\Irreg{\id}{2^2}|=k^2(k-3)$.
\end{lemma}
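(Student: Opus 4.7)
The plan is to adapt the graph-model analysis used in the proofs of Lemmas~\ref{lem:irreg id and 2^1} and~\ref{lem:irreg2_2}. Since $\alpha = \id$, the graph $G_\omega$ contains no undirected chord: every undirected dashed edge must coincide with an undirected solid edge. Since $\beta = 2^2$, the directed subgraph of $G_\omega$ decomposes into exactly two balanced 4-cycles together with length-2 cycles, giving four directed ``chord'' edges (dashed directed edges that do not coincide with a solid directed edge) distributed between the two 4-cycles.

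First I would localize each directed 4-cycle. Using essentially the argument from Lemma~\ref{lem:irreg id and 2^1}, a single balanced directed 4-cycle whose two undirected dashed neighbors must match solid undirected edges is forced to collapse into a local swap on a pair $\{l,\cc l\}$; the configuration that allowed $m \neq l$ in the proof of Lemma~\ref{lem:irreg2_2} is unavailable here because it required the extra undirected chord that $\alpha = \id$ forbids. With two such 4-cycles present, I would then consider how they can interact, sorting configurations into two topological cases.

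\textbf{Case A (uncoupled):} The two 4-cycles are vertex-disjoint and each closes independently as an $(l_i\,\cc{l_i})$-type swap. For the dashed subgraph to remain a single closed cycle with coherent orientation, the positions $l_1$ and $l_2$ must be non-adjacent in the cyclic order on $[k]$. Using the isomorphism $\omega \sim s^n\omega s^m$ from Lemma~\ref{lem:counting_cycles}~(\ref{item:graphs isomorphic}) to normalize $l_1=1$, we obtain $\omega = (1\,\cc 1)(b\,\cc b)$ with $3 \leq b \leq k-1$.

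\textbf{Case B (coupled):} The chords of the two 4-cycles interleave, so that the chord originating near the pair $\{l_1,\cc{l_1}\}$ lands near the pair $\{l_2,\cc{l_2}\}$ and vice versa. Tracing the dashed cycle through both 4-cycles and enforcing balance and coherent orientation, the unique such configuration (after normalizing $l_1 = 1$, $l_2 = b$) is the coupled swap $\omega = (1\,\cc b)(b\,\cc 1)$ with $3 \leq b \leq k-1$.

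The main technical obstacle is exhausting the case analysis to show these are the only possibilities. Fixing one 4-cycle at vertex $1$ by rotational equivalence, I would enumerate where the second 4-cycle can be placed and in each candidate check the three conditions simultaneously: (i) both directed cycles are balanced, (ii) the dashed subgraph closes into a single cycle with consistently oriented directed edges, and (iii) no undirected chord is created. This is the same style of picture-based argument used in Lemma~\ref{lem:irreg2_2}, but more intricate because there are now two directed 4-cycles whose chords can intertwine.

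Finally, for the count: each of the two families contributes $k^2(k-3)$ parameterized expressions, so the apparent total is $2k^2(k-3)$. Within each family, however, there is a 2-to-1 identification $b \leftrightarrow k+2-b$. Direct computation yields the relation
\begin{equation*}
s^{1-b}(1\,\cc 1)(b\,\cc b)s^{b-1} = (1\,\cc 1)(k+2-b\,\cc{k+2-b}),
\end{equation*}
so $s^n(1\,\cc 1)(b\,\cc b)s^m = s^{n+b-1}(1\,\cc 1)(k+2-b\,\cc{k+2-b})s^{m+1-b}$, identifying the tuple $(n,m,b)$ with $(n+b-1,\,m+1-b,\,k+2-b)$; an analogous identity holds for the coupled family. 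Because Cases A and B produce graph models of distinct isomorphism types, the two families are disjoint, and summing gives $|\Irreg{\id}{2^2}| = 2 \cdot k^2(k-3)/2 = k^2(k-3)$.
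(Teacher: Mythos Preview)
Your overall structure and the final enumeration are right, and your verification of the $2$-to-$1$ symmetry in the count is correct and slightly more explicit than the paper's. However, the opening ``localization'' step contains a genuine gap.

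You assert that, since $\alpha=\id$, each balanced directed $4$-cycle is forced to collapse to a local swap at a single pair $\{l,\cc l\}$, citing the argument of Lemma~\ref{lem:irreg id and 2^1}. That argument does not transfer. In Lemma~\ref{lem:irreg id and 2^1} the forcing $m=l$ came from the fact that \emph{only one} directed $4$-cycle was available: after following the undirected edge to $l$ one needed a directed chord leaving $l$, and there were no further directed chords, so the chord had to be one already drawn. Here $\beta=2^2$, so a second pair of directed chords is available, and the $m\neq l$ shape of Figure~\ref{fig:directed four-cycle} \emph{can} be completed---by letting the required chords at $l$ and $\cc m$ form the second $4$-cycle. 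Concretely, for $\omega=(1\,\cc b)(b\,\cc1)$ one computes that the two directed $4$-cycles of $G_\omega$ sit on the vertex sets $\{1,\,b{+}1,\,\cc b,\,\cc k\}$ and $\{2,\,b,\,\cc1,\,\cc{b{-}1}\}$; neither is a local swap. So your premise is false, and your Case~B description (``chords interleave'') is really describing two $m\neq l$--type $4$-cycles, not interacting local swaps.

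The paper's proof avoids this by splitting on whether the \emph{first} irregular $4$-cycle has $m=l$ or $m\neq l$. If $m\neq l$, following the dashed path from $\cc l$ and from $m$ forces two new directed chords at $l$ and $\cc m$, and balance plus single-dashed-cycle leave only the configuration $(m\,\cc l)(l\,\cc m)$, i.e.\ your Case~B. If $m=l$, the first $4$-cycle \emph{is} a local swap, and one then classifies the possible shapes for the second $4$-cycle; only another local swap survives, yielding your Case~A. Your sketch can be repaired by replacing the faulty localization step with this $m=l$ versus $m\neq l$ dichotomy.
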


\begin{proof}
Graphs for the permutations $\omega \in \Irreg{\id}{2^2}$ will have exactly two directed 4-cycles and no undirected chords. Since each $\omega$ is irregular, our previous work shows that at least one of the 4-cycles is of the form shown in Figure \ref{fig:directed four-cycle}. Let us assume for now that we do not have $m=l$, i.e. we do not have the case shown in Figure \ref{fig:tau identity and one equality}. Consider a dashed path going in the forward direction and starting with the directed edge $(\overline{m-1}\to \overline{l})$. Since we have no undirected chords, so the next edge in the path must be $(\overline{l},l)$. As shown in Figure \ref{fig:directed four-cycle with bad path}, we cannot continue on to the directed edge $(\overline{l-1}\to l)$ without creating an unbalanced cycle. Since $m\neq l$, we must have a third directed chord proceeding from $l$ that was not in our original 4-cycle. Similarly, following the dashed path backward from $(m \to l+1)$, we find there must be a fourth directed chord entering $\overline{m}$. This leaves only two possible configurations for what happens to the left of our original 4-cycle, as shown in Figure \ref{fig:almost two directed four-cycles} and Figure \ref{fig:two directed four-cycles}. However, upon attempting to complete the second directed 4-cycle we see the case in Figure \ref{fig:almost two directed four-cycles} would have an unbalanced directed cycle, making Figure  \ref{fig:two directed four-cycles} the only possible configuration, a graph corresponding to the permutation $(m \, \overline{l})(l \, \overline{m})$. Thus the permutations corresponding to graphs of this type are $\{ s^n(1 \, \cc b)(b \, \cc 1)s^m:n,m\in [k], 3\leq b \leq k-1  \}$.

\begin{figure}
\centering
\includegraphics[scale=0.5]{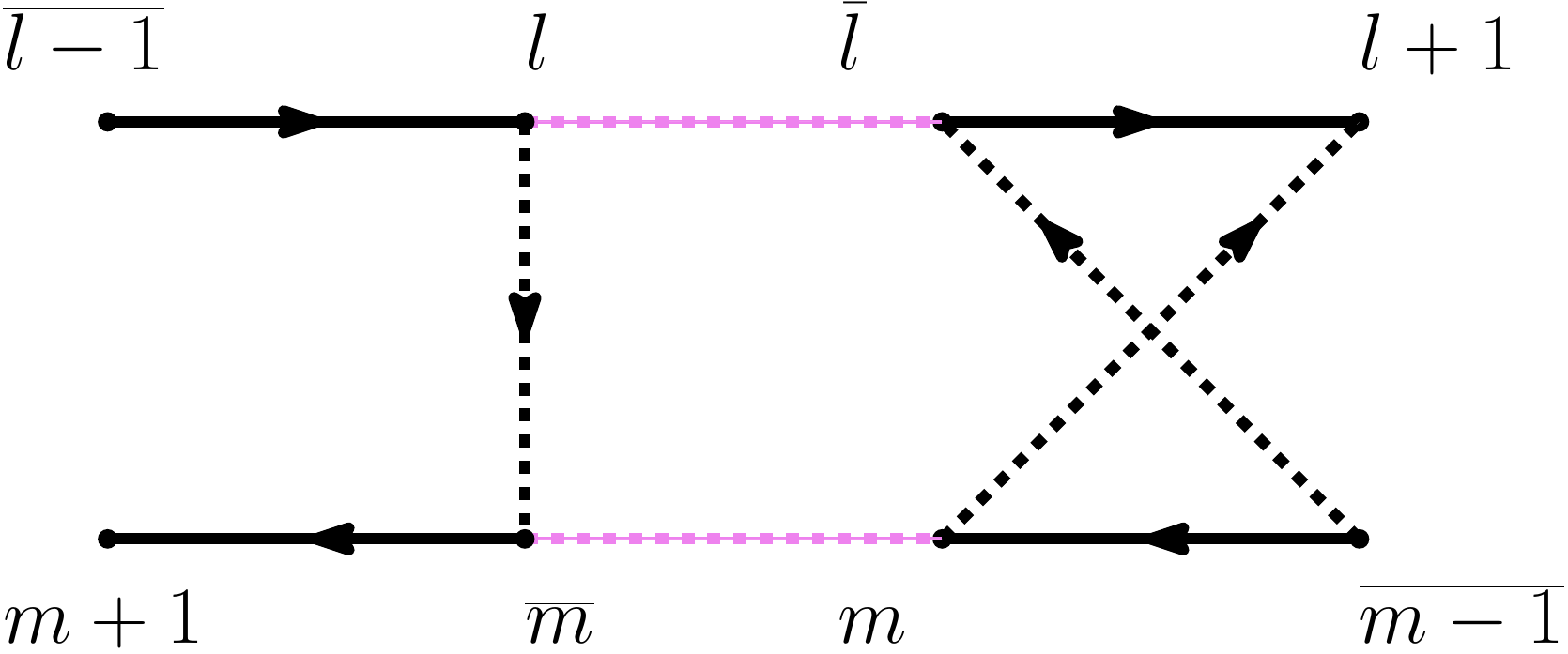}
\caption{Balanced 4-cycle cannot be formed.}
\label{fig:almost two directed four-cycles}
\end{figure}

\begin{figure}
\centering
\includegraphics[scale=0.5]{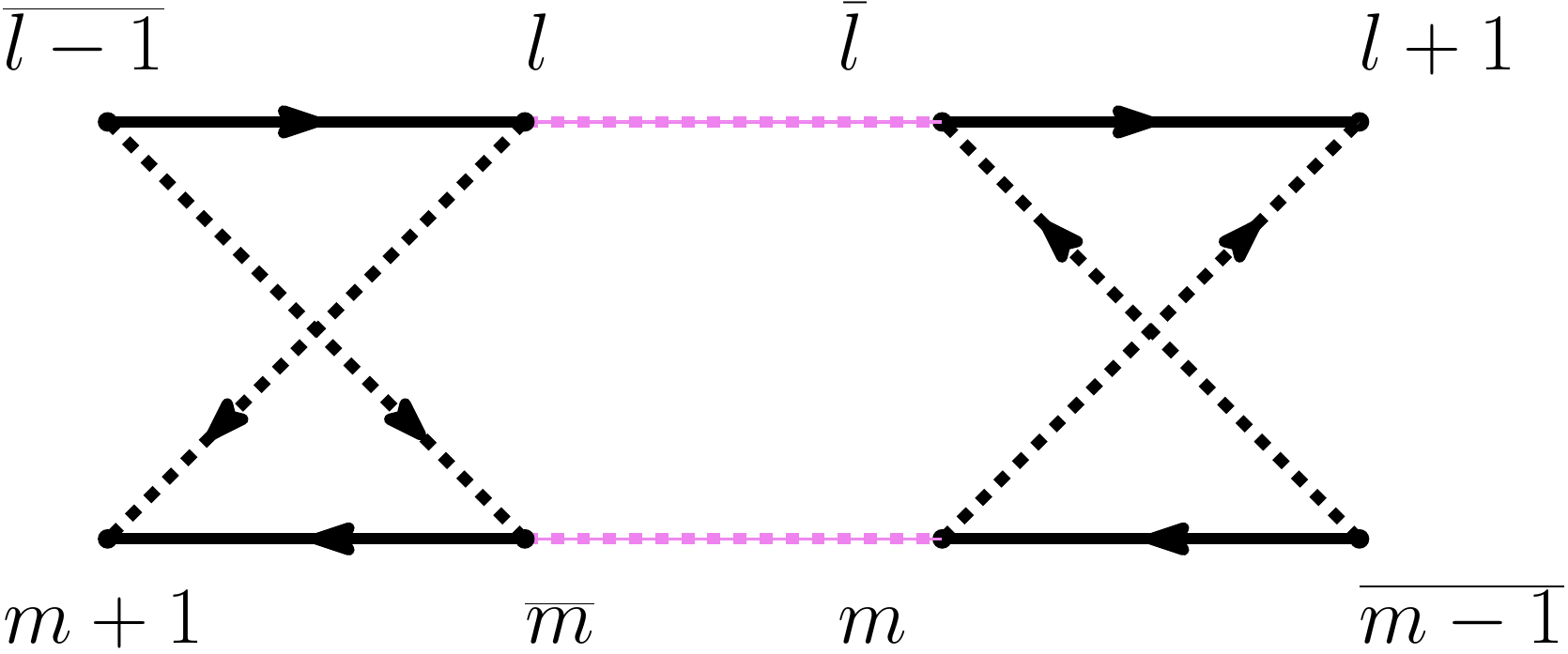}
\caption{A valid configuration for Lemma~\ref{lem:irreg22}}
\label{fig:two directed four-cycles}
\end{figure}

Now consider the case where $m=l$ in our first 4-cycle. Then it would have the form shown in Figure \ref{fig:tau identity and one equality}. The other 4-cycle can only take three forms, as shown in Figure \ref{fig:m=l 3 options}. Notice that in Figure \ref{fig:directed 4-cycle doesn't work 1}, we are unable to complete a closed walk along the dashed edges. In Figure \ref{fig:directed 4-cycle doesn't work 2} it is possible to complete a closed walk on the dashed edges, but not all directed edges will be pointing the same way. (Note that if the directions of the dashed edges in the second 4-cycle are inverted, the same problem occurs.) Thus the only remaining option is \ref{fig:two directed four-cycles 2}, where both 4-cycles have the same form.

\begin{figure}
\centering
\begin{subfigure}[b]{0.45\textwidth}
\centering
\includegraphics[width=2.75in]{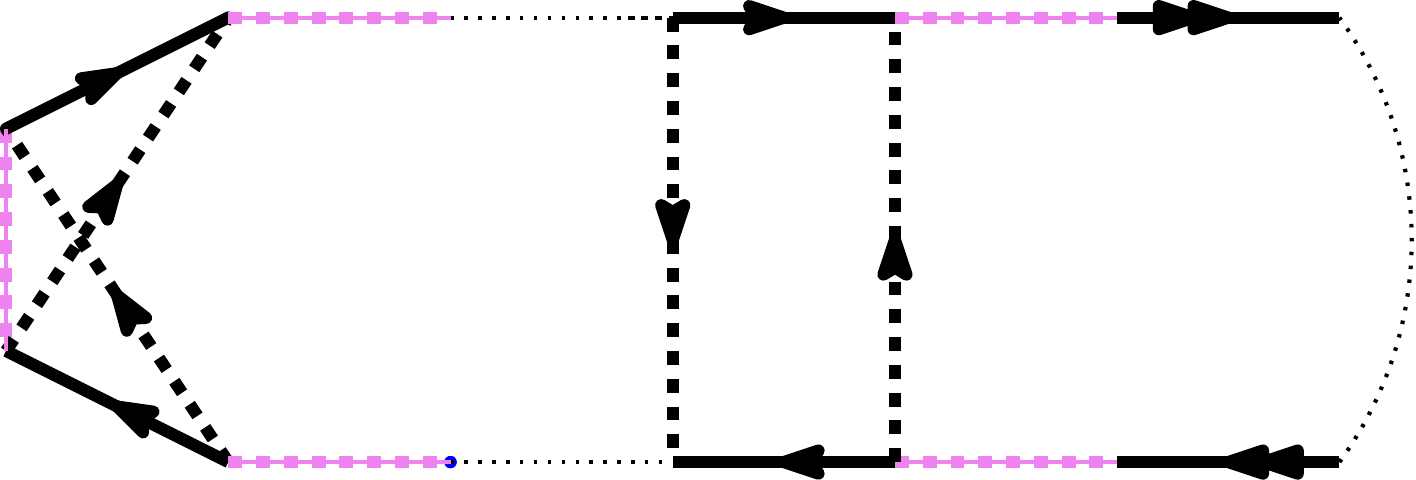}
\caption{disconnected dashed cycle}
\label{fig:directed 4-cycle doesn't work 1}
\end{subfigure}
\hfill
\begin{subfigure}[b]{0.45\textwidth}
\centering
\includegraphics[width=2.75in]{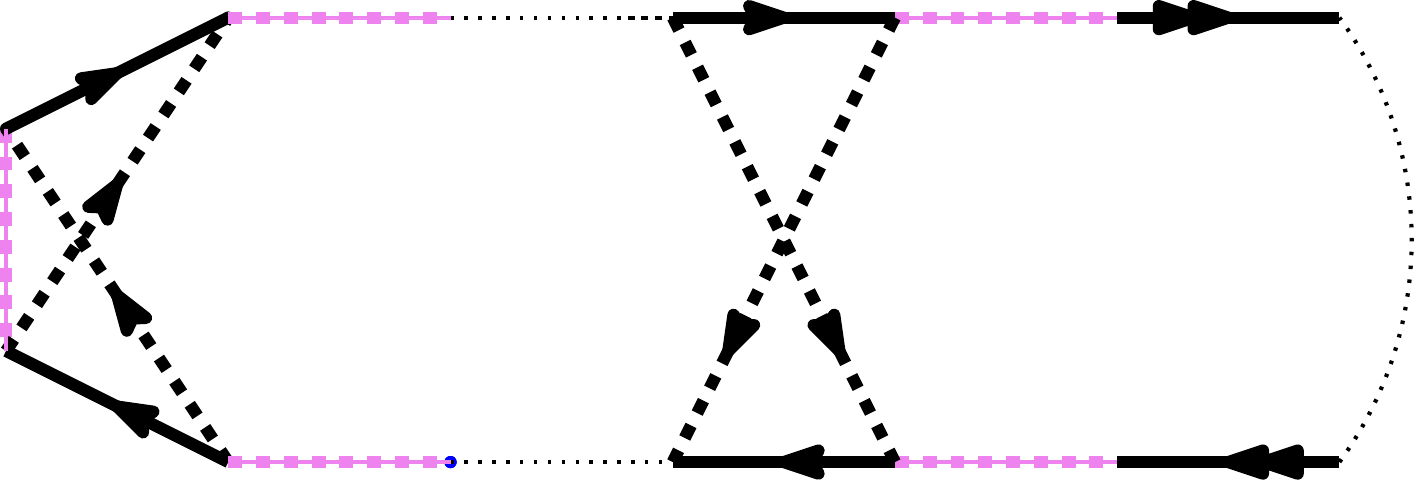}
\caption{wrong directions along dashed cycle}
\label{fig:directed 4-cycle doesn't work 2}
\end{subfigure}
\caption{Invalid configurations for Lemma~\ref{lem:irreg22}.}
\label{fig:m=l 3 options}
\end{figure}

\begin{figure}
\centering
\includegraphics[width=1.75in]{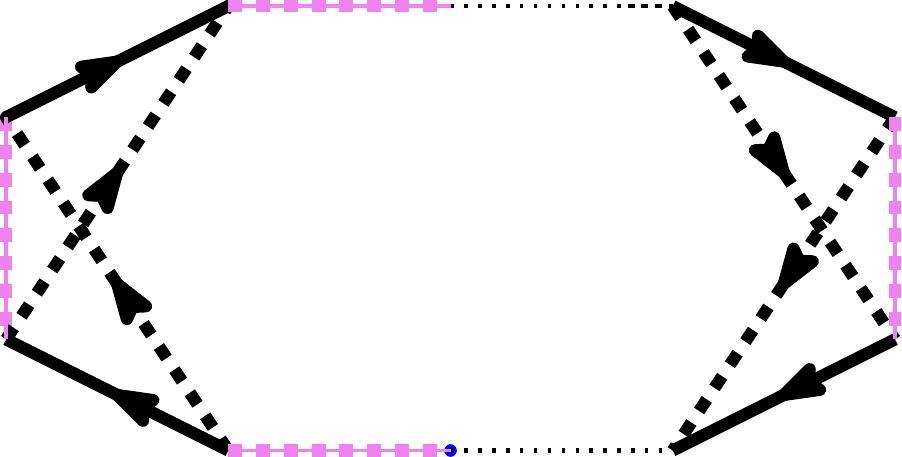}
\caption{Second valid configuration for Lemma~\ref{lem:irreg22}.}
\label{fig:two directed four-cycles 2}
\end{figure}

We already know that the individual cycles can be produced by $(l \, \cc l)$, so the complete set of permutations corresponding to graphs of this form will be $\{s^n(1 \, \cc 1)(b \, \cc b)s^m: n,m\in [k], 3\leq b\leq k -1 \}$.

Counting directly from our set notation, it would appear that
$|\Irreg{\id}{2^2}|=2k^2(k-3)$. However, we would be overcounting due
to symmetry. Notice that the transformation
\begin{align*}
  b &\mapsto k+2-b', \\
  n &\mapsto n'=n+b'-1, \\
  m &\mapsto m'+b'-1
\end{align*}
leaves the permutations $s^n (1 \, \cc b)(b \, \cc1)s^m$ and
$s^n(1 \, \cc 1)(b \, \cc b)s^m$ invariant.  Thus in the calculations
above we are counting each permutation exactly twice, and the actual
cardinality of the set is $k^2(k-3)$.
\end{proof}

\begin{lemma}
  \label{lem:irreg3}
  $\Irreg{\id}{3^1}=\{ s^n(1 \, \cc 1)(2 \, \cc 2)s^m
  \colon n,m \in [k]  \}$,
  $|\Irreg{\id}{3^1}|=k^2$.
\end{lemma}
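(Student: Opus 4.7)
The strategy parallels the graph-theoretic proofs of Lemmas~\ref{lem:irreg id and 2^1}--\ref{lem:irreg22}. With $\alpha = \id$ the graph $G_\omega$ has no undirected chords, while $\beta = 3^1$ means $G_\omega$ contains exactly one directed $6$-cycle (three dashed chords alternating with three solid edges) and all other directed cycles are balanced $2$-cycles. Using irregularity, fix $q, l \in [k]$ with $\omega(q) = \cc{l}$; then $[\omega, T] = \id$ yields $\omega(\cc{q}) = l$, and by the vertex-type argument from Lemmas~\ref{lem:irreg id and 2^1} and~\ref{lem:irreg2_2} the dashed edges at indices $b = q-1$ and $b = q$ are automatically chords $C_1 = \omega(\cc{q-1}) \to \cc{l}$ and $C_2 = l \to \omega(q+1)$. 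Combined with the solid edges $\cc{l} \to l+1$ and $\cc{l-1} \to l$, these place the four vertices $\cc{l-1}, l, \cc{l}, l+1$ on the $6$-cycle.

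Next, I perform a case analysis on the remaining chord endpoints $\omega(\cc{q-1})$ and $\omega(q+1)$. The unbalanced-$2$-cycle obstruction used in Lemma~\ref{lem:irreg2_2} excludes $\omega(\cc{q-1}) = l+1$ and $\omega(q+1) = \cc{l-1}$; and the combination $\omega(\cc{q-1}) = \cc{l-1}$ with $\omega(q+1) = l+1$ closes the directed path after only four edges, landing in $\Irreg{\id}{2^1}$. After relabeling the irregularity if necessary (each admissible $\omega$ has two valid $(q, l)$ choices), I may assume $\omega(\cc{q-1}) = \cc{l-1}$ with $\omega(q+1)$ a genuinely new fifth vertex. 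The third chord $C_3$ is then forced to be incident to $l+1$; closing the $6$-cycle and demanding balance pins down $\omega(q+1) = \cc{l+1}$, $\omega(q+2) = l+2$, and $C_3$ at index $b = q+1$.

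Finally, for every $b \notin \{q-1, q, q+1\}$ the non-chord condition gives $\omega(b+1) = \omega(b) + 1$, which propagates these boundary values to $\omega(q+j) = l+j$ for $2 \le j \le k-1$. Setting $n = l - q$, the resulting permutation is $\omega = s^{n+q-1}(1\,\cc1)(2\,\cc2)s^{-(q-1)}$, so $\omega$ belongs to the claimed family $\{s^n(1\,\cc1)(2\,\cc2)s^m\}_{n, m \in [k]}$. For the count, pairwise distinctness of the $k^2$ expressions follows from the observation that if $s^n(1\,\cc1)(2\,\cc2)s^m = s^{n'}(1\,\cc1)(2\,\cc2)s^{m'}$ then evaluating at some $z \in [k]$ whose image under the relevant powers of $s$ avoids $\{1, \cc1, 2, \cc2\}$ forces $n = n'$ and $m = m'$, since $(1\,\cc1)(2\,\cc2)$ fails to commute with any nontrivial $s^j$. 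The main obstacle is the case analysis bookkeeping, in particular justifying that the apparent alternate scenario with $C_3$ at $b = q-2$ is simply the same $\omega$ viewed via the relabeling $(q, l) \mapsto (q+1, l+1)$, and therefore does not contribute new permutations.
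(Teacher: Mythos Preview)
Your proof follows the same graph-based strategy as the paper's, and once you reach the configuration $\omega(\cc{q-1})=\cc{l-1}$, $\omega(q+1)=\cc{l+1}$, your reconstruction of $\omega$ and the cardinality count are correct. However, your case analysis has a genuine gap.

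After establishing that the four vertices $\cc{l-1}, l, \cc{l}, l+1$ lie on the directed $6$-cycle, you analyze the chord endpoints $\omega(\cc{q-1})$ and $\omega(q+1)$ only against the possibilities $l+1$ and $\cc{l-1}$, then pass via ``relabeling'' to the assumption $\omega(\cc{q-1})=\cc{l-1}$. But you never exclude the possibility that \emph{both} endpoints are new vertices, distinct from $\{\cc{l-1},l,\cc{l},l+1\}$. In that scenario the $6$-cycle would close with a dashed chord $C_3$ joining $l+1$ to $\cc{l-1}$ and a solid edge joining $\omega(\cc{q-1})$ to $\omega(q+1)$. The paper disposes of this explicitly: the four already-determined edges $C_1$, $C_2$, $(\cc{l}\to l+1)$, $(\cc{l-1}\to l)$ all point the same way around the cycle, so whatever directions the remaining two edges carry, the $6$-cycle is unbalanced and $\chi_\omega=0$. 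Your appeal to relabeling does not cover this case, and the parenthetical justification ``each admissible $\omega$ has two valid $(q,l)$ choices'' is precisely the conclusion you are trying to establish, so invoking it here is circular.

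A smaller point: once you are in the case $\omega(\cc{q-1})=\cc{l-1}$, the paper does not deduce $\omega(q+1)=\cc{l+1}$ from balance of the $6$-cycle alone (indeed the $6$-cycle is balanced for other values too). It first splits on whether $\omega(q+1)\in[k]$ or $\omega(q+1)\in[\cc{k}]$, eliminates the former via disconnection of the dashed cycle, and in the latter case uses the direction consistency of the full dashed cycle $C_d$ to force $r=l+1$. Your phrase ``demanding balance pins down $\omega(q+1)=\cc{l+1}$'' compresses this into something that, read literally, is not the correct reason.
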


\begin{proof}
  We are now considering the case of having exactly one directed
  6-cycle and no undirected cycles longer than 2. Since we are using
  irregular permutations $\omega$, we have
  $\omega(q)=\overline{l}$ for some $q,l\in [k]$. By Lemma B.2 we also
  have $\omega(\cc q)=l$. These imply we have a chord going into
  $\overline{l}$ and a chord coming out of $l$, as shown in Figure
  \ref{fig:6-cycle structure beginning}.

\begin{figure}
  \centering
  \begin{subfigure}[b]{0.45\textwidth}
    \centering
    \includegraphics[scale=0.6]{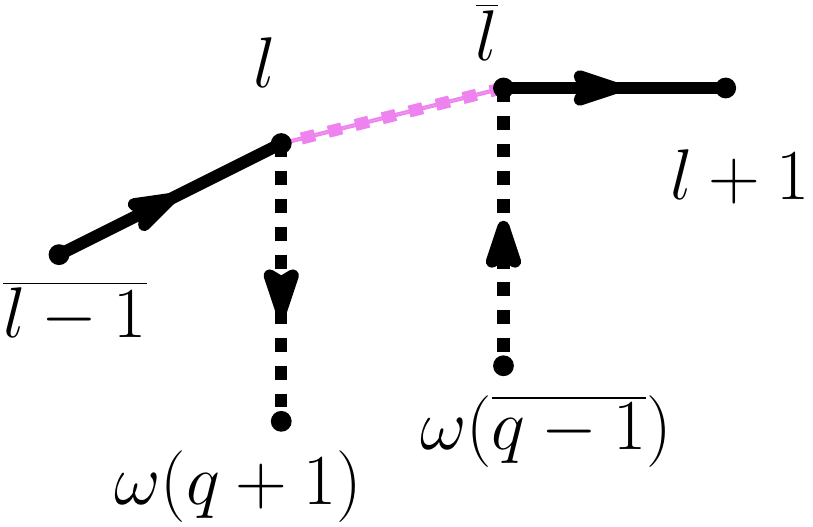}
    \caption{balanced 6-cycle cannot be formed}
    \label{fig:6-cycle structure beginning}
  \end{subfigure}
  \begin{subfigure}[b]{0.45\textwidth}
    \centering
    \includegraphics[scale=0.6]{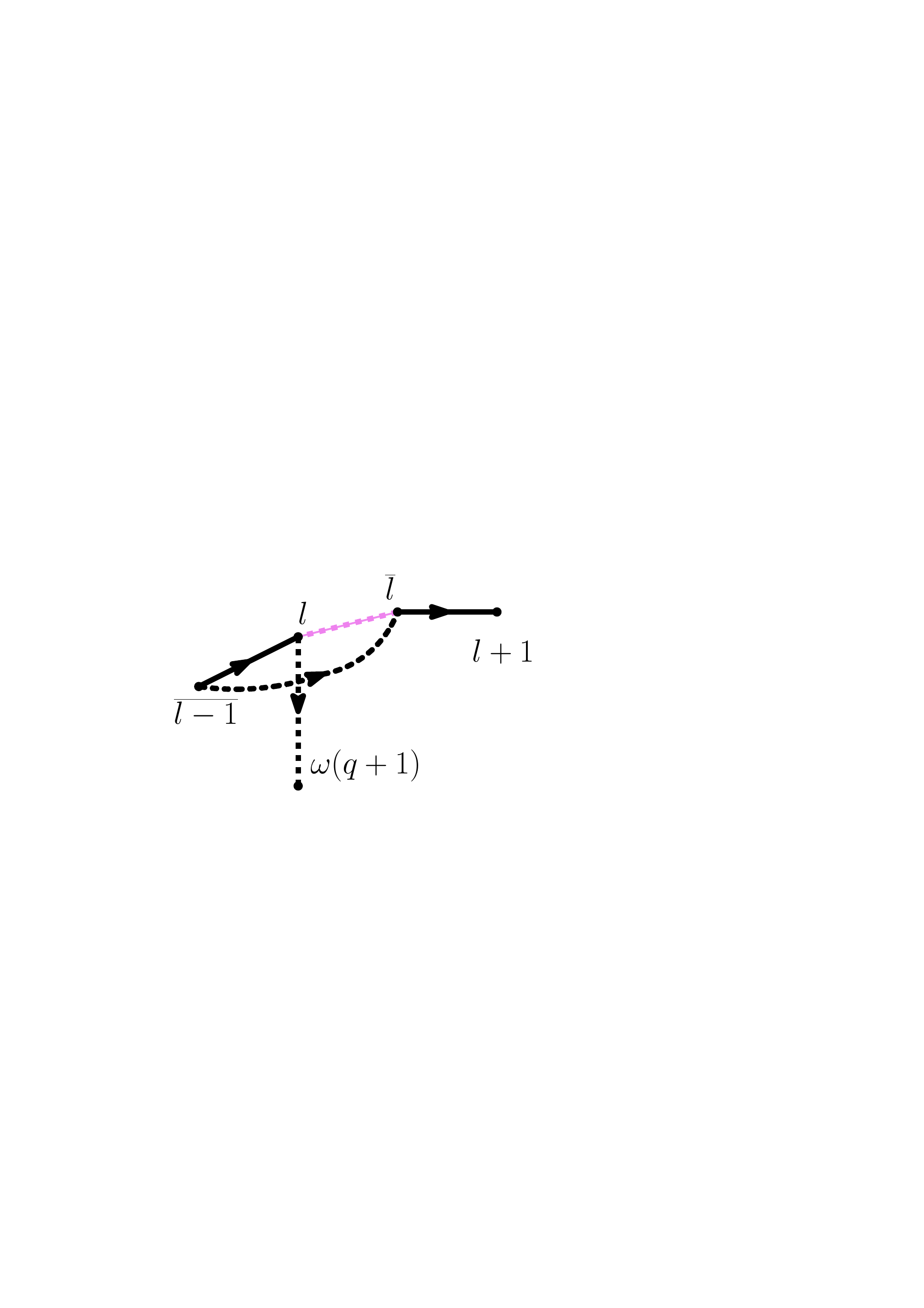}
    \caption{A valid configuration, case (1)}
    \label{fig:6-cycle structure case 1}
  \end{subfigure}
  \caption{A part of balanced 6-cycle.}
\end{figure}

These two chords must belong to our single directed 6-cycle, so we will examine the connection possibilities. Recall that our 6-cycle must alternate dashed and solid directed edges. One connection possibility is that $\omega(q+1)$ is connected to $\omega(\overline{q-1})$ by a single solid edge. However, no matter what direction we choose for this edge, we will have four edges pointing in the same direction with only two edges left with which to balance the directed cycle. We conclude this connection type is impossible. The same problem occurs if we attempt to connect $\overline{l-1}$ and $l+1$ with a dashed edge. Therefore, we are left with two cases: (1) $\overline{l-1}=\omega(\overline{q-1})$ or (2)  $\omega(q+1)=l+1$. The proofs for these cases are very similar, so we will only discuss the proof for (1) here. Observe our chosen connection for this case in Figure \ref{fig:6-cycle structure case 1}.

To finish the 6-cycle we need two more directed edges connecting $\omega(q+1)$ and $l+1$, one dashed and one solid. Now $\omega(q+1)$ must be either $r$ or $\overline{r}$ for some $r\in [k]$. 

The diagrams representing the two possibilities are as follows: Figure \ref{fig:w(q+1)=r} for $\omega(q+1)=r$ and Figure \ref{fig:w(q+1)=rbar} for  $\omega(q+1)=\overline{r}$. Notice that in the first case, the only way to finish the directed 6-cycle is with the dashed edge $(\cc {r-1} \to l+1)$. However, this creates two disjoint dashed paths, so this will not work.

 In the second case, the only way to complete the directed 6-cycle is with the dashed edge $(l+1\to r+1)$. Now consider the dashed path on this graph. We already have the maximum number of chords allowed. Since any dashed directed edges between $r$ and $l+1$ must coincide with solid directed edges, the only way to ensure all the directed edges on the dashed path point in the same direction is to eliminate the possibility of such edges and make $r=l+1$. This gives us the configuration shown in Figure \ref{fig:tau identity 6-cycle}. Note that this structure is produced by $\omega=(l \, \overline{l})$$(l+1 \hspace{2mm} \overline{l+1})$. Hence the set of all permutations that produce this structure will be $\{s^n(1 \, \cc 1)(2 \, \cc 2)s^m$ for $n,m\in [k]\}$. Since there is no rotational symmetry, we may conclude that the cardinality of the set is $k^2$.

\begin{figure}
\centering
\begin{subfigure}[b]{0.45\textwidth}
\centering
\includegraphics[width=2.5in]{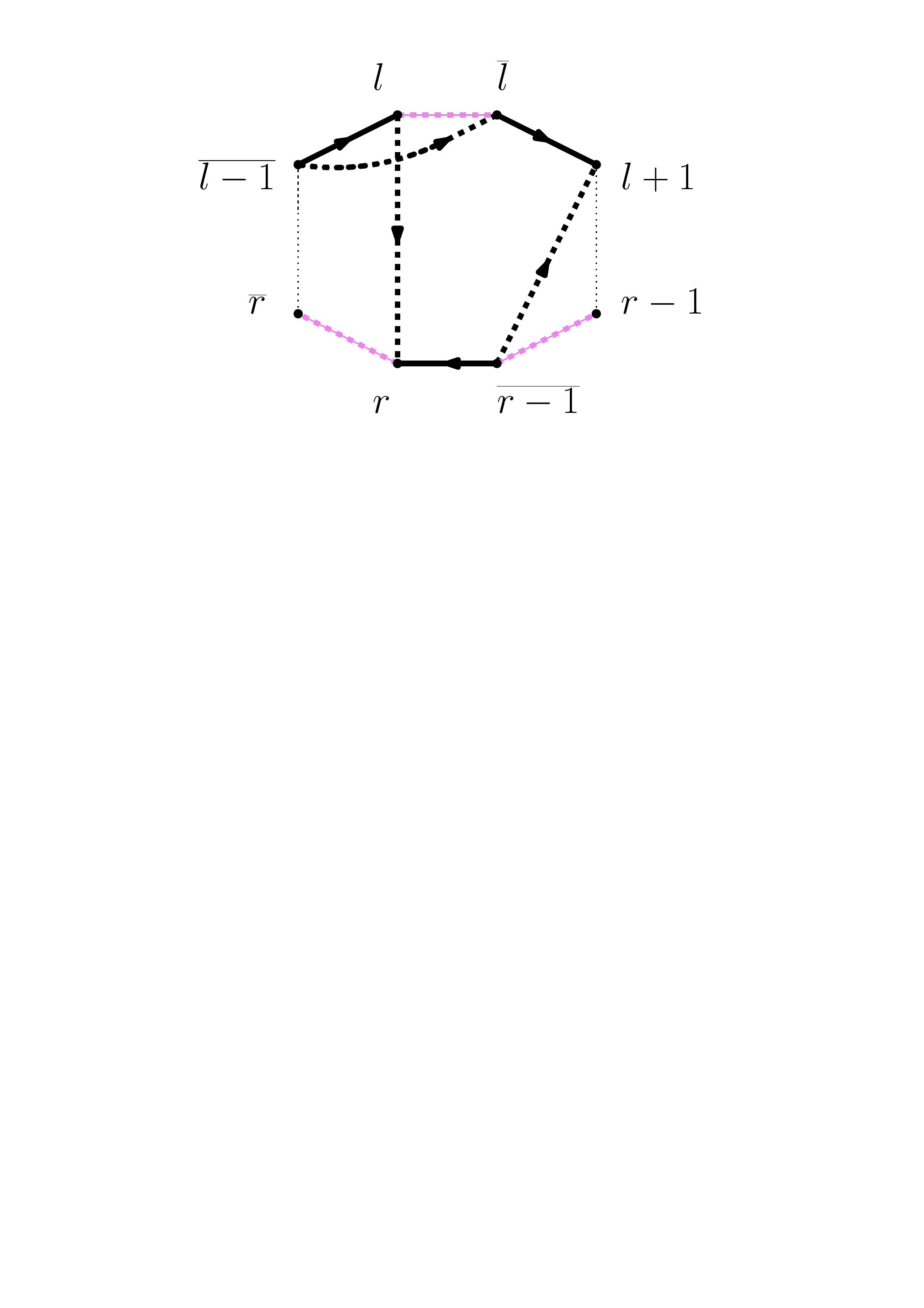}
\caption{disconnected dashed cycled}
\label{fig:w(q+1)=r}
\end{subfigure}
\hfill
\begin{subfigure}[b]{0.45\textwidth}
\centering
\includegraphics[width=2.5in]{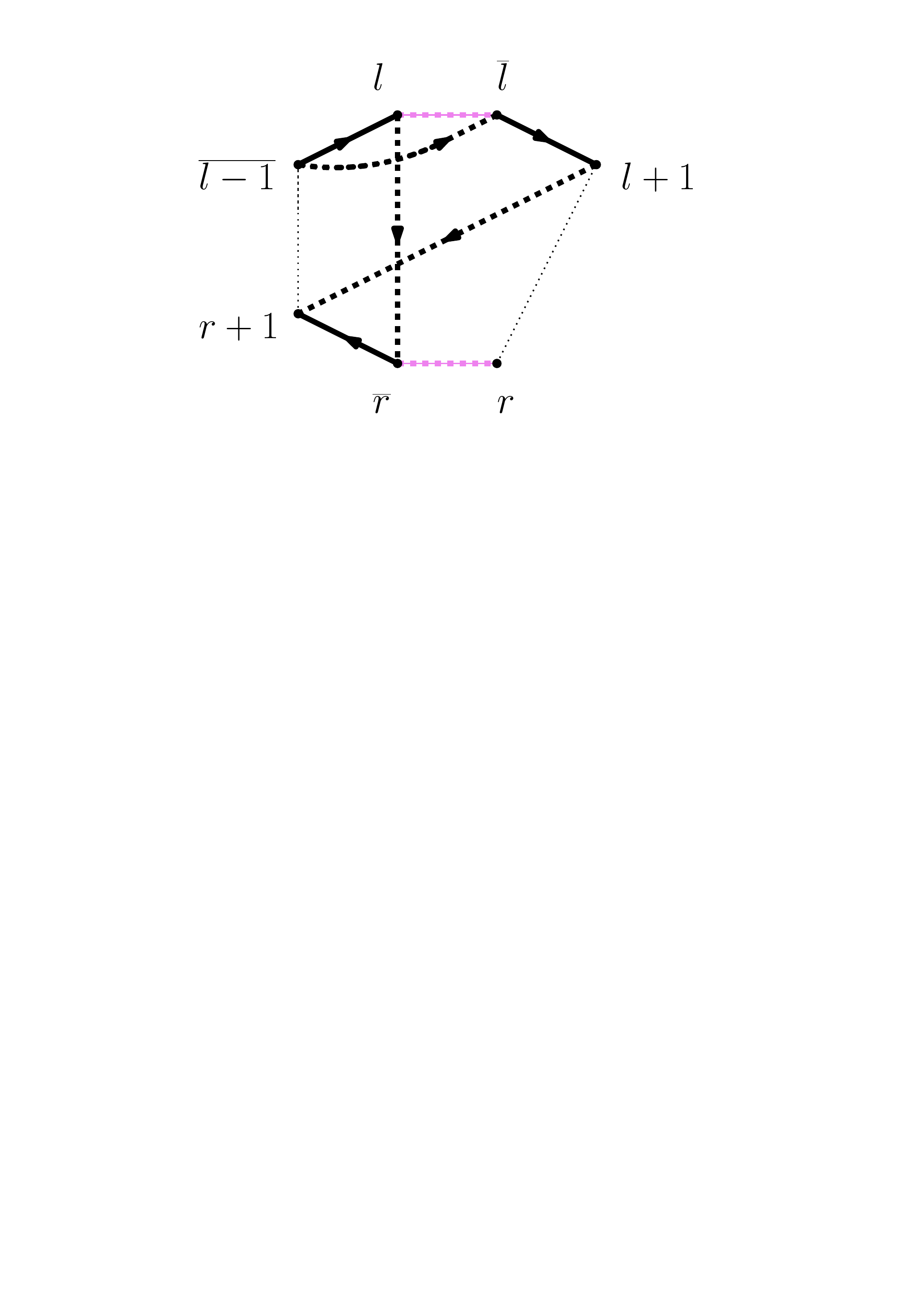}
\caption{arc $\cc{r}$ to $l+1$ must be short to prevent unbalanced cycles}
\label{fig:w(q+1)=rbar}
\end{subfigure}
\caption{Completing the 6-cycle.}
\end{figure}

\begin{figure}
\centering
\includegraphics[scale=0.6]{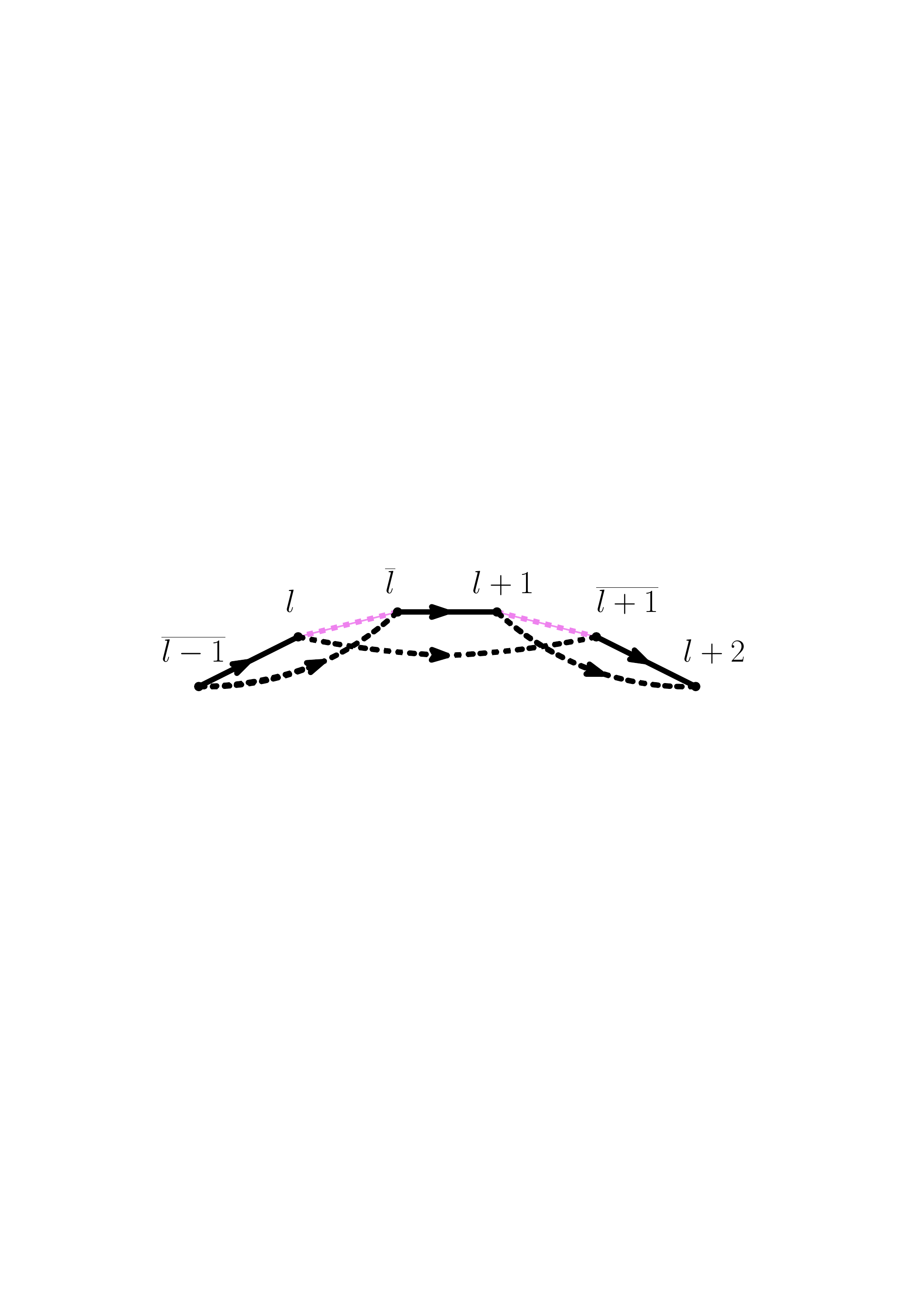}
\caption{The valid configuration for Lemma~\ref{lem:irreg3}.}
\label{fig:tau identity 6-cycle}
\end{figure}

\end{proof}

\bibliographystyle{myalpha}
\bibliography{twisted}

\end{document}